\documentclass[a4paper,11pt]{article}

\usepackage{a4wide}
\usepackage{hyperref}
\hypersetup{colorlinks=true,citecolor=blue,linkcolor=blue,urlcolor=blue}
\usepackage{amsmath}
\usepackage{amssymb}
\usepackage{amsthm}   
\usepackage{enumerate}
\usepackage{thm-restate}
\usepackage{tikz}
\usepackage{tikz-cd}
\usepackage{svg}

\theoremstyle{plain}
\newtheorem{theorem}{Theorem}
\newtheorem*{theorem*}{Theorem}
\newtheorem{corollary}[theorem]{Corollary}
\newtheorem{lemma}[theorem]{Lemma}
\newtheorem{observation}[theorem]{Observation}
\newtheorem{prop}[theorem]{Proposition}

\theoremstyle{definition}
\newtheorem{definition}[theorem]{Definition}

\usepackage[UKenglish]{babel}
\usepackage[UKenglish]{isodate}
\usepackage{url}
\usepackage{soul}

\let\epsilon=\varepsilon
\let\phi=\varphi

\newcommand{\Hom}{\ensuremath{\mathrm{Hom}}}
\newcommand{\Ans}{\ensuremath{\mathrm{Ans}}}
\newcommand{\Aut}{\ensuremath{\mathrm{Aut}}}

\newcommand{\Ctw}{\ensuremath{\mathsf{ew}}}
\newcommand{\Tw}{\ensuremath{\mathsf{tw}}}
\newcommand{\tw}{\Tw}
\newcommand{\ctw}{\Ctw}
\newcommand{\etw}{\Ctw}
\newcommand{\sew}{\ensuremath{\mathsf{sew}}}
\newcommand{\hsew}{\ensuremath{\mathsf{hsew}}}

\newcommand{\calB}{\ensuremath{\mathcal{B}}}
\newcommand{\calT}{\ensuremath{\mathcal{T}}}
\newcommand{\calN}{\ensuremath{\mathcal{N}}}

\newcommand{\Ext}{\ensuremath{\mathrm{Ext}}}

\def\newgraph{\mathcal{G}}
\def\newcol{\mathcal{C}}  
\def\vv{\Vec{v}}
\def\vz{\Vec{z}}
\def\calE{\mathcal{E}}
\def\MD{M_{Y}}

\def\M12{M_{X\mathrm{D}}}
\def\hatM12{\hat{M}_{X\mathrm{D}}}

\newcommand{\cpHom}{\mathsf{cpHom}}
\newcommand{\cpAns}{\mathsf{cpAns}}

\def\bij{\mathsf{Bij}}

\date{11 March 2024}

\title{\vspace{-2.5cm}The Weisfeiler-Leman Dimension of Conjunctive Queries\footnote{Andreas Göbel was funded by the project PAGES (project No. 467516565) of the German Research Foundation (DFG). For the purpose of Open Access, the authors have applied a CC BY public copyright licence to any Author Accepted Manuscript version arising from this submission. All data is provided in full in the results section of this paper.}}

\author{
Andreas G\"obel \\ Hasso Plattner Institute \\ University of Potsdam\\ Germany \and
Leslie Ann Goldberg \\ Department of Computer Science\\ University of Oxford\\ United Kingdom
\and
Marc Roth \\ School of Electronic Engineering and Computer Science \\ Queen Mary University of London\\ United Kingdom 
}
\begin{document}
\maketitle

\begin{abstract}
A graph parameter is a function~$f$ on graphs 
with the property that, for any pair of isomorphic graphs~$G_1$ and~$G_2$, $f(G_1)=f(G_2)$.
The Weisfeiler--Leman (WL) dimension of~$f$ is the minimum $k$ such that, 
if $G_1$ and $G_2$ are   indistinguishable by the $k$-dimensional WL-algorithm 
then $f(G_1)=f(G_2)$. The
WL-dimension of~$f$ is  $\infty$ if no such $k$ exists. We study the WL-dimension of 
graph parameters characterised by the number of 
answers from a fixed conjunctive query to the graph. Given a conjunctive query $\varphi$, 
we quantify
the WL-dimension of the function that maps every graph~$G$ to the number of answers of $\varphi$ in $G$.
    
The works of Dvor{\'{a}}k (J.\ Graph Theory 2010), Dell, Grohe, and Rattan (ICALP 2018), and Neuen (ArXiv 2023) have answered this question for \emph{full} conjunctive queries, which are conjunctive queries without existentially quantified variables. For such queries $\varphi$, the WL-dimension is equal to the treewidth of the Gaifman graph of $\varphi$.

In this work, we give a characterisation that applies to all conjunctive qureies. Given any conjunctive query $\varphi$, we prove that its WL-dimension is equal to the \emph{semantic extension width} $\sew(\varphi)$, a novel width measure that can be thought of as a combination of the treewidth of $\varphi$ and its \emph{quantified star size}, an invariant introduced by Durand and Mengel (ICDT 2013) describing how the existentially quantified variables of $\varphi$ are connected with the free variables. Using the recently established equivalence between the WL-algorithm and higher-order Graph Neural Networks (GNNs) due to Morris et al.{} (AAAI 2019), we obtain as a consequence that the function counting answers to a conjunctive query $\varphi$ cannot be computed by GNNs of order smaller than $\sew(\varphi)$.

The majority of the paper is concerned with establishing a lower bound of the WL-dimension of a query. Given any conjunctive query $\varphi$ with semantic extension width $k$, we consider a graph $F$ of treewidth $k$ obtained from the Gaifman graph of $\varphi$ by repeatedly cloning the vertices corresponding to existentially quantified variables. Using a recent modification 
due to Roberson (ArXiv 2022) 
of the Cai-Fürer-Immerman construction (Combinatorica 1992), we then obtain a pair of graphs $\chi(F)$ and $\hat{\chi}(F)$ that are indistinguishable by the $(k-1)$-dimensional WL-algorithm since $F$ has treewidth $k$. Finally, in the technical heart of the paper, we show that $\varphi$ has a different number of answers in $\chi(F)$ and $\hat{\chi}(F)$. Thus, $\phi$ can distinguish two graphs that cannot be distinguished by the $(k-1)$-dimensional WL-algorithm, so the WL-dimension of $\phi$ is at least~$k$.
 
\end{abstract}

\section{Introduction}
The Weisfeiler--Leman (WL) algorithm~\cite{WL68} and its higher dimensional generalisations~\cite{CaiFI92} are amongst the most well-studied heuristics for graph isomorphism. 
This algorithm works as follows.
For each positive integer $k$, the $k$-dimensional WL-algorithm iteratively maps $k$-tuples of vertices of a graph to multisets of colours. 
Two graphs $G$ and $G'$ are said to be $k$-WL-equivalent, denoted $G\cong_k G'$, if this algorithm returns the same vertex colouring 
for $G$ and $G'$, up to consistently renaming the colours. For the specific case $k=1$ the WL-algorithm is equivalent to the colour-refinement algorithm, which  is  a widely used and efficiently-implementable heuristic for graph isomorphism (see e.g.\ \cite{Arvind16,GroheKMS21}).

In addition to applications to graph isomorphism, recent works have shown that the expressiveness of Graph Neural Networks (GNNs) and their higher order generalisations is precisely characterised by the WL-algorithm~\cite{Morrisetal19,XuHLJ19}. This result has sparked a flurry of research with the objective of determining which graph parameters  are invariant on graphs that are indistinguishable by the WL-algorithm~\cite{MorrisRM20,ChenCVB20,BarceloGRR21,ArvindFKV22,neuen2023homomorphism,LanzingerB23,BouritsasFZB23}. We refer the reader to the survey by Grohe~\cite{Grohe21} for further reading.

Over the years, surprising alternative characterisations of $k$-WL-equivalence have been established.

\begin{itemize}
\item[(I)] $G\cong_1 G'$ if and only if $G$ and $G'$ are fractionally isomorphic~\cite{Tinhofer86,Tinhofer91}.
\item[(II)] For each positive integer $k$, $G \cong_k G'$ if and only if there is no first-order formula with counting quantifiers that uses at most $k+1$ variables and that can distinguish~$G$ and~$G'$~\cite{ImmermanL90,CaiFI92}.
\item[(III)] For each positive integer $k$, $G \cong_k G'$ if and only if, for each graph $H$ of treewidth at most~$k$, the number of graph homomorphisms from $H$ to $G$ 
is equal to the number of graph homomorphisms from $H$ to $G'$~\cite{Dvorak10,DellGR18}. 
This is the characterisation of $k$-WL-equivalence
that will be used in this work (see Definition~\ref{def:wl_equivalence}).
\end{itemize}

The characterisation in (III) has ignited  interest in studying the \emph{WL-dimension} of counting graph homomorphisms and 
of counting related patterns~\cite{ChenCVB20,ArvindFKV22,neuen2023homomorphism,LanzingerB23,BouritsasFZB23}.

A graph parameter $f$ is a function from
graphs 
 that is invariant under isomorphisms.  
The WL-dimension of 
a graph parameter
$f$ is the minimum positive integer $k$ such that $f$ cannot distinguish $k$-WL-equivalent graphs (see Definition~\ref{def:WLdim}).
Building upon the works of Dvor{\'{a}}k~\cite{Dvorak10}, Dell, Grohe and Rattan~\cite{DellGR18}, Roberson~\cite{roberson2022oddomorphisms}, and Seppelt~\cite{Seppelt23}, it has very recently been shown by Neuen~\cite{neuen2023homomorphism} that the WL-dimension of
the graph parameter that
counts homomorphisms from a fixed graph $H$ is exactly the treewidth of $H$. It is well-known that counting homomorphisms is equivalent to counting answers to conjunctive queries without existentially quantified variables (see e.g.\ \cite{PichlerS13}); such conjunctive queries are also called \emph{full} conjunctive queries.
In this work, we 
consider all conjunctive queries including those that have 
existentially quantified variables and we answer the fundamental question:  {\it What is the WL-dimension of 
the graph parameter that
counts answers to fixed conjunctive queries?}

\noindent To state our results, we first introduce some central concepts.

\subsection{Conjunctive Queries and Semantic Extension Width}
A \emph{conjunctive query} $\varphi$ consists of a set of free variables $X=\{x_1,\dots,x_k\}$ and a set of (existentially) quantified variables $Y=\{y_1,\dots,y_\ell\}$, and is of the form
\[ \varphi(x_1,\dots,x_k) = \exists y_1,\dots, y_\ell \colon A_1 \wedge \dots \wedge A_m,\] 
such that each $A_i$ is an atom $R(\Vec{z})$ where $R$ is a relation symbol and $\Vec{z}$ is a vector of variables in $X \cup Y$. Since we focus in this work on undirected graphs without self-loops, 
in our setting there is only one binary relation symbol $E$, so all atoms are of the form $E(z_1,z_2)$. An \emph{answer} to $\varphi$ in a graph $G$ is an assignment $a$ from the free variables $X$ to $V(G)$ such 
there is an assignment $h$ from $X\cup Y$ to $V(G)$ which agrees with $a$ on~$X$ and has the property that,  for each atom $E(z_1,z_2)$, the image $\{h(z_1),h(z_2)\}$ is an edge of $G$.

As is common in the literature (see e.g.\ \cite{PichlerS13,ChenM15,ChenM16,DellRW19}), we can equivalently express the answers of $\varphi$ in a graph $G$ as partial homomorphisms to $G$. Let $H$ be the graph with vertex set $X \cup Y$  that has as edges the pairs of variables in $X \cup Y$ that occur in a common atom. Then the answers of $\varphi$ in $G$ are the mappings $a\colon X \to V(G)$ that can be extended to a homomorphism from $H$ to $G$. For this reason, following the notation of~\cite{DellRW19}, we will from now an refer to a conjunctive query as a pair $(H,X)$ where $H$ is a graph and $X$ is a subset of vertices of $H$ corresponding to the free variables. 
We will say that $(H,X)$ is \emph{connected} if $H$ is a connected graph.
We will write $\Ans((H,X),G)$ for the set of answers of $(H,X)$ in $G$; this is made formal in Section~\ref{sec:cqs}. 
The WL-dimension of a conjunctive query $(H,X)$ is the WL-dimension of the graph parameter that maps every graph $G$
to   $|\Ans((H,X),G)|$.

\paragraph*{Semantic Extension Width}
Let $(H,X)$ be a conjunctive query and let $Y=V(H)\setminus X$. The graph $\Gamma(H,X)$  is obtained from $H$ by adding an edge between  each  pair of vertices $u\neq v$ in $X$ if and only if there is a connected component in $H[Y]$ that is adjacent to both $u$ and $v$. We then define the \emph{extension width} of $(H,X)$ as the treewidth of $\Gamma(H,X)$; the definition of treewidth can be found in Section~\ref{sec:width}.

The \emph{semantic extension width} of a conjunctive query $(H,X)$, denoted by $\sew(H,X)$ is then the minimum extension width of any conjunctive query $(H',X')$ that is \emph{counting equivalent} to $(H,X)$, i.e., 
any conjunctive query $(H',X')$ such that, for every graph $G$,
$|\Ans((H,X),G)|=|\Ans((H',X'),G)|$. A discussion of counting equivalence and \emph{counting minimal} conjunctive queries can be found in Section~\ref{sec:cqs}.

Before stating our main result, we provide an example of a conjunctive query and its semantic extension width: Let $(S_k,X_k)$ be the $k$-\emph{star query}: $X_k=\{x_1,\dots,x_k\}$ and $S_k$ has vertices $X_k \cup \{y\}$ and edges $\{x_i,y\}$ for all $i\in[k]$. Note that the answers of $(S_k,X_k)$ in a graph $G$ are precisely the assignments from $X_k$ to $V(G)$ 
such that the vertices all of the images of vertices in~$X_k$ have a common neighbour. The $k$-star query is acyclic (i.e., $S_k$ has treewidth $1$) and it has played an important role as a base case for complexity classifications regarding counting answers to conjunctive queries~\cite{ChenM15,DellRW19}. The graph $\Gamma(S_k,X_k)$ is the $(k+1)$-clique which has treewidth $k$. Since it is also minimal with respect to counting equivalence, we have $\sew(S_k,X_k)=k$.

\subsection{Our Contributions}
We  now state our main result.
\begin{restatable}{theorem}{mthm}\label{thm:main_thm}
Let $(H,X)$ be a connected conjunctive query with $X\neq \emptyset$. Then the WL-dimension of $(H,X)$ is equal to its semantic extension width $\sew(H,X)$.
 \end{restatable}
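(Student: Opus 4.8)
The plan is to prove the two inequalities $\mathrm{WL\text{-}dim}(H,X) \le \sew(H,X)$ and $\mathrm{WL\text{-}dim}(H,X) \ge \sew(H,X)$ separately, with the lower bound being by far the harder direction (as the abstract already advertises).

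\medskip

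\textbf{The upper bound.} Set $k = \sew(H,X)$ and let $(H',X')$ be a counting-equivalent query achieving extension width $k$, i.e.\ $\tw(\Gamma(H',X')) = k$. By counting equivalence it suffices to show that $G \mapsto |\Ans((H',X'),G)|$ is invariant under $\cong_k$. The idea is to express $|\Ans((H',X'),G)|$ as a finite $\mathbb{Q}$-linear combination of homomorphism counts $\hom(J,G)$ where each pattern $J$ has treewidth at most $k$, and then invoke characterisation (III). Concretely, I would write the number of \emph{surjective} answers (answers whose extension to $H'$ is injective on, or uses a fixed partition of, the relevant vertices) via inclusion--exclusion over quotients, reducing to counting partial homomorphisms; counting partial homomorphisms $a\colon X' \to V(G)$ extendable to $H'$ decomposes, over the connected components of $H'[Y']$, into products, and each factor is itself a homomorphism-count-type quantity. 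The key structural point is that after cloning/contracting to handle the existential components, the relevant patterns embed into $\Gamma(H',X')$-like graphs and hence have treewidth $\le k$. This is the same mechanism that appears in the full-query case of Dvořák/Dell--Grohe--Rattan, adapted to the definition of $\Gamma(H,X)$, and I expect it to be essentially bookkeeping.

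\medskip

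\textbf{The lower bound.} This is where the real work is, and I would follow the roadmap sketched in the abstract. Let $k = \sew(H,X)$. First, I would argue we may assume $(H,X)$ is \emph{counting minimal}: otherwise replace it by a counting-equivalent query of the same semantic extension width that is minimal, which only makes the combinatorics cleaner. Next, build the graph $F$ of treewidth exactly $k$ by repeatedly cloning the vertices of $H$ corresponding to existentially quantified variables in $Y$ — the point of cloning is that it does not change which graphs can be ``covered'' by $H[Y]$-components but it inflates treewidth up to the extension width, and by minimality up to $\sew(H,X) = k$. Then apply Roberson's variant of the CFI construction to $F$ to obtain graphs $\chi(F)$ and $\hat\chi(F)$ with $\chi(F) \cong_{k-1} \hat\chi(F)$ (this uses $\tw(F) = k$ and is a black box from \cite{roberson2022oddomorphisms}). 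The technical heart is to show $|\Ans((H,X),\chi(F))| \ne |\Ans((H,X),\hat\chi(F))|$. My approach would be: express $|\Ans((H,X),\cdot)|$ on these CFI-type graphs combinatorially, stratify answers by the ``pattern of twists'' they pick up, show almost all strata contribute equally to $\chi(F)$ and $\hat\chi(F)$ by a local flip/automorphism argument on the gadgets, and isolate a single surviving term — governed by the odd/parity structure Roberson's construction is designed to detect — that differs between the two graphs. The cloning of $Y$-vertices is exactly what guarantees this surviving term is nonzero: it produces enough ``copies'' of the existential components to witness the CFI twist at treewidth $k$ rather than at the (possibly smaller) plain treewidth of $H$.

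\medskip

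\textbf{Main obstacle.} The crux — and the part I'd budget most of the paper for — is the parity/counting argument showing the two CFI graphs get different answer counts. Unlike homomorphism counts, answer counts involve an existential projection, so one cannot directly quote Roberson's homomorphism-distinguishing lemma; one must control how the projection $h \mapsto h|_X$ interacts with the gadget twists, in particular ensuring that the fibers of this projection have sizes that are ``the same mod the relevant modulus'' across the two graphs except for a controlled correction term. Handling disconnected $H[Y]$, multiple $Y$-components attached to the same clique in $\Gamma(H,X)$, and the interplay between cloning and the bag structure of an optimal tree decomposition of $F$ are the places where I expect the argument to get delicate, and where counting minimality of $(H,X)$ will be used repeatedly to rule out degenerate cancellations.
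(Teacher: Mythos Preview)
Your overall architecture for the lower bound is right, but both directions have real gaps relative to what the paper actually does.

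\textbf{Upper bound.} Your plan to write $|\Ans((H',X'),G)|$ as a \emph{finite} $\mathbb{Q}$-linear combination of homomorphism counts $|\Hom(J,G)|$ is not justified, and in fact is the wrong mechanism here: because of the existential projection, answer counts are not in general finite linear combinations of hom counts (your inclusion--exclusion over quotients handles injectivity on $X$ but does nothing about the ``$\exists$ extension'' part, which is what makes this hard). The paper instead uses the $\ell$-copy $F_\ell(H,X)$: one has $|\Hom(F_\ell(H,X),G)| = \sum_{a:X\to V(G)} |\Ext(a)|^\ell$, and a Vandermonde/interpolation argument shows that knowing these numbers for all $\ell$ determines $|\{a : |\Ext(a)|>0\}| = |\Ans((H,X),G)|$. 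Since $\tw(F_\ell(H,X)) \le \ctw(H,X)$ for every $\ell$ (this is where the definition of $\Gamma(H,X)$ is used), $k$-WL-equivalence pins down all of these hom counts simultaneously, and hence $|\Ans|$. No finite linear combination is needed or claimed.

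\textbf{Lower bound.} You correctly reduce to counting-minimal $(H,X)$, build $F=F_\ell(H,X)$ of treewidth $k$, and reach for $\chi(F)$ versus $\hat\chi(F)$. The gap is that the paper does \emph{not} show $|\Ans((H,X),\chi(F,\emptyset))| \neq |\Ans((H,X),\chi(F,\{x_1\}))|$ directly --- non-colour-respecting answers could swamp any difference. Instead there is a second cloning-plus-interpolation layer: clone the colour classes of $x_1,\dots,x_k$ inside the CFI graphs with multiplicities $\vec z$, observe this preserves $(k{-}1)$-WL-equivalence (via Roberson's inequality termwise), and use multivariate polynomial interpolation in $\vec z$ to isolate the contribution of answers whose restriction to $X$ is a \emph{bijection} onto $X$. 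Then automorphisms of $(H,X)$ reduce to the identity colouring, and counting minimality (via the Dell--Roth--Wellnitz lemma that any endomorphism of $H$ bijective on $X$ is an automorphism) turns $\Ans^{\mathsf{id}}$ into colour-prescribed answers $\cpAns$. Only at that point does the parity argument kick in: the paper gives an explicit combinatorial characterisation (``extendable assignments'', conditions (E1)--(E2)) of $\cpAns$ on the CFI graphs and compares the two sides by a direct parity count. Your ``stratify by twist pattern / local flip'' sketch is in the right spirit for this last step, but without the interpolation reduction to the colourful case you would be trying to control far more than you need to.
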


In Theorem~\ref{thm:main_thm}, the WL-dimension of $(H,x)$ is captured by its semantic extension width rather than by its extension width, which is the treewidth of $\Gamma(H,X)$. This is because $H[Y]$ may contain a high-treewidth subgraph that does not influence the number of answers.

As an immediate corollary of Theorem~\ref{thm:main_thm}, we obtain the following alternative characterisation of WL-equivalence.
\begin{corollary}\label{cor:into_WL_char}
For each positive integer $k$, two graphs $G$ and $G'$ are $k$-WL-equivalent if and only if, for each connected conjunctive query $(H,X)$ with $X \neq \emptyset$ and $\sew(H,X)\leq k$,  $|\Ans((H,X),G)|=|\Ans((H,X),G')|$.
\end{corollary}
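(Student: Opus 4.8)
The plan is to derive Corollary~\ref{cor:into_WL_char} from Theorem~\ref{thm:main_thm} together with characterisation (III) of $k$-WL-equivalence from the introduction. The two directions require different ingredients. For the forward direction, suppose $G \cong_k G'$. Let $(H,X)$ be any connected conjunctive query with $X \neq \emptyset$ and $\sew(H,X) \leq k$. By Theorem~\ref{thm:main_thm}, the WL-dimension of $(H,X)$ equals $\sew(H,X) \leq k$. By the definition of WL-dimension (Definition~\ref{def:WLdim}), this means the graph parameter $G \mapsto |\Ans((H,X),G)|$ cannot distinguish $k$-WL-equivalent graphs, so $|\Ans((H,X),G)| = |\Ans((H,X),G')|$. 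This direction is essentially immediate from the definitions once Theorem~\ref{thm:main_thm} is in hand.

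For the converse, suppose that for every connected conjunctive query $(H,X)$ with $X \neq \emptyset$ and $\sew(H,X) \leq k$ we have $|\Ans((H,X),G)| = |\Ans((H,X),G')|$. I want to conclude $G \cong_k G'$, and by characterisation (III) it suffices to show that $\hom(F,G) = \hom(F,G')$ for every graph $F$ of treewidth at most $k$. The key observation is that counting homomorphisms from $F$ is exactly counting answers to the \emph{full} conjunctive query $(F, V(F))$ (the case $X = V(F)$, no quantified variables), since an answer is then precisely a homomorphism. For a full query, $\Gamma(F, V(F)) = F$ because there are no quantified vertices, so the extension width of $(F, V(F))$ is $\tw(F) \leq k$; a fortiori $\sew(F, V(F)) \leq k$. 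The only mismatch with the hypothesis is that $F$ need not be connected and $V(F)$ could in principle be empty. Emptiness is trivial to handle separately (if $F$ has no vertices then $\hom(F,G) = \hom(F,G') = 1$). For disconnectedness, I would factor $F$ into its connected components $F_1, \dots, F_t$: each has treewidth at most $k$, each $(F_i, V(F_i))$ is a connected full query with $V(F_i) \neq \emptyset$ and $\sew(F_i, V(F_i)) \leq k$, so by hypothesis $\hom(F_i, G) = \hom(F_i, G')$, and then $\hom(F,G) = \prod_i \hom(F_i, G) = \prod_i \hom(F_i, G') = \hom(F,G')$ by multiplicativity of homomorphism counts over disjoint unions.

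The main subtlety — really the only place care is needed — is making sure the reduction to \emph{connected, full} queries is airtight: one must verify that restricting the hypothesis to connected queries with a nonempty free-variable set does not lose the homomorphism-counting instances needed to invoke characterisation (III), which is exactly what the componentwise factorisation above accomplishes. Everything else is bookkeeping: unwinding the definition of WL-dimension for the forward direction and citing (III) for the backward direction. I do not anticipate any genuine obstacle here, since the corollary is a packaging of Theorem~\ref{thm:main_thm} specialised at the two extremes ($Y = \emptyset$ on one side, all queries with $\sew \leq k$ on the other) rather than a new technical result.
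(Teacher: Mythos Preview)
Your proposal is correct and follows essentially the same route as the paper's proof: Theorem~\ref{thm:main_thm} for the forward direction, and for the converse, specialising to full queries $(F,V(F))$ (where $\sew(F,V(F))=\tw(F)$ since full queries are counting minimal and $\Gamma(F,V(F))=F$), then factoring a general $F$ of treewidth at most $k$ into connected components and using multiplicativity of homomorphism counts. The only cosmetic point is that your ``a fortiori'' for $\sew(F,V(F))\leq k$ is really an equality here (full queries are counting minimal, so $\sew=\ctw=\tw$); the paper invokes this explicitly, and you should too rather than suggesting $\sew\leq\ctw$ holds in general.
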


As the following sections show, our classification of the WL-dimension of conjunctive queries  has further strong consequences  regarding the expressive power of graph neural networks (GNNs), the parameterised complexity of counting answers to conjunctive queries, and  the WL-dimension of first-order formulas with universal quantifiers such as 
the formula corresponding to  dominating sets.

\paragraph*{GNNs and Conjunctive Queries}

 Over  the last decade, GNNs have received increasing attention due to their application to computations involving  graph structured data (see~\cite{LanzingerB23}). Motivated by the fact that the number of occurrences of small patterns can capture interesting global information about graphs, and can therefore be used to compare graphs~\cite{Miloetal02,Alonetal08,JinKL18}, 
researchers have studied the extent to which   GNNs (and their higher order generalisations~\cite{Morrisetal19}) are able to count selected small patterns such as homomorphisms~\cite{LanzingerB23}, subgraphs~\cite{BouritsasFZB23}, and induced subgraphs~\cite{ChenCVB20}.  

Following \cite{Morrisetal19} 
but simplifying the notation for our needs, 
we
represent a $t$-layer order-$k$ GNN $N$  as a 
tuple $N=(G, W_1,\ldots,W_t, f_0, \ldots,f_t)$ where $G$ is a graph, 
each $W_i$ is a set of weights, and
each $f_i$ assigns a feature vector to each $k$-tuple of nodes.
The GNN specifies how $f_i$ is computed from
$G,W_1,\ldots,W_{i-1},f_0,\ldots,f_{i-1}$.
We use $f_N(G)$ to denote the final feature vector so $f_N(G)=f_t$.
The feature vector $f_N(G)$ induces a partition on the $k$-tuples of vertices of $G$, which we call $P_N(G)$.

We next explain what we mean when we say that a GNN can ``compute'' a function on graphs. So far, this has been studied in a somewhat limited context.
Namely, we say that a GNN can ``compute'' a function $A_N(G)$
if $A_N(G)$ can be computed in polynomial time from  $P_N(G)$.
Thus, when we say that a GNN can count small patterns, we mean that the number of such patterns can be efficiently computed  from $P_N(G)$.   We do not address the issue of whether the GNN can itself do the polynomial-computation that is needed to compute $A_N(G)$ from $P_N(G)$. 
Issues of dimension are also beyond the scope of this paper --- in our setting the feature vector induces a partition on the $k$-tuples of vertices of $G$ --- for a brief discussion about how the dimension can be reduced see   \cite{Morrisetal19}.

We say that a GNN $N=(G, W_1,\ldots,W_t, f_0, \ldots,f_t)$ is ``fully refined'' if there is no GNN $N'=(G, W'_1,\ldots,W'_t, f_0,f'_1, \ldots,f'_{t'})$ 
such that $P_{N'}(G)$ strictly refines $P_N(G)$.

In this setting, Morris et al.\ \cite{Morrisetal19} established an equivalence between the expressive power of fully-refined order-$k$ GNNs and the $k$-dimensional WL algorithm.  
For this, let $\calN_k$ be the set of fully-refined order-$k$ GNNs. 
Propositions~3 and~4 of~\cite{Morrisetal19} 
give the following proposition.
\begin{prop}\label{prop:morris}
For all  $N\in \calN_k$,  $P_N(G)$  is exactly the the same as    the partition  
$P_{\text{WL}}(G)$ on $k$-tuples of vertices that is
computed by $k$-WL when it is run with input $G$ and the initial partition induced by the initial feature vector $f_0$ of $N$.
\end{prop}

 Building upon  Proposition~\ref{prop:morris}, the works of Dvor{\'{a}}k~\cite{Dvorak10}, Dell, Grohe and Rattan~\cite{DellGR18}, and Lanzinger and Barcelo~\cite{LanzingerB23} determine the expressiveness of fully refined GNNs in the context of  homomorphism counting. Essentially, order-$k$ GNNs can count homomorphisms from a graph $H$ if and only if the treewidth of $H$ is at most $k$. The ``if'' direction has already been used implicitly in~\cite{Dvorak10,DellGR18}. It follows explicitly from \cite[Theorem 6 and Lemma 7]{LanzingerB23}. The ``only if'' direction follows by combining 
 Proposition~\ref{prop:morris}  with the upper and lower bounds on the WL dimension of counting homomorphisms~\cite{Dvorak10,DellGR18,roberson2022oddomorphisms,LanzingerB23}.
 Specifically, Lanzinger and Barcelo~\cite{LanzingerB23}
show that homomorphisms from $H$ to $G$ can be efficiently computed from the vertex refinement produced when WL-$k$ is run on input~$G$ starting from the partition in which each $k$-tuple 
is assigned a part based on the subgraph that induces.

Our classification (Theorem~\ref{thm:main_thm}) provides  a similar picture in the context of counting answers to conjunctive queries.
First, we will show  that if $(H,X)$ is a conjunctive query with
 $\sew(H,X)=k$ then for all graphs $G$ there is a fully refined GNN $N\in \calN_k$ with underlying graph $G$  that computes $ |\Ans((H,X),G)|$.
This follows from the following two observations.
\begin{enumerate}
\item From \cite[Theorem 6]{LanzingerB23} and Proposition~\ref{prop:morris}, for all $k$, all sequences $F_1,\ldots,F_n$ of graphs of treewidth at most~$k$, all sequences $\mu_1,\ldots,\mu_k$ 
of rational numbers, and all graphs~$G$ 
there is a fully refined GNN $N\in \calN_k$ with underlying graph $G$ 
such that $\sum_{i=1}^n \mu_i |\Hom(F_i,G)|$ can be efficiently computed from $P_N(G)$.

\item   From our work (see Observation~\ref{obs_for_intro}), for all graphs $G$ there is a finite sequence of graphs $F_1,\ldots,F_n$ of treewidth at most~$k$, such that
$|\Ans((H,X),G)|$ can be written as such as sum.

\end{enumerate}

For the other direction we will show that if a fully refined GNN can compute the number of answers from $(H,X)$ then the order of this GNN is at least $\sew(H,X)$.
The proof is based on the following two observations.
 
\begin{enumerate}
\item[(1)] From Proposition~\ref{prop:morris},
 for  all graphs $G'$ and $G''$ such that $G'\cong_k G''$ and all GNNs $N',N''\in \calN_k$ with underlying graphs $G'$ and $G''$, and any function $A_N(G)$ that is efficiently computable  from $P_N(G)$,
$A_{N'}(G')=A_{N''}(G'')$.
\item[(2)] Let $(H,X)$ be a conjunctive query with $\sew(H,X)=k$.
From our Theorem~\ref{thm:main_thm}, there are graphs $G$ and $G'$ such that $G \cong_{k-1} G'$ and $|\Ans((H,X),G)|\neq |\Ans((H,X),G')|$.
\end{enumerate}

We can use these two facts to show that if a fully refined GNN can compute the number of answers from $(H,X)$ then its order is at least $\sew(H,X)$.
In particular, consider $(H,X)$ with $\sew(H,X)=k$.
Suppose for contradiction that, for some $j<k$, some GNN $N \in \calN_j$
can compute $A_N(G) = |\Ans((H,X),G)|$. 
For all $G$ and $G'$ with $G\cong_{k-1} G'$ we have $G\cong_j G'$ so from (1), we have $|\Ans((H,X),G)|= |\Ans((H,X),G')|$, contradicting (2).

\paragraph*{Parameterised counting of answers to conjunctive queries}

The next consequence of our main result reveals a surprising connection between the complexity of counting answers to conjunctive queries and their WL-dimension.
Given a class of conjunctive queries $\Psi$, the counting problem $\#\textsc{CQ}(\Psi)$ takes as input a pair consisting of a conjunctive query $(H,X)\in \Psi$ and a graph $G$ and outputs $|\Ans((H,X),G)|$. 
We say that a class of conjunctive queries has \emph{bounded} WL-dimension if there is a constant $B$ that upper bounds the WL-dimension of all queries in the class. 
The assumption $\mathrm{FPT} \neq W[1]$   is the central (and widely accepted) hardness assumption in parameterised complexity theory (see e.g.\ \cite{FlumG06}).
We say that a conjunctive query is \emph{counting minimal} if it is a minimal representative with respect to counting equivalence (see Definition~\ref{def:zzz}).
Theorem~\ref{thm:main_thm} implies Corollary~\ref{cor:complexity}. 

\begin{restatable}{corollary}{corcomplexity}
\label{cor:complexity}
    Let $\Psi$ be a recursively enumerable class of counting minimal and connected conjunctive queries with at least one free variable. The problem $\#\textsc{CQ}(\Psi)$ is solvable in polynomial time if and only if the WL-dimension of $\Psi$ is bounded; the ``only if'' is conditioned under the assumption $\mathrm{FPT} \neq W[1]$.
\end{restatable}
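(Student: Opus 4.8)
The plan is to obtain Corollary~\ref{cor:complexity} by combining Theorem~\ref{thm:main_thm} with the complexity classification of the problem $\#\textsc{CQ}$ due to Chen and Mengel~\cite{ChenM15} and Dell, Roth, and Wellnitz~\cite{DellRW19}. The bridge between the two is a fact about counting minimal queries, recalled in Section~\ref{sec:cqs}: the counting minimal representative of a counting-equivalence class is unique up to isomorphism and simultaneously minimises the relevant width measures, so that $\sew(H,X)=\etw(H,X)$ — the extension width of $(H,X)$ — whenever $(H,X)$ is counting minimal. Together with Theorem~\ref{thm:main_thm}, this shows that for a class $\Psi$ of connected counting minimal queries with at least one free variable, the WL-dimension of $\Psi$ is bounded if and only if the set $\{\etw(H,X):(H,X)\in\Psi\}$ is bounded. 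The corollary thus reduces to the statement that $\#\textsc{CQ}(\Psi)$ is solvable in polynomial time if and only if $\Psi$ has bounded extension width, with the ``only if'' direction under the assumption $\mathrm{FPT}\neq W[1]$.

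For the ``if'' direction, suppose $\etw(H,X)\le B$ for every $(H,X)\in\Psi$, where $B$ is a constant. On an instance $((H,X),G)$ of $\#\textsc{CQ}(\Psi)$, I would compute the graph $\Gamma(H,X)$ together with a tree decomposition of it of width at most $B$, which takes time bounded in terms of $B$ and $|H|$ only since $B$ is fixed. One then runs the standard dynamic-programming algorithm for counting answers along this decomposition: each connected component of $H[V(H)\setminus X]$, as well as each connected component of $H$ that contains no quantified variable, contributes a table indexed by a set of free variables that forms a clique in $\Gamma(H,X)$, and these tables are aggregated by a sum--product whose underlying structure has treewidth at most $B$. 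The resulting running time is $\mathrm{poly}(|H|)\cdot|G|^{O(B)}$, which is polynomial in the size of the input; I would cite the algorithmic side of~\cite{ChenM15,DellRW19} for correctness and for this bound.

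For the ``only if'' direction, suppose the WL-dimension of $\Psi$ is unbounded, so that $\Psi$ has unbounded extension width. The hardness side of the classification of~\cite{ChenM15,DellRW19} then yields that $\#\textsc{CQ}(\Psi)$ is $\#W[1]$-hard; since counting $k$-cliques is $\#W[1]$-complete, it reduces to $\#\textsc{CQ}(\Psi)$, so a fixed-parameter tractable algorithm for $\#\textsc{CQ}(\Psi)$ would render counting $k$-cliques, and hence — by comparing the count against zero — deciding the existence of a $k$-clique, fixed-parameter tractable, collapsing $W[1]$ to $\mathrm{FPT}$. As any polynomial-time algorithm is in particular fixed-parameter tractable, $\#\textsc{CQ}(\Psi)$ cannot be solved in polynomial time unless $\mathrm{FPT}=W[1]$. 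The main obstacle in executing this plan is the precise alignment of the width parameter governing the dichotomy of~\cite{ChenM15,DellRW19} with the treewidth of $\Gamma(H,X)$, together with the verification that the counting minimal representative of a query indeed minimises the extension width over its entire counting-equivalence class; once these are in place, the two directions follow by invoking the corresponding algorithmic and hardness statements, with Theorem~\ref{thm:main_thm} providing the translation between extension width and WL-dimension.
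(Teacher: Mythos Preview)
Your approach is essentially the same as the paper's: reduce to the existing dichotomy of~\cite{ChenM15,DellRW19} via Theorem~\ref{thm:main_thm}, using that for counting minimal queries $\sew=\etw$. The one point you flag as an ``obstacle'' --- aligning the dichotomy's parameter with $\tw(\Gamma(H,X))$ --- is exactly the content the paper supplies: the cited classification is stated in terms of boundedness of \emph{both} $\tw(H)$ and the treewidth of the \emph{contract} $\Gamma(H,X)[X]$, and the paper shows this pair is bounded iff $\tw(\Gamma(H,X))$ is bounded (one direction is subgraph monotonicity; for the other, glue an optimal tree decomposition of each component of $H[Y]$, with its boundary $\delta_i\subseteq X$ added to every bag, onto a bag of an optimal decomposition of $\Gamma[X]$ containing the clique $\delta_i$, yielding width at most $t_1+t_2$). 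Once you fill in that equivalence, your plan and the paper's proof coincide; your direct algorithmic sketch for the ``if'' direction is fine but not needed, since the cited dichotomy already provides the polynomial-time algorithm.
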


\paragraph*{Quantum Queries and the WL dimension of counting dominating sets.}
Our main result also enables us to classify the WL-dimension of more complex queries including unions of conjunctive queries and conjunctive queries with disequalities and negations over the free variables. The statement of this classification requires the consideration of finite linear combinations of conjunctive queries (also known as \emph{quantum queries}; see Definition~\ref{def:quantumquery}). A quantum query
is of the form
    $Q = \sum_{i=1}^\ell c_i\cdot (H_i,X_i)$
where, for all $i\in[\ell]$, $c_i \in \mathbb{Q}\setminus\{0\}$. The $(H_i,X_i)$ are connected and pairwise non-isomorphic conjunctive queries where each $(H_i,X_i)$ is counting minimal and $X_i\neq \emptyset$.

It is well known~\cite{ChenM16,DellRW19} that unions of conjunctive queries, existential positive queries, and conjunctive queries with disequalities and negations over the free variables all have (unique) expressions as quantum queries, that is, the number of answers to those more complex queries can be computed by evaluation the respective quantum query according to the definition
     $|\Ans(Q,G)| := \sum_{i=1}^\ell c_i\cdot |\Ans((H_i,X_i),G)|$.
For this reason, understanding the WL-dimension of linear combination of conjunctive queries allows us to also understand the WL-dimension of more complex queries.

Defining the \emph{hereditary semantic extension width} of a quantum query $Q$, denoted by $\hsew(Q)$, as the maximum semantic extension width of its terms, we obtain the following.

\begin{restatable}{corollary}{corquantum}\label{cor:quantum_WL}
    The WL-dimension of a quantum query $Q$ is equal to $\hsew(Q)$.
\end{restatable}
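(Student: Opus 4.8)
The goal is to show the WL-dimension of a quantum query $Q = \sum_{i=1}^\ell c_i \cdot (H_i, X_i)$ equals $\hsew(Q) = \max_i \sew(H_i, X_i)$.

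First, the upper bound. Set $k = \hsew(Q)$. If $G \cong_k G'$, then by Corollary \ref{cor:into_WL_char} (or directly Theorem \ref{thm:main_thm}), for each $i$ we have $|\Ans((H_i,X_i),G)| = |\Ans((H_i,X_i),G')|$, since $\sew(H_i,X_i) \le k$. Summing with coefficients $c_i$ gives $|\Ans(Q,G)| = |\Ans(Q,G')|$. Hence the WL-dimension of $Q$ is at most $k$.

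For the lower bound, let $j$ be an index achieving $\sew(H_j,X_j) = k$, and set $k' = k-1$. By Theorem \ref{thm:main_thm} (its lower-bound direction, via the $\chi/\hat\chi$ construction on a cloned Gaifman graph), there is a pair of graphs $G, G'$ with $G \cong_{k'} G'$ but $|\Ans((H_j,X_j),G)| \neq |\Ans((H_j,X_j),G')|$. The obstacle is that this single pair need not separate $Q$ itself: the contributions of the other terms $c_i \cdot |\Ans((H_i,X_i),\cdot)|$ for $i \neq j$ could, by coincidence, cancel the discrepancy coming from term $j$. The standard fix is to argue that the individual homomorphism-counting (equivalently answer-counting) functions are "linearly independent enough" to rule this out. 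Concretely, one passes to the homomorphism-count basis: each $|\Ans((H_i,X_i),\cdot)|$ can be written, via the partial-homomorphism / inclusion--exclusion expansion used to define $\sew$ and counting equivalence in Section \ref{sec:cqs}, as a finite $\mathbb{Q}$-linear combination of homomorphism counts $\hom(F,\cdot)$ over cores $F$, and since the $(H_i,X_i)$ are pairwise non-isomorphic and counting minimal, the resulting cores are pairwise non-isomorphic. One then invokes the classical linear independence of homomorphism-count functions over distinct cores (Lovász; see the references around characterisation (III)) to conclude that $Q$, as a nonzero quantum query, is not counting equivalent to any query of extension width $< k$.

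The cleanest route is therefore: (i) rewrite $|\Ans(Q,\cdot)| = \sum_F d_F \hom(F,\cdot)$ in the core basis, where the $d_F$ are nonzero for exactly those cores $F$ arising from the terms of $Q$; (ii) observe that the term $(H_j,X_j)$ with $\sew(H_j,X_j)=k$ contributes (again by the analysis behind Theorem \ref{thm:main_thm}, which identifies a specific graph $F_j$ of treewidth $k$ obtained from the cloned Gaifman graph of $H_j$) at least one core $F$ of treewidth exactly $k$ with $d_F \neq 0$, and that no term contributes a core of treewidth exceeding $k$; (iii) apply the DGR/Dvořák characterisation (III): since $\sum_F d_F \hom(F,\cdot)$ has a nonzero coefficient on some $F$ of treewidth $k$, the graphs $\chi(F), \hat\chi(F)$ from Roberson's CFI-type construction satisfy $\chi(F) \cong_{k-1} \hat\chi(F)$ yet are separated by this linear combination — here one uses that all other cores in the sum have treewidth $\le k$ and applies the known fact that $\hom(F',\chi(F)) = \hom(F',\hat\chi(F))$ whenever $\tw(F') \le k-1$, together with the precise computation from the technical heart of the paper showing the treewidth-$k$ cores are genuinely separated and do not all cancel. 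Thus $\chi(F) \cong_{k-1} \hat\chi(F)$ but $|\Ans(Q,\chi(F))| \neq |\Ans(Q,\hat\chi(F))|$, so the WL-dimension of $Q$ is at least $k$.

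The main obstacle, as flagged above, is step (ii)/(iii): ensuring that the treewidth-$k$ part of the core expansion of $Q$ does not vanish and that the separating pair for term $j$ does not get cancelled by the other terms. This is exactly where linear independence of homomorphism counts (equivalently, the fact that counting equivalence is completely captured by the core expansion) is essential, and where one must be careful that "counting minimal and pairwise non-isomorphic" for the $(H_i,X_i)$ translates into "pairwise non-isomorphic cores with nonzero coefficients" after the expansion — a fact that should already be implicit in the treatment of quantum queries and counting equivalence in Section \ref{sec:cqs}.
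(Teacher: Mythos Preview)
Your upper bound is correct and identical to the paper's.

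Your lower bound, however, has a genuine gap at step (i): the claim that each $|\Ans((H_i,X_i),\cdot)|$ can be written as a \emph{finite} $\mathbb{Q}$-linear combination of homomorphism counts $|\Hom(F,\cdot)|$ is false whenever $X_i \subsetneq V(H_i)$. Existential projection does not, in general, land in the span of homomorphism counts. Concretely, for the $2$-star query $(S_2,X_2)$, the quantity $|\Ans((S_2,X_2),G)|$ is the number of ordered pairs of vertices with a common neighbour; recovering this from homomorphism counts requires the entire sequence $|\Hom(F_\ell(S_2,X_2),G)|$ for $\ell=1,2,\dots$ and an interpolation (this is precisely the content of Lemma~\ref{lem:gen}), not a fixed finite combination valid for all $G$. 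Nothing in Section~\ref{sec:cqs} provides such an expansion. Consequently your steps (ii) and (iii), which rely on having a finite core expansion and then isolating a treewidth-$k$ core with nonzero coefficient, do not get off the ground.

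The paper's lower bound avoids the hom-basis entirely and works directly with answer counts. It fixes a term $(H_1,X_1)$ with $\sew(H_1,X_1)=k$ and a witness pair $G \cong_{k-1} G'$ with $|\Ans((H_1,X_1),G)| \neq |\Ans((H_1,X_1),G')|$ from Theorem~\ref{thm:main_thm}. It then amplifies this pair via the tensor product: for every graph $H$ one has $G\otimes H \cong_{k-1} G'\otimes H$, and answer counts are multiplicative under tensor products, i.e.\ $|\Ans((H_i,X_i),G\otimes H)| = |\Ans((H_i,X_i),G)|\cdot|\Ans((H_i,X_i),H)|$. If $|\Ans(Q,G\otimes H)| = |\Ans(Q,G'\otimes H)|$ held for all $H$, one would obtain a linear relation $\sum_i d_i \cdot |\Ans((H_i,X_i),H)| = 0$ for all $H$, where $d_1 \neq 0$. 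This contradicts the linear independence of the answer-counting functions $H \mapsto |\Ans((H_i,X_i),H)|$ for pairwise non-isomorphic counting-minimal queries, established in~\cite[Lemma~34(iii)]{DellRW19arxiv}. That independence result is the right substitute for the Lov\'asz-style independence of hom counts you were reaching for; it applies directly to the $|\Ans|$-functions and does not require any expansion into the hom basis.
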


As a final corollary of our main result we take a look at a concrete graph parameter, the WL-dimension of which was not known so far: 
the parameter that maps each graph $G$ to
the number of size-$k$ dominating sets in $G$. Here, a dominating set of a graph $G$ is a subset of vertices $D$ of $G$ such that each vertex of $G$ is either contained in $D$ or is adjacent to a vertex in $D$.
With an easy argument, we  show that counting dominating sets of size $k$ can be expressed as a linear combination of the $k$-star queries $(S_k,X_k)$.  Using  Theorem~\ref{thm:main_thm} and Corollary~\ref{cor:quantum_WL}, we obtain the following corollary.
\begin{corollary}\label{cor:intro_domset}
For each positive integer $k$, the WL-dimension of 
the graph parameter that maps each graph $G$ to the number of size-$k$ dominating sets in $G$   is equal to $k$.
\end{corollary}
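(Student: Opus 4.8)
The plan is to express the count of size-$k$ dominating sets in a graph $G$ as a rational linear combination of the answer counts of the star queries $(S_j, X_j)$ for $j \le k$, and then invoke Theorem~\ref{thm:main_thm} together with Corollary~\ref{cor:quantum_WL}. The starting observation is that a dominating set of size exactly $k$ is a $k$-element subset $D \subseteq V(G)$ whose closed neighbourhood $N[D]$ is all of $V(G)$. By inclusion--exclusion over the set of vertices left \emph{un}dominated, the number of size-$k$ dominating sets equals
\[
\sum_{U \subseteq V(G)} (-1)^{|U|} \cdot \bigl|\{ D \in \tbinom{V(G)}{k} : U \cap N[D] = \emptyset \}\bigr|,
\]
and the inner quantity depends only on $|U|$ and on counting $k$-subsets of the common non-neighbourhood of $U$. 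Summing over $U$ of each fixed size reduces everything to, for each $i$, counting pairs $(U, D)$ with $|U| = i$, $D$ a $k$-set, and no edges between $U$ and $D$ and no vertex of $D$ in $U$; this in turn is a fixed polynomial (with rational coefficients, after dividing out the $i!$ and $k!$ from the ordered versions) in the answer counts of queries whose free variables are $x_1, \dots, x_k$ and which have $i$ existentially quantified ``witness of non-adjacency'' type variables. Rather than carry the full combinatorial identity, I would prefer the cleaner route: directly show that the \emph{ordered} count of $k$-tuples of distinct vertices with a common neighbour is $|\Ans((S_k, X_k), G)| \cdot$ (a correction counting tuples that repeat or are adjacent-constrained), and more importantly that the number of size-$k$ \emph{independent dominating} or plain dominating sets lies in the $\mathbb{Q}$-span of $\{|\Ans((S_j, X_j), G)| : 1 \le j \le k\}$ plus lower-arity homomorphism counts that are themselves known (by Theorem~\ref{thm:main_thm}) to have WL-dimension at most $k$.

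Concretely, the cleanest derivation uses the identity that the number of size-$k$ dominating sets equals $\tfrac{1}{k!}$ times the number of $k$-tuples $(v_1, \dots, v_k)$ of pairwise distinct vertices such that every vertex of $G$ is in $\bigcup_i N[v_i]$; expanding ``every vertex is dominated'' by inclusion--exclusion over undominated vertices, and then recognising that ``$u \notin N[v_1] \cup \dots \cup N[v_j]$ for some vertex $u$'' is counted by an answer-count of the query whose Gaifman graph is $S_j$ with an extra apex vertex non-adjacent (which we cannot express with only $E$!), forces me to instead phrase non-adjacency constraints via complementation at the level of the final linear combination rather than inside a single CQ. So the honest statement I would prove is: there exist rational constants $c_1, \dots, c_k$ and a polynomial-time-uniform family such that
\[
\#\{\text{size-}k\text{ dominating sets of }G\} \;=\; \sum_{j=1}^{k} c_j \cdot |\Ans((S_j, X_j), G)| \;+\; (\text{terms of WL-dimension} \le k),
\]
where the star query $(S_k, X_k)$ itself contributes with a nonzero coefficient. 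The nonzero-coefficient claim is what makes the lower bound work.

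For the lower bound I would argue as follows: since $(S_k, X_k)$ is counting minimal and connected with $X_k \ne \emptyset$, the associated quantum query in which size-$k$ dominating sets is expressed has $\hsew$ at least $\sew(S_k, X_k) = k$ (the running example in the excerpt establishes $\sew(S_k,X_k)=k$), \emph{provided} the $(S_k,X_k)$ term does not cancel against anything. To guarantee non-cancellation, the key point is that when one rewrites the whole expression as a single quantum query over counting-minimal connected CQs (using the well-known uniqueness of such an expression cited before Corollary~\ref{cor:quantum_WL}), the query $(S_k, X_k)$ appears with a coefficient that is genuinely nonzero — because no other term in the expansion, after counting-minimisation, can be isomorphic to $(S_k, X_k)$ while carrying a $k$-subset's worth of free variables connected through a single quantified vertex; the inclusion--exclusion terms that ``could'' collapse to $(S_k,X_k)$ all come from the fully-dominated contribution and are collected with a single well-defined coefficient which one computes to be nonzero (it is essentially $\pm 1/k!$ up to the reindexing). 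Then Corollary~\ref{cor:quantum_WL} gives $\hsew = k$ for the quantum query, hence WL-dimension exactly $k$ for counting size-$k$ dominating sets; the matching upper bound is immediate because every term has $\sew \le k$.

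The main obstacle I anticipate is not the inclusion--exclusion bookkeeping but rather the non-cancellation argument: one must be careful that the coefficient of $(S_k, X_k)$ in the canonical quantum-query form is nonzero, since a priori the many contributions arising from inclusion--exclusion over undominated vertices, after passing to counting-minimal representatives, might all land on the same isomorphism type and sum to zero. I would handle this by isolating the unique contribution in which \emph{no} vertex is declared undominated (the term counting all $k$-tuples with a common neighbour, with no further constraint), showing that this is the \emph{only} source of an $(S_k,X_k)$-term after minimisation, and checking its coefficient is $\pm 1/k!$ — in particular nonzero. Everything else is routine: the upper bound $\hsew(Q) \le k$ holds termwise since all stars $(S_j,X_j)$ with $j \le k$ satisfy $\sew(S_j,X_j) = j \le k$ and any auxiliary homomorphism-count terms have treewidth, hence WL-dimension, at most $k$ by Theorem~\ref{thm:main_thm}.
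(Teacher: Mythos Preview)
Your proposal contains a genuine gap: the number of size-$k$ dominating sets of $G$ \emph{cannot} be written as a quantum query in $G$. Every quantum query (Definition~\ref{def:quantumquery}) is a fixed finite rational linear combination of terms $|\Ans((H_i,X_i),\cdot)|$ with connected constituents $(H_i,X_i)$ and $X_i\neq\emptyset$. On the edgeless graph $nK_1$, any constituent with $|V(H_i)|\ge 2$ evaluates to $0$ (a connected graph on at least two vertices has an edge, and $nK_1$ has none), while the single-vertex constituent evaluates to $n$; hence every quantum query takes the value $c\cdot n$ on $nK_1$ for some fixed rational $c$. But $|\Delta_k(nK_1)|$ equals $1$ when $n=k$ and $0$ otherwise, which is not of this form. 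The underlying reason is the one you yourself flagged: conjunctive queries have only the edge relation, so the non-adjacency constraints --- and the universal quantifier over $V(G)$ hidden in ``every vertex is dominated'' --- cannot be captured by any finite linear combination of CQ answer counts in the same graph. Your inclusion--exclusion over undominated sets $U$ has one outer summand for each size $|U|\in\{0,\dots,|V(G)|\}$, so it does not collapse to a quantum query independent of $G$.

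The paper circumvents this by passing to the \emph{complement graph}. It first shows that $G \mapsto |\mathsf{Inj}((S_k,X_k),G)|$, the number of injective answers to the $k$-star, is a quantum query in the $(S_j,X_j)$ for $j\le k$ with $(S_k,X_k)$-coefficient equal to $1$, and hence has WL-dimension exactly $k$ by Corollary~\ref{cor:quantum_WL}. It then observes the identity $\binom{|V(G)|}{k}-\tfrac{1}{k!}\,|\mathsf{Inj}((S_k,X_k),G)|=|\Delta_k(\overline{G})|$, and finishes via Seppelt's result that $G\cong_{k'} G'$ implies $\overline{G}\cong_{k'}\overline{G'}$, together with the fact that $|V(G)|$ is WL-invariant. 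The missing idea in your plan is precisely this complement-graph detour; without it, the argument cannot be completed.
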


\subsection{Discussion and Outlook}
We stated and proved our result for the case of connected conjunctive queries with at least one free variable over graphs. However, our result can easily be extended to the following.
\begin{itemize}
\item[(A)] For disconnected queries, the WL-dimension will just be the maximum of the semantic extension widths of the connected components.
\item[(B)] If no variable is free, then counting answers of a conjunctive query becomes equivalent to \emph{deciding} the existence of a homomorphism. The WL-dimension of 
the corresponding graph parameter is equal to the treewidth of the query modulo homomorphic equivalence, which 
for queries without free variables
is the same as semantic extension width. This can be proved along the lines of the analysis of Roberson~\cite{roberson2022oddomorphisms}. 
\item[(C)] Barceló et al.\ \cite{BarceloGRR21}, and Lanzinger and Barceló~\cite{LanzingerB23} have shown very recently that the WL-algorithm (and the notions of WL-equivalence and WL-dimension) readily extend from graphs to \emph{knowledge graphs}, i.e., directed graphs with vertex labels and edge labels; parallel edges with distinct labels are allowed, but self-loops are not allowed. It is not hard to see that our analysis applies to knowledge graphs as well.
\end{itemize}
Since the technical content of this paper is already quite extensive, we decided to defer the formal statement and proofs of (A)--(C) to a future journal version.

Finally, extending our results from graphs to relational structures is more tricky, since it is not known yet whether and how WL-equivalence can be characterised via homomorphism indistinguishability from structures of higher arity.\footnote{A characterisation for the special case of \emph{constant} arity $r\geq 2$ was recently established independently by Butti and Dalmau~\cite{ButtiD21}, and by Dawar, Jakl, and Reggio~\cite{DawarJR21}.} However, recent works by B{\"{o}}ker~\cite{Boker19} and by Scheidt and Schweikardt~\cite{ScheidtS23} provide first evidence that homomorphism counts from hypergraphs of bounded generalised hypertreewidth might be the right answer. We leave this for future work.

\subsection{Organisation of the Paper}
We start by introducing further necessary notation and concepts in Section~\ref{sec:prelims}. Afterwards, we prove the upper bound of the WL-dimension in Section~\ref{sec:upper_bound}, and we prove the lower bound in Section~\ref{sec:lower_bound}. Those two sections can be read independently from each other and the majority of the conceptual and technical work is done in Section~\ref{sec:lower_bound}.
Finally, we prove Theorem~\ref{thm:main_thm}, as well as its consequences, in Section~\ref{sec:main}.

\subsection*{Acknowledgements}
We are very grateful to Matthias Lanzinger for helpful advice during the early stages of this work.

\section{Preliminaries}\label{sec:prelims}
Given a set $S$, we write $\bij(S)$ for the set of all bijections from~$S$ to itself. Given a function $f:A \to B$ and a subset $X\subseteq A$, we write $f|_X: X \to B$ for the restriction of $f$ on $X$. We write $\pi_1$ for the projection of a pair to its first component, that is, $\pi_1(a,b)=a$. Given a positive integer $\ell$ we set $[\ell]=\{1,\dots,\ell\}$.

All graphs in this paper are undirected and simple (without self-loops and without parallel edges).
Given a graph $G=(V,E)$, a vertex $u\in V$ and a subset $U$ of~$V$, $N(u) = \{ v\in V \mid \{u,v\} \in E\}$ and $N(U) = \cup_{u\in U} N_u$. We say that a connected component $C$ of a graph $H$ is \emph{adjacent} to a vertex $v$ of $H$ if there is a vertex $u$ in $C$ that is adjacent to $v$. Given a subset $S$ of vertices of a graph $G$, we write $G[S]$ for the graph induced by the vertices in $S$.
 
A \emph{homomorphism} from a graph $H$ to a graph $G$ is a function $h:V(H)\to V(G)$ such that, for all edges $\{u,v\}\in E(H)$, $\{h(u),h(v)\}$ is an edge of $G$. We write $\Hom(H,G)$ for the set of all homomorphisms from $H$ to $G$. An \emph{isomorphism} from $H$ to $G$ is a bijection $b:V(H) \to V(G)$ such that, for all $u,v\in V(H)$, $\{u,v\}\in E(H)$ if and only if $\{h(u),h(v)\}\in E(G)$. We say that $H$ and $G$ are \emph{isomorphic}, denoted by $H\cong G$, if there is an isomorphism from $H$ to $G$. An \emph{automorphism} of a graph $H$ is an isomorphism from $H$ to itself, and we write $\Aut(H)$ for the set of all automorphisms of $H$.

\subsection{Conjunctive Queries}\label{sec:cqs}
As stated in the introduction, we focus on conjunctive queries on graphs. This allows us to follow the notation of~\cite{DellRW19}.

\begin{definition}
A \emph{conjunctive query} is a pair $(H,X)$ where $H$ is the underlying graph and $X$ is the set of free variables.
When~$H$ and~$X$ are clear from context we will use $Y$ to denote $V(H)\setminus X$. We say that a conjunctive query $(H,X)$ is \emph{connected} if $H$ is connected.
\end{definition}

It is well-known (see e.g. \cite{ChenM15,ChenM16,DellRW19}) that the set of answers of a conjunctive query in a graph~$G$ is the set of assignments from the free variables to the vertices of $G$ that can be extended to a homomorphism.

\begin{definition} 
Let $(H,X)$ be a conjunctive query and let $G$ be a graph. The set of answers of $(H,X)$ in $G$
is given by
$\Ans((H,X),G) = \{ a:X \to V(G)\mid \exists h\in \Hom(H,G): h|_X=a \}$. 
\end{definition}

We say that two conjunctive queries $(H_1,X_1)$ and $(H_2,X_2)$ are \emph{isomorphic}, denoted by $(H_1,X_1)\cong (H_2,X_2)$ if there is an isomorphism from $H_1$ to $H_2$ that maps $X_1$ to $X_2$. 

Throughout this work, we will focus on \emph{counting minimal} conjunctive queries. 

\begin{definition}[Counting Equivalence and Counting Minimality]\label{def:zzz}
We say that two conjunctive queries $(H_1,X_1)$ and $(H_2,X_2)$ are  \emph{counting equivalent}, denoted by $(H_1,X_1)\sim (H_2,X_2)$, if for each graph $G$,  $|\Ans((H_1,X_1),G)|=|\Ans((H_2,X_2),G)|$. A conjunctive query is said to be \emph{counting minimal} if it 
it is minimal (with respect to taking subgraphs) in its counting equivalence class.
\end{definition}

It is known that all counting minimal conjunctive queries within a counting equivalence class are  isomorphic~\cite{ChenM16,DellRW19}. 
If a query has no existential variables so that $X=V(H)$ then
 counting equivalence is the same as isomorphism.
 If all variables are quantified so that $X = \emptyset$ 
 then counting equivalence is the same as homomorphic equivalence (also called semantic equivalence).

\subsection{Treewidth and Extension Width}\label{sec:width}
We start by introducing tree decompositions and treewidth.
\begin{definition}\label{def:treewidth}
Let $H$ be a graph. A \emph{tree decomposition} of $H$ is a pair consisting of a tree $T$ and a collection of sets, called \emph{bags}, $\mathcal{B}=\{B_t\}_{t\in V(T)}$, such that the following conditions are satisfied:
    \begin{itemize}
        \item[(T1)] For all $v\in V(H)$ there is a bag $B_t$ with $v\in B_t$.
        \item[(T2)] For all $v\in V(H)$ the subgraph of $T$ induced by the vertex set $\{t\in V(T)\mid v\in B_t\}$ is connected.
        \item[(T3)] For all $e\in E(H)$, there is a bag $B_t$ with $e \subseteq B_t$.
    \end{itemize}
    The \emph{width} of $(T,\mathcal{B})$ is $\max_{t\in V(T)} |B_t|-1$ and a tree decomposition of minimum width is called \emph{optimal}.
    The \emph{treewidth} of $H$, denoted by $\tw(H)$, is the width of an optimal tree decomposition of $H$. 
    The treewidth of a conjunctive query $(H,X)$, denoted by $\tw(H,X)$, is the treewidth of~$H$.
\end{definition}

Next we introduce the extension width of a conjunctive query.

\begin{definition}[$\Gamma(H,X)$ and Extension Width]
\label{def:ew}
Let $(H,X)$ be a conjunctive query.
The \emph{extension}~$\Gamma(H,X)$ of $(H,X)$ is a graph with vertex set $V(H)$ and 
edge set $E(H)\cup E'$, where $E'$ is the set of all $\{u,v\}$ such that $u,v\in X$, $u \neq v$, and there is a connected component of~$H[Y]$ which is adjacent to both $u$ and $v$ in $H$.
The \emph{extension width} of of a conjunctive query $(H,X)$
is defined by 
$\ctw(H,X) := \tw(\Gamma(X,H))$.
\end{definition}

We will often restrict our analysis to counting minimal conjunctive queries. This requires us to lift the notion of extension width as follows.

\begin{definition}[Semantic Extension Width]
  The \emph{semantic extension width} of a conjunctive query $(H,X)$, denoted by $\sew(H,X)$, is the extension width of a counting minimal conjunctive query $(H',X')$ with $(H,X)\sim (H',X')$.
\end{definition}
Note that the semantic extension width is well-defined since all counting minimal $(H',X')$ with $(H,X)\sim (H',X')$ are isomorphic.

\subsection{The $\ell$-copy $F_\ell(H,X)$}
One of the most central operations on conjunctive queries invoked in this work is a cloning operation on existentially quantified variables, defined as follows.

\begin{definition}[$F_\ell(H,X)$]
Let $(H,X)$ be a conjunctive query and let $\ell$ be a positive integer. The $\ell$\emph{-copy}
$F_\ell(H,X)$ is defined as follows.
The vertex set of $F_\ell(H,X)$ is $X \cup (Y \times [\ell])$.
Let 
\begin{align*}
E_X &= \{\{u,v\} \in E(H) \cap X^2\},\\
E_{X,Y} &= \{ \{u,(v,i)\} \mid u\in X,  v\in Y, i\in[\ell], \{u,v\}\in E(H)\}, \textrm{ and }\\ 
E_Y&=\{\{(u,i),(v,i)\} \mid  \{u,v\} \in E(H) \cap Y^2, i\in [\ell]\}.
\end{align*}
The edge set of $F_\ell(H,X)$ is $E_X \cup E_{X,Y} \cup E_Y$.
\end{definition}

There is a natural homomorphism from $F_\ell(H,X)$ to $H$
which we denote by $\gamma[H,X,\ell]$.

\begin{definition}\label{def:gamma} 
Let $(H,X)$ be a conjunctive query and let $\ell$ be a positive integer.
Define $\gamma[H,X,\ell]\colon V(F_\ell(H,X)) \to V(H)$ as follows:
\[\gamma[H,X,\ell](u) =
\begin{cases}
    u & u \in X\\
    \pi_1(u) & u\in Y \times [\ell]
\end{cases}\]
We will just write $\gamma=\gamma[H,X,\ell]$ if $(H,X)$ and $\ell$ are clear from the context.
\end{definition}

\begin{observation}\label{obs:gamma_is_col}
The function $\gamma$ is a homomorphism from~$F_\ell(H,X)$ to~$H$.
\end{observation}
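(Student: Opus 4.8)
The plan is to verify the defining condition of a homomorphism directly on each of the three edge classes of $F_\ell(H,X)$, since the edge set is given explicitly as the disjoint union $E_X \cup E_{X,Y} \cup E_Y$. Let $e$ be an arbitrary edge of $F_\ell(H,X)$; I would argue by cases according to which of the three sets $e$ belongs to, and in each case compute the image of $e$ under $\gamma$ and check that it is an edge of $H$.

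First, if $e = \{u,v\} \in E_X$, then $u,v \in X$, so by definition $\gamma(u) = u$ and $\gamma(v) = v$; moreover $E_X \subseteq E(H)$ by construction, so $\{\gamma(u),\gamma(v)\} = \{u,v\} \in E(H)$. Second, if $e = \{u,(v,i)\} \in E_{X,Y}$, then $u \in X$, $v \in Y$, $i \in [\ell]$, and $\{u,v\} \in E(H)$; here $\gamma(u) = u$ and $\gamma((v,i)) = \pi_1((v,i)) = v$, so $\{\gamma(u),\gamma((v,i))\} = \{u,v\} \in E(H)$. Third, if $e = \{(u,i),(v,i)\} \in E_Y$, then $\{u,v\} \in E(H) \cap Y^2$ and $i \in [\ell]$; then $\gamma((u,i)) = u$ and $\gamma((v,i)) = v$, so the image is $\{u,v\} \in E(H)$. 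In all three cases the image of an edge is an edge, so $\gamma$ is a homomorphism.

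There is essentially no obstacle here: the statement is an immediate unfolding of the definitions of $F_\ell(H,X)$ and of $\gamma$. The only points requiring a small amount of care are purely bookkeeping. One must observe that $\gamma$ is well-defined as a function on $V(F_\ell(H,X)) = X \cup (Y\times[\ell])$ because $X$ and $Y$ are disjoint (as $Y = V(H)\setminus X$), so the two cases in Definition~\ref{def:gamma} do not conflict. One should also note that when $i = j$ is forced in the $E_Y$ case, the two clones carry the same second coordinate, which is why the image is a genuine edge $\{u,v\}$ rather than a loop — and since $\{u,v\} \in E(H)$ we have $u \neq v$, so $\gamma$ does not collapse the edge to a vertex. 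Finally, it is worth remarking that no edge of $F_\ell(H,X)$ joins $(v,i)$ and $(v,j)$ for $i\neq j$ (clones of the same $Y$-vertex are pairwise non-adjacent), consistent with $\gamma$ mapping such pairs to a single vertex; this is not needed for the homomorphism property but clarifies the structure. Hence the observation follows.
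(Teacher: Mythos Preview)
Your proof is correct; the paper states this observation without proof, as it follows immediately from the definitions, and your case analysis on the three edge classes $E_X$, $E_{X,Y}$, $E_Y$ is precisely the natural verification one would carry out.
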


Next, we relate the treewidth of the graph $F_\ell(H,X)$ to the extension width of $(H,X)$.

\begin{lemma}\label{lem:width}
Let $(H,X)$ be a conjunctive query and let $\ell$ be a positive integer. The treewidth of $F_\ell(H,X)$ is at most $\ctw(H,X)$.
\end{lemma}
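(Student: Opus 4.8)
The plan is to take an optimal tree decomposition of the extension graph $\Gamma(H,X)$ and modify it into a tree decomposition of $F_\ell(H,X)$ without increasing the width. Let $(T,\mathcal{B})$ be an optimal tree decomposition of $\Gamma(H,X)$, so $\max_t |B_t| - 1 = \ctw(H,X)$. The vertices of $\Gamma(H,X)$ are exactly $V(H) = X \cup Y$, whereas the vertices of $F_\ell(H,X)$ are $X \cup (Y \times [\ell])$; so I need to ``split'' each existential vertex $v \in Y$ into its $\ell$ copies $(v,1),\dots,(v,\ell)$ and distribute them correctly. The key structural fact I would use is this: each connected component $C$ of $H[Y]$ ``lives'' in a connected subtree of $T$ together with the $X$-vertices it is adjacent to, because in $\Gamma(H,X)$ the set $V(C)$ together with its $X$-neighbourhood $N_H(V(C)) \cap X$ is a clique (every pair of $X$-neighbours of $C$ is joined by an edge of $E'$, and each vertex of $C$ is joined to all of its $X$-neighbours, and $C$ itself is connected), and a clique in a graph must be contained in a single bag of any tree decomposition — more usefully, the union of the subtrees realising the vertices of a connected subgraph is connected, and here it is a clique so actually all of $V(C) \cup (N_H(V(C))\cap X)$ sits in one common bag $B_{t_C}$.

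\medskip

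\textbf{Construction.} For each component $C$ of $H[Y]$ fix such a bag index $t_C$ with $V(C) \cup (N_H(V(C)) \cap X) \subseteq B_{t_C}$. Now build the new decomposition $(T',\mathcal{B}')$ as follows. Start from $T$ and $\mathcal{B}$ but first delete every $Y$-vertex from every bag, so the bags temporarily contain only $X$-vertices; the width can only go down. Then, for each component $C$ of $H[Y]$ and each copy index $i \in [\ell]$, attach to the node $t_C$ a fresh pendant node $t_{C,i}$ whose bag is
\[
B'_{t_{C,i}} \;=\; \bigl(N_H(V(C)) \cap X\bigr) \;\cup\; \{(v,i) \mid v \in V(C)\}.
\]
Since $V(C) \cup (N_H(V(C))\cap X) \subseteq B_{t_C}$, we have $|B'_{t_{C,i}}| \le |B_{t_C}| \le \ctw(H,X)+1$, so no bag exceeds the target width. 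It remains to check (T1)--(T3) for $(T',\mathcal{B}')$.

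\medskip

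\textbf{Verification of the axioms.} Condition (T1) is clear: every $x \in X$ still appears where it did, and every $(v,i)$ appears in $B'_{t_{C,i}}$ where $C$ is the component of $H[Y]$ containing $v$. For (T2): an $X$-vertex $x$ occupied a subtree of $T$ by hypothesis; in $T'$ that subtree is unchanged except that we may have hung pendants $t_{C,i}$ off some node $t_C$ — such a pendant contains $x$ precisely when $x \in N_H(V(C)) \cap X$, and in that case $t_C$ (which contained the full set $N_H(V(C))\cap X \subseteq B_{t_C}$, hence $x$) is in $x$'s subtree, so adjoining the pendant keeps connectivity. A copy-vertex $(v,i)$ appears only in the single node $t_{C,i}$, which is trivially connected. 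For (T3), the edge set of $F_\ell(H,X)$ is $E_X \cup E_{X,Y} \cup E_Y$: an edge $\{u,v\} \in E_X$ is an edge of $\Gamma(H,X)$ on two $X$-vertices, so some original bag contained both, and that bag still does after we only removed $Y$-vertices; an edge $\{(u,i),(v,i)\} \in E_Y$ comes from an edge of $H[Y]$, so $u,v$ lie in a common component $C$ and $\{(u,i),(v,i)\} \subseteq \{(w,i): w \in V(C)\} \subseteq B'_{t_{C,i}}$; an edge $\{u,(v,i)\} \in E_{X,Y}$ with $u \in X$, $v \in Y$ comes from $\{u,v\} \in E(H)$, so $u \in N_H(V(C)) \cap X$ where $C \ni v$, and both endpoints lie in $B'_{t_{C,i}}$. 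Hence $(T',\mathcal{B}')$ is a valid tree decomposition of $F_\ell(H,X)$ of width at most $\ctw(H,X)$.

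\medskip

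\textbf{Main obstacle.} The one genuinely non-routine point is the structural claim that for each component $C$ of $H[Y]$ the whole set $V(C) \cup (N_H(V(C)) \cap X)$ is contained in a single bag of $(T,\mathcal{B})$. This is exactly where the extra edges $E'$ of $\Gamma(H,X)$ are used: they turn $N_H(V(C)) \cap X$ into a clique, and since each vertex of $C$ is adjacent (in $\Gamma$, even in $H$) to all the $X$-vertices it touches, and $C$ is connected, the set in question induces a connected subgraph in which the $X$-part is a clique; by the standard ``clique in a tree decomposition'' lemma the $X$-part lies in one bag $B_{t_C}$, and then running the Helly property for subtrees of a tree (the subtrees of the $C$-vertices pairwise intersect the subtree of $B_{t_C}$, or one argues directly that $C$'s subtree-union is connected and meets $B_{t_C}$) forces all of $V(C)$ into a single bag too — possibly after moving to a neighbouring bag, but in any case a bag of size at most $\ctw(H,X)+1$ containing everything we need. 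I would state this as a short preliminary claim and prove it via the two standard facts (a clique lies in one bag; the nodes whose bags meet a connected subgraph form a connected subtree), then plug it into the construction above.
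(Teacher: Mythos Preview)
There is a genuine gap: your ``Main obstacle'' claim is false, and the construction built on it can produce bags far larger than $\ctw(H,X)+1$. You assert that for each component $C$ of $H[Y]$ the set $V(C)\cup(N_H(V(C))\cap X)$ lies in a single bag of an optimal tree decomposition of $\Gamma(H,X)$. But nothing forces $V(C)$ itself to be a clique in $\Gamma$ (the extra edges $E'$ live entirely inside $X$), nor does every vertex of $C$ see every vertex of $N_H(V(C))\cap X$. Concretely, take $X=\{x\}$, $Y=\{y_1,y_2,y_3\}$, with $H$ the path $x\,y_1\,y_2\,y_3$. Here $\Gamma(H,X)=H$ is a path, so $\ctw(H,X)=1$ and any optimal tree decomposition has bags of size~$2$; there is no bag containing all of $\{x,y_1,y_2,y_3\}$. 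Your Helly argument does not rescue this: the subtrees of $y_1$ and $y_3$ in that decomposition are disjoint, so the pairwise-intersection hypothesis of Helly fails.

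Because the claim fails, your pendant bags $B'_{t_{C,i}}=(N_H(V(C))\cap X)\cup\{(v,i):v\in V(C)\}$ can have size $|N_H(V(C))\cap X|+|V(C)|$, which in the example above is $4$, giving width $3$ while $\ctw(H,X)=1$. The fix the paper uses is precisely \emph{not} to collapse each component into a single bag: instead, for each component $C_i$ with $X$-boundary $\delta_i=N(C_i)\cap X$, take an optimal tree decomposition of the induced subgraph $\Gamma[\hat C_i]$ where $\hat C_i=C_i\cup\delta_i$; since $\delta_i$ is a clique in $\Gamma$, one can choose this decomposition so that $\delta_i$ is itself a bag. Make $\ell$ relabelled copies of that little tree decomposition (replacing each $y\in C_i$ by $(y,j)$ in the $j$-th copy), and glue each copy to an optimal tree decomposition of $\Gamma[X]$ at a bag containing $\delta_i$. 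Every bag used is a bag of some induced subgraph of $\Gamma$, hence of size at most $\ctw(H,X)+1$. The moral: only $\delta_i$ is guaranteed to fit in one bag; the rest of $C_i$ needs its own subtree.
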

\begin{proof}
Let $\Gamma=\Gamma(H,X)$ and
let $C_1,\dots,C_m$ be the connected components of $H[Y]$.
For each $i\in [m]$, 
let $\delta_i = N(C_i) \cap X$ and
let $\hat{C}_i = C_i \cup \delta_i$. 
Since $\delta_i$ is a clique in~$\Gamma$,
there is an optimal tree decomposition $(\calT_i,\calB_i)$ of~$\Gamma[\hat{C}_i]$ with $\delta_i$ as a bag.
For $j\in[\ell]$, let $(\calT_i^j,\calB_i^j)$ be a copy of $(\calT_i,\calB_i)$ where $B_i^j$ is the bag corresponding to~$\delta_i$.

Let $(\calT_X,\calB_X)$ be an optimal tree decomposition of $\Gamma[X]$. Choose $(\calT_X,\calB_X)$ such that there is a bag $B_{X,i}$ corresponding to each~$\delta_i$.

Finally, construct a tree decomposition $(\calT,\calB)$ of $F_\ell(H,X)$ by identifying $B_{X,i}$ and $B_i^j$ for each $i\in [m]$ and $j\in [\ell]$. This tree decomposition shows that $\tw(F_\ell(H,X))\leq \tw(\Gamma)$.
\end{proof}

The following lemma follows implicitly from~\cite{Bodlaender03}. We include a proof for completeness.

\begin{lemma}\label{lem:width_hard}
    Let $(H,X)$ be a conjunctive query. There exists a positive integer $\ell$ such that $\etw(H,X)\leq \tw (F_\ell(H,X))$.
\end{lemma}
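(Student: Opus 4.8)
The plan is to show that for a suitable choice of $\ell$, the graph $F_\ell(H,X)$ contains $\Gamma = \Gamma(H,X)$ as a minor (or at least has treewidth bounded below by $\tw(\Gamma)$). The intuition is that cloning each existentially quantified component many times forces the tree decomposition of $F_\ell(H,X)$ to ``see'' all the clique edges $E'$ that $\Gamma$ adds on $X$: if $C$ is a connected component of $H[Y]$ adjacent to both $u$ and $v$ in $X$, then in $F_\ell(H,X)$ there are $\ell$ vertex-disjoint paths (through the $\ell$ clones of $C$) connecting $u$ and $v$, so by sufficiently large $\ell$ no small-width decomposition can avoid making $u$ and $v$ appear in a common bag, which is exactly the edge $\{u,v\}$ demanded by $\Gamma$.

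\medskip

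First, I would recall (or invoke) the standard fact that treewidth is minor-monotone, and reduce the lemma to exhibiting $\Gamma$ as a minor of $F_\ell(H,X)$ for $\ell$ large enough. Second, I would construct the minor model explicitly: each vertex $x \in X$ is mapped to the branch set $\{x\}$; the original edges $E_X$ of $\Gamma$ inside $X$ are already present in $F_\ell(H,X)$ via $E_X$, so nothing is needed there. Third, for the added edges $E'$ of $\Gamma$: since these edges are ``covered'' by connected components of $H[Y]$, I would orient/assign to each component $C_i$ of $H[Y]$ one clone index $j(i) \in [\ell]$ so that distinct components responsible for distinct (or overlapping) edges get distinct clones — this is where $\ell$ needs to be large, specifically $\ell \ge m$ where $m$ is the number of components of $H[Y]$ suffices, since then we can simply use clone $i$ for component $C_i$. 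The copy $C_i \times \{i\}$ together with the edges from $E_{X,Y}$ joining it to $\delta_i = N(C_i)\cap X$ forms a connected subgraph of $F_\ell(H,X)$ touching exactly the vertices of $\delta_i$ in $X$; contracting $C_i \times \{i\}$ to a single vertex and then suppressing it (it is adjacent to all of $\delta_i$, so it witnesses a clique on $\delta_i$ as a minor — more precisely, contract it onto one vertex of $\delta_i$, which creates all edges from that vertex to the rest of $\delta_i$). Since $\delta_i$ is precisely the set on which $\Gamma$ puts a clique due to $C_i$, iterating this over all $i$ (using the disjoint clones) realizes every edge of $E'$, hence all of $\Gamma$, as a minor of $F_\ell(H,X)$.

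\medskip

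I expect the main obstacle to be the bookkeeping in the simultaneous contraction step: when several components $C_i, C_{i'}$ have overlapping $\delta_i \cap \delta_{i'}$, I must be careful that contracting $C_i \times \{i\}$ and $C_{i'} \times \{i'\}$ onto vertices of $X$ does not accidentally identify or destroy vertices I still need. The clean fix is to \emph{not} contract the $C_i$-copies onto $X$-vertices but rather to keep each contracted $C_i \times \{i\}$ as its own branch vertex and observe that $\Gamma[\hat C_i]$ — which contains the clique on $\delta_i$ — is a minor of the subgraph of $F_\ell(H,X)$ induced by $\delta_i \cup (C_i \times \{i\})$; since the clones are vertex-disjoint and share only vertices of $X$ (which are never contracted), the union of these minor models over all $i$, together with the identity on $X$, is a valid minor model of $\Gamma$ in $F_\ell(H,X)$. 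Minor-monotonicity of treewidth then gives $\etw(H,X) = \tw(\Gamma) \le \tw(F_\ell(H,X))$, completing the proof with $\ell = m$ (or any $\ell \ge m$). A secondary, purely technical point is the degenerate case where some $C_i$ is adjacent to at most one vertex of $X$ (so it contributes no edge to $E'$); such components can simply be ignored in the construction.
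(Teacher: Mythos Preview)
Your overall intuition---that the $\ell$ clones of each component $C_i$ give many vertex-disjoint $u$--$v$ paths and hence force $u,v$ into a common bag---is exactly right, and is also the engine behind the paper's proof. But the specific minor argument you propose has a genuine gap.

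The problem is your claim that a \emph{single} clone of $C_i$ suffices to realise the clique on $\delta_i$. Concretely, you assert that ``$\Gamma[\hat C_i]$ \ldots\ is a minor of the subgraph of $F_\ell(H,X)$ induced by $\delta_i \cup (C_i \times \{i\})$''. This is false. That induced subgraph of $F_\ell(H,X)$ is isomorphic to $H[\hat C_i]$, which has the same vertex set as $\Gamma[\hat C_i]$ but strictly fewer edges (it is missing precisely the clique edges on $\delta_i$), so it cannot contain $\Gamma[\hat C_i]$ as a minor. For a concrete instance, take the $3$-star query $(S_3,X_3)$: here $C_1=\{y\}$, $\delta_1=\{x_1,x_2,x_3\}$, and $\Gamma[\hat C_1]=K_4$, while the subgraph on $\{x_1,x_2,x_3,(y,i)\}$ is just $K_{1,3}$. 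Contracting the single clone $(y,i)$ onto, say, $x_1$ produces edges $x_1x_2$ and $x_1x_3$ but \emph{not} $x_2x_3$. One clone per component is therefore not enough, and your bound $\ell=m$ fails.

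The minor route can be repaired, but you need roughly $|X|$ clones rather than one per component: give each $x_p\in X$ its own clone index $p$, take the branch set of $x_p$ to be $\{x_p\}\cup\bigcup_{i:\,x_p\in\delta_i}(V(C_i)\times\{p\})$, and use one further clone index for the $Y$-vertices of $\Gamma$ (which you also must model, since $V(\Gamma)=V(H)=X\cup Y$, a point your write-up leaves implicit). Then for any edge $\{x_p,x_{p'}\}\in E'$ witnessed by $C_i$, the branch set of $x_p$ contains a clone of $C_i$ and hence a neighbour of $x_{p'}$. This gives $\Gamma$ as a minor of $F_\ell(H,X)$ for $\ell\ge |X|+1$.

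For comparison, the paper avoids minors entirely. It takes $\ell>|V(H)|+1$, starts from an optimal tree decomposition of $F_\ell(H,X)$, and projects each bag down via $\gamma$ (keeping only the vertices in $X$ and the first clone layer). To verify the edge condition (T3) for an edge $\{v_1,v_2\}\in E'$, it uses precisely your disjoint-paths observation: there are $\ell$ vertex-disjoint $v_1$--$v_2$ paths in $F_\ell(H,X)$, so any separator in the tree decomposition separating $v_1$ from $v_2$ would have size at least $\ell>|V(H)|+1$, contradicting Lemma~\ref{lem:width} which bounds $\tw(F_\ell(H,X))$ by $|V(H)|$. Hence some bag already contains both $v_1$ and $v_2$.
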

\begin{proof}
Choose any $\ell> |V(H)|+1$, and let $\gamma=\gamma[H,X,\ell]$.
Let $(T,\mathcal{B})$, with $\mathcal{B}=\{B_t\}_{t\in V(T)}$ be an optimal tree decomposition of $F_\ell(H,X)$. We 
prove the lemma by
constructing a tree decomposition $(T',\mathcal{B}')$
of $\Gamma(X,H)$ with
with width at most the width of $(T,\mathcal{B})$.
Let $T'=T$. For each $t\in V(T)$, define
$B'_t = \{ \gamma(v) \mid 
\mbox{$v\in B_t$ and $v$ is not of the form $(v,i)$ for $i>1$} \}$.

We claim that $(T',\mathcal{B}')$ is a tree-decomposition of $\Gamma(H,X)$. This claim proves the lemma, since the width of $(T',\mathcal{B}')$ is clearly at most the width of $(T,\mathcal{B})$, and since the extension width of $(H,X)$ is, by definition, the treewidth of $\Gamma(H,X)$. Hence it remains to prove our claim by establishing (T1), (T2), and (T3) from Definition~\ref{def:treewidth}.
In each case, for each $v\in V(\Gamma(H,X))$, let $v'=v$ if $v\in X$ and let $v'=(v,1)$ if $v \in Y$.

\begin{itemize}
\item[(T1)] Consider $v\in V(\Gamma(H,X))=V(H)$.  Then $v'\in V(F_\ell(H,X))$ and thus there is a bag $B_t$ with $v'\in B_t$. Since $v'\neq (v,i)$ for $i>1$ and since $\gamma(v')=v$, $v\in B'_t$.
\item[(T2)] Consider $v\in V(\Gamma(H,X))$ and let $s$ and $t$ be any pair of vertices of $T$ such that $v\in B'_s$ and $v\in B'_t$. We show that there is an $s$-$t$-path $P$ in $T$ such that $v\in B'_u$ for each $u\in P$. 
Since $v\in B'_s$ and $v\in B'_t$, we have $v' \in B_s$ and $v' \in B_t$. Using that $(T,\mathcal{B})$ is a tree-decomposition, there is a path $P$ in $T$ such that $v'\in B_u$, and thus $v\in B'_u$, for each $u\in P$.    \item[(T3)] Consider $e=\{v_1,v_2\}\in E(\Gamma(H,X))$.  Recall from Definition~\ref{def:ew} that $E(\Gamma(H,X)) = E(H)\cup E'$, where $E'$ contains all $\{u,v\}$ such that there is a connected component $C$ of $H[Y]$ that is adjacent to both $u$ and $v$ in $H$.
    
We distinguish between two cases. For the easy case, suppose that $e\in E(H)$. Then $e'=\{v'_1,v'_2\}\in E(F_\ell(H,X))$. Thus there is a bag $B_t\in \mathcal{B}$ with $e'\subseteq B_t$. Consequently, $e\in B'_t$.

For the more difficult case, suppose $e\in E'$. Then $v_1,v_2\in X$ and thus $v_1=v'_1$ and $v_2=v'_2$. Moreover, there is a connected component $C$ of $H[Y]$ that is adjacent to both $v_1$ and $v_2$. Since there are $\ell$ copies of $C$ in $F_\ell(H,X)$, there are at least $\ell$ vertex disjoint paths from $v_1$ to $v_2$ in $F_\ell(H,X)$. Now, using known separation properties of tree decompositions (see for instance \cite[Lemma 5]{DourisboureG07}), there is either a bag $B_t$ of $(T,\mathcal{B})$ that contains $v_1$ and $v_2$ --- in this case, we are done --- or there is an edge $e=\{s,t\}$ of $T$ such that $S:=B_s \cap B_t$ separates $v_1$ and $v_2$, that is, $S$ does not contain $v_1$ and $v_2$, and every $v_1$-$v_2$-path of $F_\ell(H,X)$ intersects $S$. This requires $|S|\geq \ell$ and thus $|B_s|\geq \ell$ (and $|B_t| \geq \ell$). Consequently, using the fact that $(T,\mathcal{B})$ is optimal, the treewidth of $F_\ell(H,X)$ is at least $\ell> |V(H)|+1$, which yields a contradiction, since by Lemma~\ref{lem:width} we have \[\tw(F_\ell(H,X))\leq \ctw(H,X) = \tw(\Gamma(H,X))\leq |V(\Gamma(H,X))| = |V(H)| \,.\]
Thus $v_1$ and $v_2$ must both be contained in some bag $B_t$, and thus also in the bag $B'_t$.
\end{itemize}
With (T1)-(T3) established, the proof is concluded.
\end{proof}

In combination,  Lemmas~\ref{lem:width} and~\ref{lem:width_hard} provide an alternative characterisation of the extension width, which we will be using for the remainder of the paper.

\begin{corollary}\label{cor:ewidth_alternative}
Let $(H,X)$ be a conjunctive query. Then
$ \ctw(H,X)=\max\{\tw(F_\ell(H,X))\mid \ell \in \mathbb{Z}_{>0}\}$.
\end{corollary}
\begin{proof}
    The corollary follows immediately from Lemmas~\ref{lem:width} and~\ref{lem:width_hard}.
\end{proof}

\subsection{Weisfeiler-Leman Equivalence, Invariance and Dimension}

In order to make this work self-contained, we will use the characterisation of Weifeiler-Leman (from now on just ``WL'') equivalence via homomorphism indistinguishability due to Dvor{\'{a}}k~\cite{Dvorak10} and Dell, Grohe and Rattan~\cite{DellGR18}. We recommend the survey of Arvind for a short but comprehensive introduction to the classical definition using the WL-algorithm~\cite{Arvind16}.

\begin{definition}[WL-Equivalence]\label{def:wl_equivalence}
    Let $k$ be a positive integer. Two graphs $G$ and $G'$ are $k$\emph{-WL-equivalent}, denoted by $G\cong_k G'$, if for every graph $H$ of treewidth at most $k$ we have $|\Hom(H,G)|=|\Hom(H,G')|$.
\end{definition}

Note that WL-equivalence is monotone in the sense that $G\cong_k G'$ implies 
that for every $k'\leq k$,
$G\cong_{k'} G'$. 
A graph parameter $f$ is called $k$\emph{-WL-invariant} if, 
for every pair of graphs $G,G'$ with $G \cong_k G'$,
$f(G)=f(G')$. Observe that, for $k \geq k'$, every $k'$ -WL-invariant graph parameter is also $k$-WL-invariant.
Thus, following the definition of Arvind et al.\ \cite{ArvindFKV22}, we define the \emph{WL-dimension} of a graph parameter $f$ as the minimum $k$ for which $f$ is $k$-WL-invariant, if such a $k$ exists, and $\infty$ otherwise.

\begin{definition}[WL-dimension of conjunctive queries]\label{def:WLdim}
Let $(H,X)$ be a conjunctive query. The \emph{WL-dimension} of $(H,X)$ is the WL-dimension of the function $G \mapsto |\Ans((H,X),G)|$. 
\end{definition}

\section{Upper Bound on the WL-Dimension}\label{sec:upper_bound}

The goal of this section is to prove the following upper bound.

\begin{theorem}\label{thm:upper_bound}
    Let $(H,X)$ be a conjunctive query. Then the WL-dimension of $(H,X)$ is at most $\ctw(H,X)$.
\end{theorem}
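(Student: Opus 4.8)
The plan is to show directly that the graph parameter $G\mapsto|\Ans((H,X),G)|$ is $k$-WL-invariant for $k:=\ctw(H,X)$; by Definition~\ref{def:WLdim} this gives the claimed upper bound. So fix graphs $G,G'$ with $G\cong_k G'$ and aim to prove $|\Ans((H,X),G)|=|\Ans((H,X),G')|$.

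The first and only substantive step is a product formula for homomorphism counts into the $\ell$-copies. Write $E_X=E(H)\cap X^2$, let $C_1,\dots,C_m$ be the connected components of $H[Y]$, and for each $i$ put $\delta_i=N(C_i)\cap X$ and $\hat C_i=C_i\cup\delta_i$. For a map $a\colon X\to V(G)$ respecting $E_X$, let $e_i(a)$ be the number of extensions of $a|_{\delta_i}$ to a homomorphism from $H[\hat C_i]$ to $G$. Unpacking the definition of $F_\ell(H,X)$, a homomorphism $F_\ell(H,X)\to G$ is exactly a choice of such an $a$ together with, for each of the $\ell$ disjoint copies of $Y$, an extension of $a$ to a homomorphism $H\to G$; since these extensions decompose independently over $C_1,\dots,C_m$, there are $\prod_{i=1}^m e_i(a)$ of them per copy. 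Hence, for every $\ell\ge 1$,
\[
|\Hom(F_\ell(H,X),G)| \;=\; \sum_{a}\Big(\prod_{i=1}^{m} e_i(a)\Big)^{\ell},
\]
where the sum ranges over all $a\colon X\to V(G)$ respecting $E_X$. In particular the right-hand side is the $\ell$-th power sum of the finite multiset $\mathcal M_G:=\{\!\{\,\prod_i e_i(a)\ :\ a\text{ respects }E_X,\ \prod_i e_i(a)\ge 1\,\}\!\}$ of positive integers (terms with a zero factor contribute nothing when $\ell\ge1$). The key observation is that $a\in\Ans((H,X),G)$ iff $a$ respects $E_X$ and $e_i(a)\ge1$ for all $i$, i.e.\ iff $\prod_i e_i(a)\ge1$; consequently $|\Ans((H,X),G)|=|\mathcal M_G|$ (counted with multiplicity), and likewise for $G'$.

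Now assemble. By Lemma~\ref{lem:width}, $\tw(F_\ell(H,X))\le\ctw(H,X)=k$ for every $\ell\ge1$, so $G\cong_k G'$ yields $|\Hom(F_\ell(H,X),G)|=|\Hom(F_\ell(H,X),G')|$ for all $\ell\ge1$; by the product formula, $\mathcal M_G$ and $\mathcal M_{G'}$ have equal power sums. It is classical that the power sums $(p_\ell)_{\ell\ge1}$ of a finite multiset of positive integers determine the multiset: grouping the values $v$ with multiplicities $n_v$, the identities $p_\ell=\sum_v n_v v^\ell$ form a Vandermonde-type system in the $n_v$, and in fact $p_1,\dots,p_M$ with $M:=\max(|V(G)|,|V(G')|)^{|Y|}$ already suffice, since $M$ bounds every value $\prod_i e_i(a)$ (for the empty case $V(G)=V(G')=\varnothing$ the statement is trivial). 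Thus $\mathcal M_G=\mathcal M_{G'}$, so $|\Ans((H,X),G)|=|\mathcal M_G|=|\mathcal M_{G'}|=|\Ans((H,X),G')|$, as required.

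I do not expect a genuine obstacle: this is the ``easy'' direction, and all the content sits in the product formula together with the remark that counting answers equals counting the nonzero terms of that sum — equivalently, that $|\Ans((H,X),G)|$ is the size of the multiset whose power sums are the homomorphism counts into the graphs $F_\ell(H,X)$, which have small treewidth by Lemma~\ref{lem:width}. The remaining pieces (the bookkeeping verifying the formula from the definition of $F_\ell$, and the fact that power sums pin down a finite multiset) are routine; degenerate cases such as $X=\varnothing$ or $H$ edgeless are subsumed by reading $\prod_i e_i(\varnothing)=|\Hom(H,G)|$ and noting that $F_\ell(H,X)$ is then $\ell$ disjoint copies of $H$.
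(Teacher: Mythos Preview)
Your proof is correct and follows essentially the same route as the paper: both arguments use Lemma~\ref{lem:width} to bound $\tw(F_\ell(H,X))\le\ctw(H,X)$, hence obtain $|\Hom(F_\ell(H,X),G)|=|\Hom(F_\ell(H,X),G')|$ for all $\ell$, and then recover $|\Ans((H,X),\cdot)|$ by a Vandermonde/power-sum interpolation on the number-of-extensions statistic (your $\prod_i e_i(a)$ is exactly the paper's $|\Ext(a)|$, so the component decomposition is extra bookkeeping rather than a different idea). The only difference is packaging: the paper states the interpolation as a separate lemma (Lemma~\ref{lem:gen}) grouping answers by their full extension set, while you phrase it as ``power sums of a finite multiset of positive integers determine the multiset''.
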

For the proof of Theorem~\ref{thm:upper_bound} we will use the following interpolation argument. 
\begin{lemma}\label{lem:gen}
Let $(H,X)$ be a conjunctive query.
Let $G_1$ and $G_2$ be graphs.
Suppose that, for all positive integers $\ell$,  $|\Hom(F_\ell(H,X),G_1)| = |\Hom(F_\ell(H,X),G_2)|$. Then  $|\Ans((H,X) ,G_1)| = |\Ans((H,X),G_2)|$.
\end{lemma}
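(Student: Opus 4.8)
The statement asserts that if $F_\ell(H,X)$ has the same number of homomorphisms into $G_1$ and $G_2$ for every positive integer $\ell$, then $(H,X)$ has the same number of answers in $G_1$ and $G_2$. The natural route is to express $|\Hom(F_\ell(H,X),G)|$ as a polynomial in $\ell$ whose coefficients are determined by the answer-counting data, and then to invoke polynomial interpolation. First I would fix $G$ and analyse $\Hom(F_\ell(H,X),G)$ combinatorially. A homomorphism $g\colon F_\ell(H,X)\to G$ restricts to a map $a = g|_X\colon X \to V(G)$; because the $\ell$ copies of the components of $H[Y]$ attach to $X$ only through their boundary vertices in $X$, the extensions on each of the $\ell$ copies of a component $C_j$ of $H[Y]$ are independent of one another once $a$ is fixed. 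So, for a fixed $a$, the number of ways to extend $g$ to all of $F_\ell(H,X)$ equals $\prod_{j} \bigl(e_j(a)\bigr)^{\ell}$, where $e_j(a)$ is the number of homomorphisms from $\widehat{C}_j$ (the component $C_j$ together with its $X$-boundary) to $G$ extending $a$ restricted to that boundary, and the edges $E_X$ inside $X$ just impose that $a$ itself be a homomorphism on $H[X]$. Hence
\[
|\Hom(F_\ell(H,X),G)| \;=\; \sum_{a\colon X\to V(G)} \mathbf{1}[a \text{ hom.\ on } H[X]] \cdot \prod_{j} e_j(a)^{\ell}.
\]

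Next I would group the terms of this sum by the value $t(a) := \prod_j e_j(a) \in \mathbb{Z}_{\ge 0}$. Writing $c_t$ for the number of maps $a$ with $t(a) = t$ that are homomorphisms on $H[X]$, we get
\[
|\Hom(F_\ell(H,X),G)| \;=\; \sum_{t \ge 0} c_t \, t^{\ell},
\]
a finite sum since $t(a)$ is bounded by $|V(G)|^{|Y|}$ over the finitely many $a$. Crucially, $|\Ans((H,X),G)| = \sum_{t \ge 1} c_t$: an answer is exactly a map $a$ that is a homomorphism on $H[X]$ and such that every component of $H[Y]$ can be extended, i.e.\ $e_j(a) \ge 1$ for all $j$, equivalently $t(a) \ge 1$. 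So it suffices to recover the coefficients $c_t$ for $t \ge 1$ from the values $|\Hom(F_\ell(H,X),G)|$, $\ell = 1, 2, \dots$. Since the function $\ell \mapsto \sum_{t\ge 0} c_t t^\ell$ is a generalised Dirichlet-type / exponential sum with distinct bases $t$, standard linear-algebra (a Vandermonde-type argument over finitely many distinct bases $t \in \{0,1,2,\dots\}$, using sufficiently many values of $\ell$) shows that the multiset $\{(t, c_t) : c_t \ne 0\}$ is uniquely determined by the sequence $(|\Hom(F_\ell(H,X),G)|)_{\ell \ge 1}$; in particular $\sum_{t \ge 1} c_t$ is determined.

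Putting it together: by hypothesis the sequences agree for $G_1$ and $G_2$, so the coefficient data $(c_t)$ agrees, so $\sum_{t\ge 1} c_t$ agrees, giving $|\Ans((H,X),G_1)| = |\Ans((H,X),G_2)|$. \textbf{The main obstacle} I anticipate is making the "independence of copies" step fully rigorous — carefully justifying the product formula $\prod_j e_j(a)^\ell$ by checking that the edge set $E_X \cup E_{X,Y} \cup E_Y$ of $F_\ell(H,X)$ really does decouple into the $X$-part plus $\ell$ disjoint copies of each $\widehat{C}_j$ glued only along $X$, and that no edge of $H$ is lost or added in the process. A secondary, milder point is handling the base $t = 0$ cleanly (it contributes $c_0 \cdot 0^\ell = 0$ for all $\ell \ge 1$, so it is simply invisible and harmless) and noting that the interpolation only needs finitely many values of $\ell$ even though the hypothesis gives us all of them.
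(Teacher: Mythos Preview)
Your proposal is correct and follows essentially the same route as the paper: both arguments observe that a homomorphism from $F_\ell(H,X)$ to $G$ is determined by its restriction $a$ to $X$ together with $\ell$ independent choices of extension over $Y$, yielding $|\Hom(F_\ell(H,X),G)| = \sum_a (\text{\#extensions of }a)^\ell$, then group by the base of the exponential and apply a Vandermonde argument to recover $|\Ans((H,X),G)| = \#\{a : \text{\#extensions} \ge 1\}$. The only cosmetic difference is that you further factor the extension count as $\prod_j e_j(a)$ over the components of $H[Y]$; the paper works directly with $|\Ext(a)| = |\{\rho\colon Y\to V(G)\mid a\cup\rho\in\Hom(H,G)\}|$, which is the same number and makes the independence-of-copies step a one-liner (the $\ell$ copies of \emph{all} of $Y$ are disjoint and glued only along $X$), so your component detour is valid but unnecessary.
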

\begin{proof}
Let $G$ be a graph and let $\sigma\colon X \to V(G)$. Define
\[\Ext(\sigma)=\{\rho\colon Y \to V(G) \mid \sigma \cup \rho \in \Hom(H,G) \}.\]
Let $\Omega$ be the set of functions from~$Y$ to~$V(G)$ and consider any $\Upsilon \subseteq \Omega$.  Define
\begin{align*}
    H^G(\Upsilon) &= \{ h\in \Ans((H,X),G)\mid \Ext(h)= \Upsilon \}\\ 
    \hat{H}_\ell^G(\Upsilon) &= \{ h\in \Hom(F_\ell(H,X),G)\mid \Ext(h|_{X}) = \Upsilon \} 
\end{align*} 
First observe that for any $\Upsilon\subseteq \Omega$, $|\hat{H}_\ell^G(\Upsilon)| = |H^G(\Upsilon)|\cdot |\Upsilon|^\ell$.
Moreover, 
\begin{align*}
    |\Ans((H,X),G)| & = \sum_{\emptyset \neq \Upsilon \subseteq \Omega} |H^G(\Upsilon)|, \mbox{and}\\
    |\Hom(F_\ell(H,X),G)| & = \sum_{\emptyset \neq \Upsilon \subseteq \Omega} |\hat{H}^G_\ell(\Upsilon)|.
\end{align*}
Now let $G_1$ and $G_2$ be graphs with $|\Hom(F_\ell(H,X),G_1)| = |\Hom(F_\ell(H,X),G_2)|$ for all positive integers $\ell$. 
Let $\Omega_1$ be the set of functions from~$Y$ to~$V(G_1)$ and let $\Omega_2$ be the set of functions from~$Y$ to~$V(G_2)$.
Let $\hat{n}=\max\{|\Omega_1|,|\Omega_2|\}$. For every positive integer~$\ell$,
\begin{align*}
    ~~&|\Hom(F_\ell(H,X),G_1)| = |\Hom(F_\ell(H,X),G_2)| \\
    \Leftrightarrow & \sum_{\emptyset \neq \Upsilon \subseteq \Omega_1} |\hat{H}^{G_1}_\ell(\Upsilon)| - \sum_{\emptyset \neq \Upsilon \subseteq \Omega_2} |\hat{H}^{G_2}_\ell(\Upsilon)| = 0\\
    \Leftrightarrow & \sum_{\emptyset \neq \Upsilon \subseteq \Omega_1} |{H}^{G_1}(\Upsilon)| \cdot |\Upsilon|^\ell - \sum_{\emptyset \neq \Upsilon \subseteq \Omega_2)} |{H}^{G_2}(\Upsilon)|\cdot |\Upsilon|^\ell = 0 \\
    \Leftrightarrow & \sum_{i=1}^{\hat{n}} i^\ell \cdot \Bigg(\sum_{\substack{ \Upsilon \subseteq \Omega_1\\ |\Upsilon|=i}} |{H}^{G_1}(\Upsilon)| - \sum_{\substack{ \Upsilon \subseteq \Omega_2\\ |\Upsilon|=i}} |{H}^{G_2}(\Upsilon)| \Bigg) = 0 
\end{align*}
Note that this yields a system of linear equations. For each positive integer~$\ell$, we have the equation
$ \sum_{i=1}^{\hat{n}} c_i \cdot i^\ell =0$
where
\[c_i = \Bigg(\sum_{\substack{ \Upsilon \subseteq \Omega_1\\ |\Upsilon|=i}} |{H}^{G_1}(\Upsilon)| - \sum_{\substack{ \Upsilon \subseteq \Omega_2\\ |\Upsilon|=i}} |{H}^{G_2}(\Upsilon)| \Bigg). \]
The  matrix corresponding to this system of
equations is a Vandermonde matrix, so it is invertible. Thus $c_i = 0$ for all $i\in \{1,\dots,n\}$. Therefore
\begin{align*}
    |\Ans((H,X),G_1)| & = \sum_{\emptyset \neq \Upsilon \subseteq \Omega_1} |H^{G_1}(\Upsilon)| = \sum_{i=1}^{\hat{n}} \sum_{\substack{\Upsilon\subseteq \Omega_1\\ |\Upsilon|=i}} |{H}^{G_1}(\Upsilon)| \\
    ~&= \sum_{i=1}^{\hat{n}} \sum_{\substack{\Upsilon\subseteq \Omega_2\\ |\Upsilon|=i}} |{H}^{G_2}(\Upsilon)| = \sum_{\emptyset \neq \Upsilon \subseteq \Omega_2} |H^{G_2}(\Upsilon)|=|\Ans((H,X),G_2)|.
\end{align*}
\end{proof}

The proof of Lemma~\ref{lem:gen} immediately implies the following observation, Observation~\ref{obs_for_intro}. Note that the graphs $F_\ell(H,X)$ that are referred to in Lemma~\ref{lem:gen}
have treewidth at most $\etw(H,X)$ by Lemma~\ref{lem:width}. In Observation~\ref{obs_for_intro} there are two possibilities. If we start with a query $(H,X)$ that is counting minimal, we can apply directly the proof of Lemma~\ref{lem:gen}. Otherwise, we apply the proof of Lemma~\ref{lem:gen} to a counting-equivalent counting-minimal query. 
\begin{observation}\label{obs_for_intro}
    Let $(H,X)$ be a conjunctive query of semantic extension width $k$ and let $G$ be a graph. There is a finite sequence of graphs $F_1,\dots,F_n$ of treewidth at most $k$, such that $|\Ans((H,X),G)|$ can be computed via Gaussian elimination from the homomorphism counts $|\Hom(F_\ell,G)|$ for $\ell\in \{1,\dots,n\}$.
\end{observation}

\begin{proof}[Proof of Theorem~\ref{thm:upper_bound}]
Let $(H,X)$ be a conjunctive query.
Let $k=\ctw(H,X)$. 
We wish to show that the WL-dimension of $(H,X)$ is at most~$k$ which is equivalent to showing that
the function $G \mapsto |\Ans((H,X),G)|$ is $k$-WL invariant.
To do this, we
show that, for any pair of graphs $G$ and $G'$ with $G \cong_k G'$,  $|\Ans((H,X),G)|=|\Ans((H,X),G')|$.

Consider $G$ and $G'$ with $G \cong_k G'$.
This implies that for every graph $H$ with treewidth at most $k$,  
$|\Hom(H,G)|=|\Hom(H,G')|$. 
From the definition of $\ctw(H,X)$ and 
Corollary~\ref{cor:ewidth_alternative},  
for every positive integer~$\ell$, 
the treewidth of $F_\ell(H,X)$ is at most~$k$.
Thus,
$|\Hom(F_\ell(H,X),G)|=|\Hom(F_\ell(H,X),G')|$. 
The claim then follows directly by Lemma~\ref{lem:gen}.
\end{proof}

\section{Lower Bound on the WL-Dimension}\label{sec:lower_bound}
The goal of this section, which is the technical heart of the paper, is the proof of the following lower bound.

\begin{restatable}{theorem}{thmlowerbound}\label{thm:lower_bound} 
Let $(H,X)$ be a counting minimal conjunctive query such that $H$ is connected, and $ \emptyset \subsetneq X \subsetneq V(H) $. Then the WL-dimension of $(H,X)$ is at least $\ctw(H,X)$.
\end{restatable}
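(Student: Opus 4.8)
The strategy is to exhibit, for $k = \ctw(H,X)$, a pair of graphs $G$ and $\hat G$ with $G \cong_{k-1} \hat G$ but $|\Ans((H,X),G)| \neq |\Ans((H,X),\hat G)|$, which immediately shows the WL-dimension is at least $k$. By Corollary~\ref{cor:ewidth_alternative} there is a positive integer $\ell$ with $\tw(F_\ell(H,X)) = k$; write $F = F_\ell(H,X)$ and let $\gamma = \gamma[H,X,\ell]\colon V(F)\to V(H)$ be the natural homomorphism from Definition~\ref{def:gamma}. The idea is to apply a Cai--Fürer--Immerman-style gadget construction to $F$: since $F$ has treewidth exactly $k$, the two graphs produced (here I would use Roberson's refinement of the CFI construction, as advertised in the abstract) are indistinguishable by the $(k-1)$-dimensional WL-algorithm, i.e.\ $G := \chi(F) \cong_{k-1} \hat\chi(F) =: \hat G$. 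What remains — and this is the technical heart — is to show that $(H,X)$ has different answer counts on $\chi(F)$ and $\hat\chi(F)$.

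\textbf{Key steps.}
First I would recall precisely the CFI-type construction and the two facts I need from it: (i) a graph of treewidth $\geq k$ yields a $(k-1)$-WL-indistinguishable pair, and (ii) a local ``flip'' description of when a homomorphism/partial-homomorphism from a pattern into $\chi(F)$ versus $\hat\chi(F)$ exists, controlled by a parity/cohomology class attached to the pattern. Second, I would translate the question ``how many answers does $(H,X)$ have in $\chi(F)$'' into a statement about homomorphisms from $H$ (and its $\ell$-copies $F_{\ell'}(H,X)$) into the gadget graph, using the interpolation machinery of Lemma~\ref{lem:gen} in reverse: it suffices to find \emph{some} $\ell'$ for which $|\Hom(F_{\ell'}(H,X), \chi(F))| \neq |\Hom(F_{\ell'}(H,X), \hat\chi(F))|$ — actually one must be more careful, since Lemma~\ref{lem:gen} only gives the forward direction, so instead I expect to argue directly on $\Ans$ by decomposing an answer according to how each connected component of $H[Y]$ maps into the gadget and tracking the induced flip-parity, crucially using that $F$ was built by cloning exactly the $Y$-components so that its treewidth-$k$ structure is ``witnessed'' by a subpattern that behaves like $H$ on the $X$-part. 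Third, using counting minimality of $(H,X)$ (which forbids collapsing components and is exactly what prevents the parity from cancelling), I would show the net contribution of the flip is nonzero, giving $|\Ans((H,X),\chi(F))| \neq |\Ans((H,X),\hat\chi(F))|$.

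\textbf{Main obstacle.}
The hard part is the third step: controlling the parity/flip contribution of an answer of $(H,X)$ in the CFI gadget over $F$. Unlike the classical homomorphism-counting CFI argument (where one pattern of treewidth $k$ suffices and the count difference is read off directly), here an ``answer'' only pins down the $X$-variables and leaves the $Y$-variables free to range over all extensions, so many homomorphisms of varying flip-parity collapse onto the same answer; I must show these do not cancel. This is where counting minimality and the specific cloning structure of $F_\ell(H,X)$ — in particular the fact (Lemma~\ref{lem:width_hard}) that a large enough $\ell$ forces the treewidth-$k$ bottleneck to sit inside the $X$-part together with the $Y$-components attached to it — should be leveraged: the treewidth-$k$ ``core'' of $F$ must be realised by a rigid enough sub-structure of $H$ that the flip obstruction survives the projection from homomorphisms of $F_\ell(H,X)$ down to answers of $(H,X)$. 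I would expect to need a careful choice of how $\chi$ attaches gadgets (Roberson's version, which handles the edge-coloured/multigraph bookkeeping cleanly) and an inductive or linear-algebraic argument over the $Y$-components to finish.
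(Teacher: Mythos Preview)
Your high-level outline is on the right track: choose $\ell$ so that $F=F_\ell(H,X)$ has treewidth $k=\ctw(H,X)$, apply Roberson's CFI construction to $F$, and exploit counting minimality together with a parity analysis on how the components of $H[Y]$ land in the gadget. That is precisely the skeleton of the paper's argument.

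However, there is a genuine gap in your plan at the point where you write ``$G := \chi(F) \cong_{k-1} \hat\chi(F) =: \hat G$'' and propose to show that $|\Ans((H,X),\chi(F))| \neq |\Ans((H,X),\hat\chi(F))|$ directly. The paper does \emph{not} establish this, and there is no reason it should hold in general. The issue is exactly the one you flag but do not resolve: an answer $a\in\Ans((H,X),\chi(F,W))$ induces, via the $H$-colouring $c=\gamma\circ\pi_1$, some map $\tau\colon X\to V(H)$, and the total count decomposes as $\sum_\tau |\Ans^\tau((H,X),(\chi(F,W),c))|$. The parity machinery only controls the terms where $\tau$ is a bijection of $X$ (and, via the automorphism trick using counting minimality, reduces these to the single term $\tau=\mathsf{id}$). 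The contributions from non-bijective $\tau$ are not controlled, and could in principle cancel the bijective signal.

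The paper's way around this is a step you are missing entirely: it \emph{clones the colour classes of $X$ inside the CFI graph} with multiplicities $\vec{z}=(z_1,\dots,z_k)$, obtaining graphs $\newgraph(\chi(F,W),F,\pi_1,\vec v,\vec z)$. Cloning preserves $(k-1)$-WL-equivalence (this uses Roberson's inequality termwise), and it scales each $\tau$-term by $\prod_i z_i^{d_i^\tau}$ where $d_i^\tau=|\tau^{-1}(x_i)|$. A multivariate polynomial interpolation over $\vec z$ then isolates the monomial $z_1\cdots z_k$, whose coefficient is exactly the sum over bijective $\tau$; this is how the non-bijective noise is stripped away. Your idea of ``using Lemma~\ref{lem:gen} in reverse'' via varying $\ell'$ is the wrong interpolation variable and, as you note yourself, only gives the forward implication.

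Two smaller points: you should take $\ell$ \emph{odd} (the parity count of extendable assignments on the $\{x_1\}$-twisted side relies on this), and you should make $x_1$ adjacent to $Y$. Both are easy to arrange but necessary for the combinatorial endgame.
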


In order to prove Theorem~\ref{thm:lower_bound}, we will find graphs $G$ and $G'$ such that $G\cong_{k-1} G'$, where $k=\ctw(H,X)$, and $|\Ans((H,X),G)|\neq |\Ans((H,X),G')|$. As explained in the introduction, we will rely on a recently developed version of the CFI graphs of Cai, F\"urer and Immerman~\cite{CaiFI92}. The following subsection will provide a concise and self-contained explanation of the construction and properties of CFI graphs.

\subsection{CFI Graphs}
We start with a formal definition of a well-known version of CFI graphs~\cite{Furer01} (see also~\cite{roberson2022oddomorphisms}).

\begin{definition}[CFI graphs, $\chi(G,W)$] \label{def:B}
Let $G$ be a graph and let $W$ be a subset of~$V(G)$. For every vertex $w$ of~$G$,
let $\delta_{w,W} = |\{w\} \cap W|$. The graph $\chi(G,W)$ is defined as follows.
The vertex set is $V(\chi(G,W)) := \{ (w,S) \mid w\in V(G), S \subseteq N_G(w), \delta_{w,W} \equiv |S| \pmod 2\}$.
The edge set is $$ E(\chi(G,W)) := \{ \{ (w,S), (w',S') \} \mid 
\mbox{$\{w,w'\} \in E(G)$ and
$w'\in S \Longleftrightarrow w\in S'$} \}.$$
\end{definition}

For any fixed $G$, the isomorphism class of $\chi(G,W)$ depends only on the parity of $|W|$:
\begin{lemma} [Lemma 3.2 in \cite{roberson2022oddomorphisms}]\label{lem:iso} \label{lem:cfi_propsB}
Let $G$ be a connected graph 
and let $W,W' \subseteq V(G)$. Then $\chi(G,W) \cong \chi(G,W')$ if
and only if   $|W|\equiv |W'|\pmod 2$.
\end{lemma}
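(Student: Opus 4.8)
## Proof proposal for Lemma \ref{lem:cfi_propsB}

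The plan is to prove the two directions separately, with the ``only if'' direction being the routine one and the ``if'' direction carrying the real content. For the \emph{only if} direction, I would exhibit an isomorphism-invariant of $\chi(G,W)$ that detects the parity of $|W|$. The natural candidate comes from the ``twists'' built into the construction: observe that each vertex $w$ of $G$ gives rise to the fibre $\{(w,S) : S \subseteq N_G(w),\ \delta_{w,W}\equiv |S|\bmod 2\}$, a set of size $2^{\deg(w)-1}$, and the fibres partition $V(\chi(G,W))$ in a way that is recoverable from the graph structure (two vertices lie in the same fibre iff they have the same neighbourhood pattern in a suitable sense, or more robustly, one argues via the known rigidity of the fibre structure for connected $G$). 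Summing $\delta_{w,W}$ over all $w$ gives $|W|$, so if an isomorphism $\chi(G,W)\cong\chi(G,W')$ existed with $|W|\not\equiv|W'|$ we would get a contradiction; concretely, the cleanest route is to observe that $\chi(G,W)$ and $\chi(G,W')$ differ, up to the canonical ``flip'' relabelling at each vertex, only by a cocycle whose total parity equals $|W|+|W'|\bmod 2$, and a genuine isomorphism would force this cocycle to be a coboundary, i.e. have even total parity.

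For the \emph{if} direction — the substantive one — suppose $|W|\equiv|W'|\pmod 2$; I want to construct an explicit isomorphism $\chi(G,W)\to\chi(G,W')$. The key idea is the standard CFI move: since $|W|+|W'|$ is even, I can write the symmetric difference $W\triangle W'$ as a disjoint union of pairs $\{a_1,b_1\},\dots,\{a_t,b_t\}$ of vertices of $G$ (if $W=W'$ there is nothing to do). For each such pair $\{a_j,b_j\}$, because $G$ is \emph{connected}, fix a path $P_j$ from $a_j$ to $b_j$ in $G$. Define a map on $\chi(G,W)$ by ``toggling along the path'': for each edge $\{w,w'\}$ on $P_j$, flip the membership of $w'$ in the $S$-component of the $w$-fibre and symmetrically; formally, for a vertex $(w,S)$ let $S \mapsto S \triangle (N_G(w)\cap \{\text{neighbours of }w\text{ along the paths through }w\})$. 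One checks that this changes the parity constraint $\delta_{w,W}\equiv|S|\bmod 2$ at exactly the endpoints $a_j,b_j$ of each path (interior vertices of a path get their parity flipped twice, hence unchanged), which is precisely the change needed to pass from the $W$-constraint to the $W'$-constraint. Verifying that this map respects the edge relation of Definition~\ref{def:B} is a direct computation: the edge condition ``$w'\in S \Longleftrightarrow w\in S'$'' is symmetric in the two flips applied at the two ends of an edge of $G$, so toggling along a path preserves it.

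I would structure the formal write-up as: (1) reduce to the case $W\triangle W'\neq\emptyset$ and pick the pairing and paths; (2) it suffices by composing to handle a single pair $\{a,b\}$ joined by a single path $P$; (3) define the candidate bijection $\phi_P$ explicitly and check it is well-defined (lands in $\chi(G,W')$) by the double-flip cancellation at interior path vertices; (4) check $\phi_P$ is a graph homomorphism both ways using the symmetry of the edge condition, hence an isomorphism; (5) for the converse, invoke the cocycle/parity argument above, or cite that the construction of Roberson makes $\chi(G,W)$ depend on $W$ only through the class of $[W]$ in a quotient where exactly $|W|\bmod 2$ survives. The main obstacle I anticipate is bookkeeping in step (3)–(4): making sure the flips at a vertex $w$ that lies on the path are applied consistently to \emph{both} the $S$-component of $w$'s fibre and to how $w$ appears in its neighbours' fibres, and checking that the case where $w$ is an endpoint of $P$ (only one incident path-edge) indeed flips the parity, while an interior vertex (two incident path-edges) does not. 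If $G$ has multiple edges or the path revisits a vertex one must take $P$ to be a simple path, which connectedness guarantees; I would state this explicitly. Everything else — well-definedness of fibres, the size count $2^{\deg w-1}$, symmetry of the edge relation — is immediate from Definition~\ref{def:B}.
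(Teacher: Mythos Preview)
The paper does not give its own proof of this lemma: it is simply quoted as Lemma~3.2 of Roberson~\cite{roberson2022oddomorphisms}, with no argument supplied. So there is nothing in the paper to compare your proposal against.

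That said, your proposal for the \emph{if} direction is the standard path-toggling argument used for CFI-type constructions and is correct as sketched; reducing to a single pair $\{a,b\}$ and a single simple path $P$ is the right move, and the parity bookkeeping (interior vertices of $P$ get two flips, endpoints get one) is exactly what one checks. Your \emph{only if} direction is the weaker part of the proposal: the claim that the fibre partition is ``recoverable from the graph structure'' is not automatic, since an abstract isomorphism $\chi(G,W)\to\chi(G,W')$ need not a priori respect fibres. The cocycle language you hint at is the right intuition, but to make it a proof you would need an honest isomorphism invariant that distinguishes the two parities; Roberson's own argument, or the classical CFI approach of exhibiting a concrete invariant (e.g.\ a homomorphism count or a component count in a derived graph), would close this gap.
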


\noindent Neuen~\cite{neuen2023homomorphism} established the following WL-equivalence result for $\chi(G,\emptyset)$ and $\chi(G,\{w\})$.
\begin{lemma}[Theorem 4.2, Lemma 4.4 and Theorem 5.1 in \cite{neuen2023homomorphism}]\label{lem:cfi_propsA}
Let $G$ be a graph of treewidth~$t$ and let $w$ be a vertex of~$G$.
Then for all $k<t$,  $\chi(G,\emptyset)\cong_k\chi(G,\{w\})$.
\end{lemma}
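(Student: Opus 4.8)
By Definition~\ref{def:wl_equivalence} it suffices, for every graph $H$ with $\tw(H)\le k$, to prove $|\Hom(H,\chi(G,\emptyset))|=|\Hom(H,\chi(G,\{w\}))|$; since $\cong_k$ is determined by homomorphism counts from connected graphs and $\chi(\cdot,\cdot)$ distributes over the connected components of its first argument, we may assume $G$ and $H$ are connected and $\tw(G)\ge k+1$. The projection $\pi\colon(u,S)\mapsto u$ is a homomorphism $\chi(G,W)\to G$, so every $h\in\Hom(H,\chi(G,W))$ determines a homomorphism $\phi:=\pi\circ h\in\Hom(H,G)$, and $|\Hom(H,\chi(G,W))|=\sum_{\phi\in\Hom(H,G)}e_W(\phi)$, where $e_W(\phi)$ counts the ``lifts'' of $\phi$.

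\textbf{Reducing to a lifting statement.} A lift of $\phi$ is an assignment $v\mapsto S_v\subseteq N_G(\phi(v))$ with $|S_v|\equiv\delta_{\phi(v),W}\pmod 2$ satisfying $\phi(v)\in S_u\Leftrightarrow\phi(u)\in S_v$ for every edge $\{u,v\}\in E(H)$ (note $\phi(u)\ne\phi(v)$ on edges, as $G$ is loopless). In the $\mathbb{F}_2$-variables $x_{v,g}=[g\in S_v]$ this is an affine linear system $A_\phi x=b_W$: parity equations $\sum_{g}x_{v,g}=\delta_{\phi(v),W}$ and consistency equations $x_{u,\phi(v)}+x_{v,\phi(u)}=0$. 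Crucially $A_\phi$, and hence the dimension $d(\phi)$ of its homogeneous solution space, is independent of $W$, so $e_W(\phi)\in\{0,2^{d(\phi)}\}$, and $b_\emptyset=0$ gives $e_\emptyset(\phi)=2^{d(\phi)}$ for every $\phi$. Therefore $|\Hom(H,\chi(G,\emptyset))|-|\Hom(H,\chi(G,\{w\}))|=\sum_{\phi:\,e_{\{w\}}(\phi)=0}2^{d(\phi)}$, a sum of nonnegative terms, so the theorem reduces to the purely combinatorial claim: \emph{if $\tw(H)<\tw(G)$, then every $\phi\in\Hom(H,G)$ lifts to $\chi(G,\{w\})$}, equivalently $A_\phi x=b_{\{w\}}$ is consistent.

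\textbf{The crux, and the main obstacle.} By Fredholm's alternative over $\mathbb{F}_2$, $A_\phi x=b_{\{w\}}$ is inconsistent exactly when some row-dependency $(\alpha,\beta)$ of $A_\phi$ has $\langle\alpha,\mathbf 1_{\phi^{-1}(w)}\rangle=1$; the dependency equations force $\alpha_v=1$ only if $\phi$ maps $N_H(v)$ onto all of $N_G(\phi(v))$, while $\beta$ (the consistency-row part) describes an $\mathbb{F}_2$-cycle in $H$ that $\phi$ threads around $w$ in $G$. The plan is to show no such certificate can exist once $\tw(H)<\tw(G)$: combining such a dependency with an optimal tree decomposition of $H$ and pushing it forward along $\phi$ should yield a tree decomposition of width $\le\tw(H)$ of the part of $G$ ``responsible'' for the twist at $w$, whereas that part must have treewidth at least $\tw(G)$ — a contradiction. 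Carrying this out is the content of Neuen's Theorem~5.1, and I expect the main difficulty to lie precisely here: turning an algebraic non-liftability certificate into a genuine width lower bound on a concrete subgraph of $G$, while controlling the ways $\phi$ may collapse several vertices or edges of $H$ onto one of $G$ and the behaviour at low-degree vertices. (Alternatively one could bypass homomorphism counts: $\cong_k$ coincides with $C^{k+1}$-equivalence and with Duplicator winning the bijective $(k+1)$-pebble game, and since $\tw(G)\ge k+1$ yields a robber strategy evading $k+1$ cops on $G$, Duplicator can keep the CFI twist inside the robber's cop-free region; the difficulty then moves to the standard but intricate CFI gadget bookkeeping.)
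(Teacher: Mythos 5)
The statement you were asked to prove is one the paper itself does not prove: it is imported as a black box (Theorem~4.2, Lemma~4.4 and Theorem~5.1 of Neuen's paper), so the paper's ``proof'' is simply the citation. Your setup is sound and matches the standard framework behind that result: partitioning $\Hom(H,\chi(G,W))$ over the projected homomorphisms $\phi\in\Hom(H,G)$, viewing the lifts of a fixed $\phi$ as the solution set of an affine system over $\mathbb{F}_2$ whose homogeneous part does not depend on $W$, and concluding that the number of lifts is either $0$ or $2^{d(\phi)}$, with the untwisted case always consistent. This is essentially the content of Roberson's Theorem~3.6, which the paper also quotes, and your reduction of the lemma to the claim ``every $\phi\in\Hom(H,G)$ with $\tw(H)\le k<\tw(G)$ lifts to $\chi(G,\{w\})$'' is correct.

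However, that lifting claim is exactly the technical heart of Neuen's contribution, and your proposal does not prove it: you sketch a plan (push an optimal tree decomposition of $H$ forward along $\phi$ to obtain a small-width decomposition of ``the part of $G$ responsible for the twist'') and then explicitly defer to Neuen's Theorem~5.1. As stated, the plan does not obviously go through: $\phi$ need be neither injective nor surjective, a row dependency of your system does not directly delimit a subgraph of $G$ whose treewidth one could bound from below, and Neuen's actual argument proceeds quite differently, by characterising distinguishability through Roberson's notion of oddomorphisms and then showing that no graph of treewidth at most $k$ admits an oddomorphism to a graph of treewidth greater than $k$ --- a substantial argument rather than a routine pushforward (the same caveat applies to your alternative pebble-game sketch, where the ``CFI gadget bookkeeping'' is the whole difficulty). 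So, judged as a standalone proof, your attempt has a genuine gap at precisely the step that makes the lemma nontrivial; judged the way the paper treats the lemma --- as a citation --- your deferral points to the right source, and the surrounding reduction, while correct, is already contained in the cited works.
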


\subsection{Cloning Vertices in CFI Graphs}\label{sec:aaa}
We will introduce some notions and properties of coloured graphs and of CFI graphs due to Roberson~\cite{roberson2022oddomorphisms}. First of all, since we will work with vertex colourings induced by homomorphisms throughout this section, we adopt the well-established notion of $H$-\emph{colourings} of graphs.
\begin{definition}
We refer to a homomorphism from a graph~$G$ to a graph~$H$ as an $H$-\emph{colouring} of~$G$.
\end{definition}
 
Recall that $\pi_1$ is the projection that maps a pair $(a,b)$ to the first component $a$.

\begin{observation} [\cite{roberson2022oddomorphisms}]
\label{obs:rob}
Let $F$ be a graph
and let $W$ be a subset of $V(F)$.
The function $\pi_1$  
is a homomorphism from $\chi(F,W)$ to $F$.
\end{observation}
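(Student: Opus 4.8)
The final statement to prove is Observation~\ref{obs:rob}: the projection $\pi_1$ is a homomorphism from $\chi(F,W)$ to $F$. The plan is to unwind the definitions directly, since this is a routine verification.

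First I would recall the vertex and edge sets from Definition~\ref{def:B}. A vertex of $\chi(F,W)$ is a pair $(w,S)$ with $w \in V(F)$ and $S \subseteq N_F(w)$ subject to a parity constraint, and $\pi_1(w,S) = w \in V(F)$, so $\pi_1$ indeed maps $V(\chi(F,W))$ into $V(F)$. For the edge condition, I would take an arbitrary edge $\{(w,S),(w',S')\} \in E(\chi(F,W))$. By the definition of the edge set, the existence of this edge requires (among the biconditional condition on membership) that $\{w,w'\} \in E(F)$. Applying $\pi_1$ to the endpoints gives $\{w,w'\}$, which is precisely an edge of $F$. Hence $\pi_1$ maps edges to edges, which is exactly what it means to be a homomorphism from $\chi(F,W)$ to $F$.

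I would present this as a two-sentence argument: one sentence checking that $\pi_1$ is well-defined as a map on vertex sets, and one sentence reading off the edge condition from Definition~\ref{def:B}. The only thing worth stating explicitly is that the defining condition for $\{(w,S),(w',S')\}$ to be an edge of $\chi(F,W)$ \emph{includes} the requirement $\{w,w'\} \in E(F)$, so no further work is needed; the parity condition on $S, S'$ and the ``$w' \in S \Leftrightarrow w \in S'$'' biconditional are irrelevant for this direction.

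There is essentially no obstacle here: this is a definitional unpacking, and the statement is attributed to Roberson~\cite{roberson2022oddomorphisms} precisely because it is immediate from how $\chi(F,W)$ is built. The proof is a single short paragraph, and I would simply write: ``This is immediate from Definition~\ref{def:B}: if $\{(w,S),(w',S')\}$ is an edge of $\chi(F,W)$ then $\{w,w'\} = \{\pi_1(w,S),\pi_1(w',S')\}$ is an edge of $F$.''
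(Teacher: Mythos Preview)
Your proposal is correct and matches the paper's treatment: the paper states this as an observation with a citation and gives no proof, precisely because it is immediate from Definition~\ref{def:B}. Your one-line justification is exactly the right level of detail.
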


Technically, to get the $F$-colouring of $\chi(F,W)$ in Observation~\ref{obs:rob}, one should restrict $\pi_1$ to the domain $V(\chi(F,W))$, but it will not be important to capture this in our notation.
The following is an extension of a notion introduced in~\cite[Section 3.1]{roberson2022oddomorphisms} from CFI graphs to coloured graphs.
\begin{definition}\label{def:part_homs_tau}
Let $H$, $G$, and $F$ be graphs, let $c$ be a homomorphism from $G$ to $F$, and
let $\tau$ be a homomorphism from $H$ to~$F$.  We define
$ \Hom_\tau(H,G,F,c) = \{h \in \Hom(H,G)\mid c(h(\cdot)) = \tau \}$.
\end{definition}

\begin{figure}
    \centering
\begin{tikzcd}[column sep=4em, row sep=4em]  
    H \arrow[r, "{\scalebox{1.25}{\(h\)}}"] \arrow[rd, swap, "{\scalebox{1.25}{\(\tau\)}}"] & G \arrow[d, "{\scalebox{1.25}{\(c\)}}"] \\
    & F
\end{tikzcd}
    \caption{Each homomorphism $h$ from $H$ to $G$ induces a homomorphism $\tau$ from $H$ to $F$ by composing $h$ with the $F$-colouring $c$ of $G$. By partitioning $\Hom(H,G)$ along the induced homomorphisms to $F$, we obtain  Observation~\ref{obs:partition}.} 
    \label{fig:cd_1}
\end{figure}
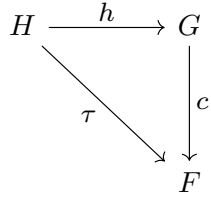

\begin{observation}\label{obs:partition}
Let $H$, $G$, and $F$ be graphs and let 
$c$ be a homomorphism from $G$ to $F$. Then
\[ |\Hom(H,G)| = \sum_{\tau\in \Hom(H,F)} |\Hom_\tau(H,G,F,c)|.\]
\end{observation}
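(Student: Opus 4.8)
\textbf{Proof plan for Observation~\ref{obs:partition}.}
The plan is to exhibit an explicit partition of $\Hom(H,G)$ indexed by $\Hom(H,F)$ and then sum the cardinalities of the parts. The only ingredient needed is that composing a homomorphism $h\colon H\to G$ with the fixed homomorphism $c\colon G\to F$ yields a homomorphism $c\circ h\colon H\to F$; this is immediate because a composition of homomorphisms is a homomorphism (each maps edges to edges, so the composite does too).

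First I would define, for each $\tau\in\Hom(H,F)$, the set $\Hom_\tau(H,G,F,c)=\{h\in\Hom(H,G)\mid c\circ h=\tau\}$ exactly as in Definition~\ref{def:part_homs_tau}. Next I would observe that every $h\in\Hom(H,G)$ lies in $\Hom_{c\circ h}(H,G,F,c)$, and since $c\circ h\in\Hom(H,F)$, this shows that the sets $\{\Hom_\tau(H,G,F,c)\}_{\tau\in\Hom(H,F)}$ cover $\Hom(H,G)$. They are pairwise disjoint because if $h$ belonged to both $\Hom_{\tau}(\cdot)$ and $\Hom_{\tau'}(\cdot)$ then $\tau=c\circ h=\tau'$. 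Hence $\{\Hom_\tau(H,G,F,c)\}_{\tau\in\Hom(H,F)}$ is a partition of $\Hom(H,G)$ (with some blocks possibly empty, which does not affect the count), and taking cardinalities gives
\[ |\Hom(H,G)| = \sum_{\tau\in \Hom(H,F)} |\Hom_\tau(H,G,F,c)|,\]
as claimed. The commutative-diagram picture in Figure~\ref{fig:cd_1} is precisely this argument: $h$ is assigned to the block of the $\tau$ that closes the triangle.

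There is essentially no obstacle here — the statement is a bookkeeping identity and the entire content is the remark that $c\circ h$ is again a homomorphism. The only point worth stating carefully is that $\Hom(H,F)$ may contain homomorphisms $\tau$ that are not of the form $c\circ h$ for any $h$; for such $\tau$ the block $\Hom_\tau(H,G,F,c)$ is empty and contributes $0$ to the sum, so the equality is unaffected.
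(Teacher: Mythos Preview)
Your proposal is correct and is exactly the argument the paper has in mind: the paper treats this as an immediate observation, with Figure~\ref{fig:cd_1} and its caption spelling out precisely the partition $h\mapsto c\circ h$ that you make explicit.
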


\begin{theorem}[Theorem 3.6 in~\cite{roberson2022oddomorphisms}]\label{thm:roberson_main_bound}
Let $H$ be a graph, let $F$ be a connected graph, let $W\subseteq V(F)$, and let $\tau\in \Hom(H,F)$. Then
$|\Hom_\tau(H,\chi(F,W),F,\pi_1)| \leq |\Hom_\tau(H,\chi(F,\emptyset),F,\pi_1)|$.
\end{theorem}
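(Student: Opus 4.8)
\textbf{Proof proposal for Theorem~\ref{thm:roberson_main_bound}.}

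The plan is to follow the strategy of Roberson~\cite{roberson2022oddomorphisms} and establish the inequality by exhibiting an explicit injection from $\Hom_\tau(H,\chi(F,W),F,\pi_1)$ into $\Hom_\tau(H,\chi(F,\emptyset),F,\pi_1)$. First I would unpack what an element of $\Hom_\tau(H,\chi(F,W),F,\pi_1)$ looks like: it is a homomorphism $h\colon H \to \chi(F,W)$ with $\pi_1 \circ h = \tau$, which means that for each vertex $v$ of $H$ we have $h(v) = (\tau(v), S_v)$ for some set $S_v \subseteq N_F(\tau(v))$ with $|S_v| \equiv \delta_{\tau(v),W} \pmod 2$, and the edge condition of $\chi(F,W)$ must be respected along every edge of $H$. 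The key observation is that the edge condition in $\chi(F,W)$ is \emph{identical} to that in $\chi(F,\emptyset)$ — it does not reference $W$ at all — so the only obstruction to viewing $h$ as a homomorphism into $\chi(F,\emptyset)$ is that the parity constraints at the vertices differ (by $\delta_{\tau(v),W}$ versus $0$).

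The second step is to define the correction map. Since $F$ is connected, fix a spanning tree (or more simply, exploit connectivity to route parity corrections). For each vertex $f$ of $F$ with $\delta_{f,W}=1$, the constraint $|S_v|$ must flip parity for every $v$ with $\tau(v)=f$. I would flip the membership of a single, canonically-chosen neighbour $f'$ of $f$ in each such $S_v$; but to preserve the edge condition along edges of $H$ mapping into the edge $\{f,f'\}$ of $F$, the complementary flip must be performed at $f'$ as well. This means the set of $F$-vertices whose parity is toggled must itself form a consistent ``even subgraph'' structure — concretely, one picks a subset $D$ of edges of $F$ such that the vertices of odd degree in $D$ are exactly the vertices in $W$ (possible since $|W|$ together with a connectivity/parity argument allows this, i.e. $W$ can be paired up and joined by paths in the connected graph $F$, and one takes the symmetric difference of the edge sets of these paths). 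Then the map $h \mapsto h'$ which, at each vertex $v$ of $H$, replaces $S_v$ by $S_v \bigtriangleup \{\,f' : \{\tau(v),f'\}\in D\,\}$ sends homomorphisms into $\chi(F,W)$ to homomorphisms into $\chi(F,\emptyset)$: the parity at each vertex is corrected exactly by the degree of $\tau(v)$ in $D$ (which is odd iff $\tau(v)\in W$, i.e. iff $\delta_{\tau(v),W}=1$), and the edge condition is preserved because along any edge $\{u,v\}$ of $H$ the toggles at $h(u)$ and $h(v)$ are governed by the same edge membership in $D$, so the biconditional $w'\in S\Leftrightarrow w\in S'$ is unaffected. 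Finally, $\pi_1 \circ h' = \tau$ still holds since the first coordinates are untouched, so $h' \in \Hom_\tau(H,\chi(F,\emptyset),F,\pi_1)$.

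The third step is injectivity, which is immediate: the map $h\mapsto h'$ is the restriction of a bijection on the larger space of ``pre-homomorphisms'' (functions $v\mapsto(\tau(v),S_v)$ ignoring the parity constraints), since toggling by $D$ is an involution up to the fact that $D$ is fixed; applying the same symmetric-difference operation recovers $h$ from $h'$. Hence $|\Hom_\tau(H,\chi(F,W),F,\pi_1)| \le |\Hom_\tau(H,\chi(F,\emptyset),F,\pi_1)|$, as claimed.

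I expect the main obstacle to be the clean choice of the edge set $D$ realising $W$ as its set of odd-degree vertices and verifying that this works uniformly across \emph{all} of $\tau$ simultaneously — in particular checking that the edge condition of the CFI construction is genuinely insensitive to the $W$-twist and that the parity bookkeeping $\delta_{\tau(v),W} = \deg_D(\tau(v)) \bmod 2$ is exactly right at every vertex of $H$ at once. A subtlety worth flagging is that when $|W|$ is odd no such $D$ exists (the number of odd-degree vertices of any subgraph is even); but in that case $\chi(F,W)$ may be empty of the relevant homomorphisms or one argues separately, and in the intended applications $W$ is either $\emptyset$ or related to an even correction — so I would state the bound under the hypothesis as given and note that the interesting case for us is $|W|$ even (e.g. $W=\emptyset$ versus $W$ of even size via Lemma~\ref{lem:iso}), deferring the parity caveat to the precise formulation in~\cite{roberson2022oddomorphisms}.
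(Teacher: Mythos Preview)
The paper does not prove this theorem; it is quoted verbatim from Roberson~\cite{roberson2022oddomorphisms}. So there is no ``paper's own proof'' to compare against beyond the citation. That said, your proposal has a genuine gap.

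Your construction hinges on finding an edge set $D \subseteq E(F)$ whose odd-degree vertices are exactly $W$, and you correctly note this is impossible when $|W|$ is odd. But your dismissal of that case is wrong: the central application in this paper is Lemma~\ref{lem:main_cloning_lemma} and Corollary~\ref{cor:almost_final}, which invoke the theorem with $W=\{x_1\}$, so $|W|=1$. That is precisely the case your argument does not cover, and it is the only case that matters downstream.

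The fix is simple and in fact cleaner than the $D$-construction. Your symmetric-difference idea is right, but you should not try to build the correcting element from scratch. Instead argue by cases: if $\Hom_\tau(H,\chi(F,W),F,\pi_1)=\emptyset$ the inequality is trivial; otherwise pick any $h_0$ in it, write $h_0(v)=(\tau(v),S_v^0)$, and send $h$ with $h(v)=(\tau(v),S_v)$ to $h'$ with $h'(v)=(\tau(v),S_v\triangle S_v^0)$. The parity check gives $|S_v\triangle S_v^0|\equiv |S_v|+|S_v^0|\equiv 2\delta_{\tau(v),W}\equiv 0\pmod 2$, and the edge biconditional is preserved under symmetric difference exactly as you argued. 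This map is injective (indeed bijective onto $\Hom_\tau(H,\chi(F,\emptyset),F,\pi_1)$), and it works for all $W$. Equivalently, one can phrase this as a linear-algebra observation: the constraints defining $\Hom_\tau$ form an $\mathbb{F}_2$-linear system whose right-hand side is determined by $W$, so the solution set is either empty or a coset of the kernel, which is $\Hom_\tau(H,\chi(F,\emptyset),F,\pi_1)$ (nonempty since $S_v\equiv\emptyset$ is always a solution for $W=\emptyset$). This is essentially how Roberson argues.
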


\begin{definition}[Cloning Colour-Blocks]\label{def:cloning_operation}
Let $G$ be a graph,
let $F$ be a connected graph, and let $c$ be a homomorphism from $G$ to~$F$. Let
$k$ be a positive integer, let  $\vv=(v_1,\dots,v_k)$ be a $k$-tuple of pairwise distinct vertices of $F$, and let $\vz=(z_1,\dots,z_k)$ be $k$-tuple of positive integers. The graph $\newgraph(G,F,c,\vv,\vz)$ 
is obtained from $G$ by cloning, for each $i\in[k]$, the colour class of $v_i$ under~$c$ precisely $z_i-1$ times. Formally, 
for each $v\in V(F)$, 
let $B_v=c^{-1}(v)$. The vertices of 
$\newgraph(G,F,c,\vv,\vz)$ are  
\[ \bigcup_{u\in V(F)\setminus \vv} B_u \cup \bigcup_{i\in[k]} (B_{v_i} \times \{1,\dots,z_i\}).\]
The vertices contained in $\bigcup_{u\in V(F)\setminus \Vec{v}} B_u$ are called \emph{primal vertices} and the vertices contained in $\bigcup_{i\in[k]} (B_{v_i} \times \{1,\dots,z_i\})$ are called \emph{cloned vertices}.
Two vertices $x$ and $y$ of 
$\newgraph(G,F,c,\vv,\vz)$ are adjacent if and only if
\begin{itemize}
\item $x$ and $y$ are primal vertices, and $\{x,y\} \in E(G)$, or
\item $x$ is primal, $y$ is a clone, and $\{x,\pi_1(y)\} \in E(G)$, or
\item $x$ is a clone, $y$ is primal, and $\{\pi_1(x),y\} \in E(G)$, or
\item $x$ and $y$ are clones, and $\{\pi_1(x),\pi_1(y)\} \in E(G)$.
\end{itemize}
We define a function 
$\newcol(G,F,c,\vv,\vz) \colon V(\newgraph(G,F,c,\vv,\vz) \to V(F)$ 
by mapping primal vertices $u$ to $c(u)$, and cloned vertices $(u,i)$ to $c(u)$. It is easy to see that 
$\newcol(G,F,c,\vv,\vz)$   is 
a homomorphism from  
$\newgraph(G,F,c,\vv,\vz)$ to~$F$. 
\end{definition}

\begin{lemma}\label{lem:cloning_property}
Let $H$ and $G$ be  graphs and
let $F$ be a connected graph. Let $c$ be a homomorphism from $G$ to~$F$
and let $\tau$ be a homomorphism from~$H$ to~$F$.  Let $\vv=(v_1,\dots,v_k)$ be a $k$-tuple of distinct vertices of $F$ and let $\vz=(z_1,\dots,z_k)$ be a $k$-tuple of positive integers. For all $i\in[k]$, let $d_i$ be the number of vertices of $H$ that are mapped by $\tau$ to $v_i$, i.e., $d_i = |\{ u\in V(H) \mid \tau(u)=v_i \}|$. 
Let $G' = \newgraph(G,F,c,\vv,\vz)$ and let $c' = \newcol(G,F,c,\vv,\vz)$.
Then  
\[ |\Hom_\tau(H, G',F,c')|= 
|\Hom_\tau(H,G,F,c)| \cdot \prod_{i=1}^k z_i^{d_i}. \]\end{lemma}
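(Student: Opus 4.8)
The plan is to establish a bijection between $\Hom_\tau(H,G',F,c')$ and the Cartesian product $\Hom_\tau(H,G,F,c) \times \prod_{i=1}^k [z_i]^{V_i}$, where $V_i = \{u \in V(H) \mid \tau(u) = v_i\}$ is the set of vertices of $H$ sent to $v_i$ by $\tau$ (so $|V_i| = d_i$). Counting the right-hand side then immediately gives the claimed formula $|\Hom_\tau(H,G,F,c)| \cdot \prod_{i=1}^k z_i^{d_i}$, since the $V_i$ are pairwise disjoint (the $v_i$ are distinct) and $|[z_i]^{V_i}| = z_i^{d_i}$.

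\textbf{The bijection.} Recall from Definition~\ref{def:cloning_operation} that every vertex of $G'$ lies over a unique vertex of $G$: a primal vertex $x$ lies over itself, a cloned vertex $(u,j)$ lies over $u$, and in either case $c'$ of that $G'$-vertex equals $c$ of the underlying $G$-vertex. Let $\pi\colon V(G') \to V(G)$ denote this projection ($\pi(x) = x$ for primal $x$, $\pi(u,j) = u$). From the edge rules in Definition~\ref{def:cloning_operation}, $\pi$ is a homomorphism from $G'$ to $G$, and moreover a pair $\{x,y\}$ is an edge of $G'$ if and only if $\{\pi(x),\pi(y)\}$ is an edge of $G$. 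Given $h' \in \Hom_\tau(H,G',F,c')$, define $h = \pi \circ h'$; this is a homomorphism $H \to G$ with $c \circ h = c' \circ h' = \tau$, so $h \in \Hom_\tau(H,G,F,c)$. For the ``clone-index'' data: if $u \in V(H)$ has $\tau(u) = v_i$ for some $i \in [k]$, then $c'(h'(u)) = v_i$, so $h'(u)$ lies in the cloned block $B_{v_i} \times [z_i]$ and hence $h'(u) = (h(u), \lambda(u))$ for a well-defined index $\lambda(u) \in [z_i]$; collecting these over $u \in V_i$ gives an element $\lambda_i \in [z_i]^{V_i}$. The map $h' \mapsto (h, (\lambda_i)_{i=1}^k)$ is the desired function. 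Conversely, given $h \in \Hom_\tau(H,G,F,c)$ and indices $(\lambda_i)_{i}$, reconstruct $h'$ by setting $h'(u) = (h(u),\lambda_i(u))$ when $\tau(u) = v_i$ and $h'(u) = h(u)$ (a primal vertex) otherwise; one checks using the edge rules of $G'$ that this $h'$ is a homomorphism, that $c' \circ h' = \tau$, and that the two maps are mutually inverse. The key point making the reconstruction well-defined is that $\tau(u)$ determines which block $h'(u)$ must live in, and within a cloned block the underlying $G$-vertex ($=h(u)$, forced because $\pi \circ h' = h$) together with the free index fully specifies the vertex.

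\textbf{Main obstacle.} The only genuinely fiddly part is verifying that the reconstructed $h'$ respects all four edge types of Definition~\ref{def:cloning_operation} (primal--primal, primal--clone, clone--primal, clone--clone) and, symmetrically, that the projection direction lands in $\Hom(H,G)$ — but both reduce to the single observation that adjacency in $G'$ is exactly adjacency of the underlying vertices in $G$, independent of the clone indices, so an edge $\{u,w\} \in E(H)$ maps to an edge of $G'$ under $h'$ iff $\{h(u),h(w)\} \in E(G)$, which holds since $h$ is a homomorphism. I would also remark explicitly that the condition ``$c \circ h = \tau$'' is what guarantees the block-membership of $h(u)$ and $h'(u)$ is governed by $\tau$ and not by $h$ itself, which is what lets the index data $(\lambda_i)_i$ be chosen freely and independently across all of $V_1 \cup \dots \cup V_k$. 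Everything else is bookkeeping.
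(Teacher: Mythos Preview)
Your proof is correct and takes essentially the same approach as the paper: the paper defines the projection $\rho$ (your $\pi$), partitions $\Hom_\tau(H,G',F,c')$ into equivalence classes via $\rho\circ h$, identifies each class with an element of $\Hom_\tau(H,G,F,c)$, and counts each class as $\prod_i z_i^{d_i}$. Your version simply makes the bijection with $\Hom_\tau(H,G,F,c) \times \prod_i [z_i]^{V_i}$ explicit in both directions, which is a slightly more detailed presentation of the same argument.
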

\begin{proof}
Let $\rho\colon V(G') \to V(G)$ be the function that maps cloned vertices to their original counterparts, that is
\[ \rho(x) = \begin{cases}
x & \mbox{if $x$  is a primal vertex} \\
\pi_1(x) & \mbox{otherwise}  
\end{cases} \]
Observe that $\rho$ is a homomorphism from~$G'$ to~$G$.
We define an equivalence relation on the set 
$\Hom_\tau(H,G',F,c')$   by setting $h \sim h'$ if and only if 
$\rho(h(\cdot)) = \rho(h'(\cdot))$. 
For every $h\in \Hom_\tau(H,G',F,c')$,
$\rho(h(\cdot))$ is a homomorphism from~$H$ to~$G$   since it is the composition of homomorphisms from~$H$ to~$G'$ and from~$G'$ to~$G$. 
Moreover, since $h \in \Hom_\tau(H,G',F,c')$,
$c'(h(\cdot)) = \tau$. 
From the definitions of $\rho$ and $c'$ it is immediate that 
$c(\rho(\cdot))   =  c'$.
Thus $c(\rho(h(\cdot)) =   c'(h(\cdot)) = \tau$,  proving that 
$\rho(h(\cdot)) \in \Hom_\tau(H,G,F,c)$; consider Figure~\ref{fig:cd_2} for an illustration.
Consequently, we can represent each equivalence class of $\sim$ by a homomorphism $\hat{h} \in \Hom_\tau(H,G,F,C)$. 

Finally, each equivalence class  has size $\prod_{i=1}^k z_i^{d_i}$, since for each $i\in[k]$, there are $z_i$ possibilities for each of the $d_i$ vertices $u\in V(H)$ with $\tau(u) = v_i$. \end{proof}

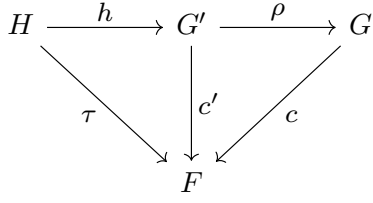
\begin{figure}
    \centering
\begin{tikzcd}[column sep=4em, row sep=4em]
    H \arrow[r, "{\scalebox{1.25}{\(h\)}}"] \arrow[rd, swap, "{\scalebox{1.25}{\(\tau\)}}"] & G' \arrow[r, "{\scalebox{1.25}{\(\rho\)}}"] \arrow[d, "{\scalebox{1.25}{\(c'\)}}"] & G \arrow[ld, "{\scalebox{1.25}{\(c\)}}"] \\
    & F 
\end{tikzcd}
    \caption{Illustration for the proof of Lemma~\ref{lem:cloning_property}: $G'=\newgraph(G,F,c,\vv\,\vz)$ is the graph obtained from $G$ by cloning vertices (Definition~\ref{def:cloning_operation}), and $\rho$ is the homomorphism from $G'$ to $G$ that maps each cloned vertex in $G'$ to its primal vertex in $G$. Moreover, $c$ is the $F$-colouring of $G$ and $c'=\newcol(G,F,c,\vv,vz)$ is, by Definition~\ref{def:cloning_operation}, the composition of $c$ and $\rho$, i.e., each cloned vertex is mapped by $c'$ to the colour of its primal vertex.} 
    \label{fig:cd_2}
\end{figure}

Next we show that cloning colour-blocks in CFI graphs preserves WL-equivalence:
\begin{lemma}\label{lem:main_cloning_lemma}
Let $F$ be a connected graph of treewidth $t+1$, let $W\subseteq V(F)$, let $\Vec{v}=(v_1,\dots,v_k)$ be a $k$-tuple of distinct vertices of $F$, and let $\Vec{z}=(z_1,\dots,z_k)$ be a $k$-tuple of positive integers. Then
$
\newgraph(\chi(F,\emptyset),F,\pi_1,\vv,\vz) 
\cong_{t}  \newgraph(\chi(F,W),F,\pi_1,\vv,\vz)$.
\end{lemma}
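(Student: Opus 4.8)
The plan is to prove the WL-equivalence via the homomorphism-count characterisation (Definition~\ref{def:wl_equivalence}): I would fix an arbitrary graph $H$ with $\tw(H)\le t$ and show that $H$ has the same number of homomorphisms into $\newgraph(\chi(F,\emptyset),F,\pi_1,\vv,\vz)$ as into $\newgraph(\chi(F,W),F,\pi_1,\vv,\vz)$. For brevity write $\newgraph_W=\newgraph(\chi(F,W),F,\pi_1,\vv,\vz)$ and let $\newcol_W=\newcol(\chi(F,W),F,\pi_1,\vv,\vz)$, which is an $F$-colouring of $\newgraph_W$ by Definition~\ref{def:cloning_operation} (using that $\pi_1$ is an $F$-colouring of $\chi(F,W)$ by Observation~\ref{obs:rob}). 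Since $H$ and $F$ are finite, the sum over $\Hom(H,F)$ below is finite.

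\textbf{Two ingredients.} First I would record that $\chi(F,\emptyset)\cong_t\chi(F,W)$. Indeed, $F$ is connected, so Lemma~\ref{lem:iso} gives $\chi(F,W)\cong\chi(F,\emptyset)$ when $|W|$ is even, and $\chi(F,W)\cong\chi(F,\{w\})$ for any $w\in V(F)$ when $|W|$ is odd; in the latter case Lemma~\ref{lem:cfi_propsA} applies because $\tw(F)=t+1>t$, giving $\chi(F,\{w\})\cong_t\chi(F,\emptyset)$, and isomorphic graphs are trivially $t$-WL-equivalent, so $\chi(F,\emptyset)\cong_t\chi(F,W)$ in both cases. Consequently $|\Hom(H,\chi(F,\emptyset))|=|\Hom(H,\chi(F,W))|$, and applying Observation~\ref{obs:partition} with the $F$-colouring $\pi_1$ on each side yields
\[ \sum_{\tau\in\Hom(H,F)}\Bigl(|\Hom_\tau(H,\chi(F,\emptyset),F,\pi_1)|-|\Hom_\tau(H,\chi(F,W),F,\pi_1)|\Bigr)=0. \]
The key point is that, by Theorem~\ref{thm:roberson_main_bound}, \emph{every} summand on the left is non-negative. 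A sum of non-negative reals that equals zero forces each term to vanish, so in fact $|\Hom_\tau(H,\chi(F,\emptyset),F,\pi_1)|=|\Hom_\tau(H,\chi(F,W),F,\pi_1)|$ for every $\tau\in\Hom(H,F)$.

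\textbf{Conclusion.} To finish I would expand $|\Hom(H,\newgraph_W)|$ by partitioning over $\tau\in\Hom(H,F)$ using Observation~\ref{obs:partition} (with $c=\newcol_W$) and then rewriting each block via Lemma~\ref{lem:cloning_property} (whose hypotheses hold since $F$ is connected), obtaining
\[ |\Hom(H,\newgraph_W)|=\sum_{\tau\in\Hom(H,F)}|\Hom_\tau(H,\chi(F,W),F,\pi_1)|\cdot\prod_{i=1}^k z_i^{d_i(\tau)}, \quad d_i(\tau)=|\{u\in V(H):\tau(u)=v_i\}|, \]
and the same identity with $W$ replaced by $\emptyset$. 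The weights $\prod_{i=1}^k z_i^{d_i(\tau)}$ depend only on $\tau$, not on $W$, so the per-$\tau$ equality from the previous step matches the two sums term by term, giving $|\Hom(H,\newgraph(\chi(F,\emptyset),F,\pi_1,\vv,\vz))|=|\Hom(H,\newgraph(\chi(F,W),F,\pi_1,\vv,\vz))|$. As $H$ ranges over all graphs of treewidth at most $t$, this is exactly $\newgraph(\chi(F,\emptyset),F,\pi_1,\vv,\vz)\cong_t\newgraph(\chi(F,W),F,\pi_1,\vv,\vz)$.

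\textbf{Main obstacle.} The genuinely delicate point is that one should not try to prove the blockwise (per-$\tau$) equality of homomorphism counts directly: such refined equalities generally fail for CFI graphs, which are built precisely to be distinguishable. The trick is to extract it \emph{for free} from the combination of Roberson's one-sided bound (Theorem~\ref{thm:roberson_main_bound}) with the global equality supplied by $\chi(F,\emptyset)\cong_t\chi(F,W)$ via the ``non-negative terms summing to zero'' argument. Everything else is bookkeeping: checking the connectivity and treewidth hypotheses of the cited results, and observing that the cloning multipliers in Lemma~\ref{lem:cloning_property} are independent of $W$.
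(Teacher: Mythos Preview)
Your proof is correct and follows essentially the same approach as the paper: partition homomorphisms by $\tau\in\Hom(H,F)$, combine Roberson's one-sided inequality (Theorem~\ref{thm:roberson_main_bound}) with the global equality from $\chi(F,\emptyset)\cong_t\chi(F,W)$ to force per-$\tau$ equality, and then transfer via Lemma~\ref{lem:cloning_property}. Your argument is in fact slightly more careful than the paper's, since Lemma~\ref{lem:cfi_propsA} is only stated for singleton $W$ and you make the reduction via Lemma~\ref{lem:iso} explicit.
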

\begin{proof}

Let $G_\emptyset = \newgraph(\chi(F,\emptyset),F,\pi_1,\vv,\vz)$
and $G_W = \newgraph(\chi(F,W),F,\pi_1,\vv,\vz)$.
The lemma will follow immediately by Definition of WL-equivalence (Definition~\ref{def:wl_equivalence})
once
we show that, for each graph $H$ of treewidth at most $t$,
$|\Hom(H,G_\emptyset)| = |\Hom(H,G_W)|$.
To this end, fix any graph $H$ with treewidth at most $t$. 
By Lemma~\ref{lem:cfi_propsA},
$\chi(F,\emptyset) \cong_t \chi(F,W)$ since $F$ has treewidth $t+1$.
Again, by Definition~\ref{def:wl_equivalence}, we have
$|\Hom(H,\chi(F,\emptyset))| = |\Hom(H,\chi(F,W))|$. 
By Observation~\ref{obs:rob}, $\pi_1$ is a homomorphism from $\chi(F,\emptyset)$ to~$F$ and from $\chi(F,W)$ to~$F$.
By Observation~\ref{obs:partition}, 
\[ \sum_{\tau \in \Hom(H,F)} |\Hom_\tau(H,\chi(F,\emptyset),F,\pi_1)| = \sum_{\tau \in \Hom(H,F)} |\Hom_\tau(H,\chi(F,W),F,\pi_1). \]
Combining this with Theorem~\ref{thm:roberson_main_bound}, 
we find that for all $\tau \in \Hom(H,F)$, 
\begin{equation}\label{eq:cfi_extension}
|\Hom_\tau(H,\chi(F,\emptyset),F,\pi_1))| =  |\Hom_\tau(H,\chi(F,W),F,\pi_1)|.
    \end{equation}

Therefore
\begin{align*}
|\Hom(H,G_\emptyset)|  
&= \sum_{\tau \in \Hom(H,F)} |\Hom_\tau(H,G_\emptyset,F,\pi_1)| \tag{Observation~\ref{obs:partition}} \\
&= \sum_{\tau \in \Hom(H,F)}  |\Hom_\tau(H,\chi(F,\emptyset),F,\pi_1)| \cdot \prod_{i=1}^k z_i^{d_i}  \tag{Lemma~\ref{lem:cloning_property}}\\
&= \sum_{\tau \in \Hom(H,F)}  |\Hom_\tau(H,\chi(F,W),F,\pi_1)| \cdot \prod_{i=1}^k z_i^{d_i}   \tag{Equation~(\ref{eq:cfi_extension})}\\
&= \sum_{\tau \in \Hom(H,F)} |\Hom_\tau(H,G_W,F,\pi_1)|  \tag{Lemma~\ref{lem:cloning_property}} \\
&=|\Hom(H,G_W)|, \tag{Observation~\ref{obs:partition}}
\end{align*}
where the quantities $d_1,\ldots,d_k$ depend on~$\tau$ as in Lemma~\ref{lem:cloning_property}.
This concludes the proof.
\end{proof}

\subsection{Reduction to the Colourful Case}
Throughout this section, we fix a conjunctive query $(H,X)$ such that $H$ is connected, and with $X= \{x_1,\dots,x_k \}$. Fix also a positive integer $\ell$. Let $Y=V(H)\setminus X$ and $F=F_\ell(H,X)$.

Recall that the vertices of $F$ are $X \cup (Y \times [\ell])$. We start by extending the partitioning result from  Section~\ref{sec:aaa} from graphs to conjunctive queries. Moreover, recall the definition of the mapping $\gamma\colon V(F) \to V(H)$ (Definition~\ref{def:gamma}):

\[\gamma(u) =
\begin{cases}
    u & u \in X\\
    \pi_1(u) & u\in Y \times [\ell]
\end{cases}\]
Recall also that by Observation~\ref{obs:gamma_is_col}, the function $\gamma$ is a homomorphism from~$F$ to~$H$.

\begin{definition}\label{def:partition_cq}
Let $G$ be a graph, let $c$ be an $H$-colouring of $G$, and
let $\tau$ be a function from~$X$ to~$V(H)$. Define
\[\Ans^\tau((H,X),(G,c)) = \{h \in \Ans((H,X),G)\mid c(h(\cdot)) = \tau(\cdot)  \} \}\]
Let $\hat{c}$ be an $F$-colouring of~$G$. 
Define 
\[\Ans^{\tau}((H,X),(G,\hat{c})) = \{h \in \Ans((H,X),G)\mid \gamma(\hat{c}(h(\cdot)) = \tau(\cdot)  \} \}\]
\end{definition}

\begin{figure}
    \centering
\begin{tikzcd}[column sep=4em, row sep=4em]
    V(F) \arrow[r, "{\scalebox{1.25}{\(\gamma\)}}"]  & V(H) & X \arrow[ld, "{\scalebox{1.25}{\(h\)}}"] \arrow[l, swap, "{\scalebox{1.25}{\(\tau\)}}"]\\
    & V(G) \arrow[lu, "{\scalebox{1.25}{\(\hat{c}\)}}"] \arrow[u, "{\scalebox{1.25}{\(c\)}}"]
\end{tikzcd}
    \caption{Illustrations of the mappings used in Definition~\ref{def:partition_cq}.} 
    \label{fig:cd_3}
\end{figure}
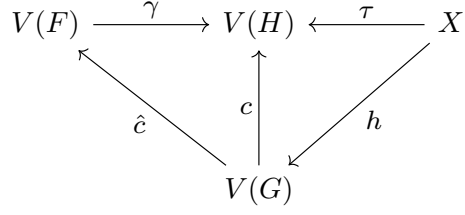

\begin{observation}\label{obs:partition_and_projection}
Let $(G,c)$ be an $H$-coloured graph. Then
\[ |\Ans((H,X),G)| = \sum_{\tau: X \to V(H)} |\Ans^\tau((H,X),(G,c))|. \]
\end{observation}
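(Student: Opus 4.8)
The statement is a routine partitioning identity, so the plan is short. The key observation is that, since $c$ is an $H$-colouring of $G$ (that is, a homomorphism $G \to V(H)$ in the relevant sense), every answer $h \in \Ans((H,X),G)$ — which is a map $h\colon X \to V(G)$ — induces a well-defined function $\tau_h \colon X \to V(H)$ by $\tau_h = c \circ h$, i.e.\ $\tau_h(x) = c(h(x))$ for each $x \in X$. Thus the assignment $h \mapsto \tau_h$ is a function from $\Ans((H,X),G)$ to the set of functions $X \to V(H)$.

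First I would observe that, by Definition~\ref{def:partition_cq}, for a fixed $\tau\colon X \to V(H)$ the set $\Ans^\tau((H,X),(G,c))$ is precisely the fibre $\{h \in \Ans((H,X),G) \mid \tau_h = \tau\}$ of this assignment over $\tau$. Consequently the family $\{\Ans^\tau((H,X),(G,c))\}_{\tau\colon X \to V(H)}$ is a partition of $\Ans((H,X),G)$: the sets are pairwise disjoint (since $h$ determines $\tau_h$ uniquely) and their union is all of $\Ans((H,X),G)$ (since every $h$ lies in the part indexed by $\tau_h$). Taking cardinalities of both sides of this partition yields exactly the claimed identity
\[ |\Ans((H,X),G)| = \sum_{\tau\colon X \to V(H)} |\Ans^\tau((H,X),(G,c))|, \]
where all but finitely many summands are zero since $V(H)$ is finite.

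There is no real obstacle here; the only thing to be careful about is that $\tau_h$ is genuinely a function (which is immediate, as $h$ and $c$ are functions) and that the indexing set of the sum ranges over \emph{all} functions $X \to V(H)$, not only homomorphism-induced ones — the parts indexed by functions that arise from no answer simply contribute $0$. This mirrors Observation~\ref{obs:partition} with the role of the ambient graph played by $H$ and the role of the colouring played by $c$.
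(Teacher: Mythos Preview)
Your proof is correct and is exactly the intended argument: the sets $\Ans^\tau((H,X),(G,c))$ are, by Definition~\ref{def:partition_cq}, the fibres of the map $h \mapsto c\circ h$ from $\Ans((H,X),G)$ to functions $X \to V(H)$, and summing fibre sizes gives the total. The paper states this as an observation without proof, so your argument is the natural (and essentially only) one.
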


We will adapt Lemma~\ref{lem:cloning_property} to conjunctive queries in the following way.

\begin{lemma}\label{lem:cloning_property_projections}
Let $G$ be a graph and let $c$ be an $F$-colouring of $G$, and assume that $F$ is connected.
Let $\tau \colon X \to V(H)$. Let $\vv = (x_1,\ldots,x_k)$. Let $\vz=(z_1,\dots,z_k)$ be a $k$-tuple of positive integers. For all $i\in[k]$ let $d_i$ be the number of vertices of $H$ that are mapped by $\tau$ to $x_i$, i.e., $d_i := |\{ u\in X \mid \tau(u)=x_i \}|$. 
Let $G' = \newgraph(G,F,c,\vv,\vz)$ and 
$c' = \newcol(G,F,c,\vv,\vz)$.
Then  
\[ |\Ans^\tau((H,X), (G',c')| =  
|\Ans^\tau((H,X),(G,c))| \cdot \prod_{i=1}^k z_i^{d_i}. \]
\end{lemma}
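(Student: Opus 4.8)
The statement is essentially the conjunctive-query analogue of Lemma~\ref{lem:cloning_property}, and I expect the proof to follow the same template, with the role of $\Hom_\tau(H,\cdot,F,\cdot)$ played by $\Ans^\tau((H,X),(\cdot,\cdot))$ with respect to the $F$-colouring. The key structural point is that an answer $h\in\Ans^\tau((H,X),(G',c'))$ is an assignment $X\to V(G')$ that extends to a homomorphism $H\to G'$, but the defining constraint $\gamma(c'(h(\cdot)))=\tau(\cdot)$ only constrains the $X$-part; in particular it only constrains which colour class of $G'$ each $x_j$ is sent to, and the cloned colour classes are exactly those of $x_1,\dots,x_k$ under $c$ (since $\vv=(x_1,\dots,x_k)$, and the colour class of $x_j$ under $\pi_1$-style $F$-colouring is a singleton block at the $F$-vertex $x_j$). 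So cloning happens precisely in the colour classes that the free variables can land in.

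\textbf{Main steps.} First I would set up the projection map $\rho\colon V(G')\to V(G)$ that sends each cloned vertex to its primal counterpart, exactly as in the proof of Lemma~\ref{lem:cloning_property}; recall $\rho$ is a homomorphism $G'\to G$ and $c(\rho(\cdot))=c'$. Second, I would define a map $\Ans^\tau((H,X),(G',c'))\to\Ans^\tau((H,X),(G,c))$ by $h\mapsto \rho\circ h$: if $h$ is an answer in $G'$, witnessed by some $\hat h\in\Hom(H,G')$ with $\hat h|_X=h$, then $\rho\circ\hat h\in\Hom(H,G)$ witnesses that $\rho\circ h$ is an answer in $G$, and $c((\rho\circ h)(\cdot))=c'(h(\cdot))$ so the colour constraint $\gamma(c'(h(\cdot)))=\tau$ transfers verbatim to $\gamma(c((\rho\circ h)(\cdot)))=\tau$. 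Third, I would count the fibres of this map: given $\hat h\in\Ans^\tau((H,X),(G,c))$, the answers $h$ in $G'$ with $\rho\circ h=\hat h$ are obtained by choosing, independently for each free variable $x_j$, a lift of $\hat h(x_j)$ along $\rho$; the number of lifts of a vertex in the colour class of $x_j$ is $z_j$ if that vertex lies in the (cloned) block $B_{x_j}$ and $1$ otherwise. Here I use that $\gamma(c(\hat h(x_j)))=\tau(x_j)$, so $\hat h(x_j)$ lies in $B_{x_j}$ precisely when $\tau(x_j)=x_j$ (the $F$-vertex $x_j$ is the unique $F$-vertex mapped to $x_j\in V(H)$ by $\gamma$); hence the total number of lifts of a fixed $\hat h$ is $\prod_{j:\tau(x_j)=x_j} z_j = \prod_{i=1}^k z_i^{d_i}$, since $d_i=|\{u\in X\mid\tau(u)=x_i\}|$ is $1$ when $\tau(x_i)=x_i$ and $0$ otherwise (as $X$ consists of the distinct vertices $x_1,\dots,x_k$). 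I should double-check that every such lift is genuinely an answer in $G'$: a lift of $\hat h$ together with a suitable lift of a witnessing homomorphism $\bar h\in\Hom(H,G)$ for $\hat h$ gives a homomorphism $H\to G'$ extending it, because $G'$ contains $G$ as a "blow-up" and any homomorphism into $G$ composed with arbitrary choices of clones in the blown-up classes is still a homomorphism into $G'$.

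\textbf{Expected obstacle.} The one subtlety to get right is the bookkeeping of which colour classes actually get cloned versus which free variables see the clones, and the fact that $d_i\in\{0,1\}$ here (in contrast to Lemma~\ref{lem:cloning_property} where $d_i$ counted vertices of $H$ mapping to $v_i$ and could be larger). Concretely, the colour class of a free variable $x_j$ under the $F$-colouring has the form $c^{-1}(x_j)$, and only the classes $c^{-1}(x_1),\dots,c^{-1}(x_k)$ are cloned; an answer with $\gamma(c(h(x_j)))=\tau(x_j)$ forces $h(x_j)\in c^{-1}(\tau(x_j))$, and this is a cloned class iff $\tau(x_j)\in\{x_1,\dots,x_k\}$, i.e.\ $\tau(x_j)=x_j$ by injectivity of the relevant part of $\gamma$ on $X$. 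Once this correspondence is pinned down, the fibre-size computation is immediate and the lemma follows by summing over the equivalence classes of the relation $h\sim h'\iff\rho\circ h=\rho\circ h'$, exactly mirroring Lemma~\ref{lem:cloning_property}.
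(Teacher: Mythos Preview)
Your overall approach is correct and mirrors the paper's proof: define $\rho$, show that $h\mapsto\rho\circ h$ maps $\Ans^\tau((H,X),(G',c'))$ into $\Ans^\tau((H,X),(G,c))$, and count fibres. However, the fibre computation contains a genuine error.

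You claim that $\hat h(x_j)$ lies in a cloned block precisely when $\tau(x_j)=x_j$, and conclude that the fibre size is $\prod_{j:\tau(x_j)=x_j} z_j$, which you then equate with $\prod_i z_i^{d_i}$ by asserting $d_i\in\{0,1\}$. This is wrong on two counts. First, $\tau\colon X\to V(H)$ is an arbitrary map, so $\tau(x_j)$ can equal $x_i$ for any $i$, not just $i=j$; your step ``$\tau(x_j)\in\{x_1,\dots,x_k\}$, i.e.\ $\tau(x_j)=x_j$'' is a non sequitur. Second, $d_i=|\{u\in X\mid\tau(u)=x_i\}|$ need not lie in $\{0,1\}$: several free variables can be sent to the same $x_i$. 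For a concrete failure, take $k=2$ and let $\tau$ swap $x_1$ and $x_2$; your product is empty (value $1$), whereas the true fibre size is $z_1 z_2$.

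The correct count is as follows. From $\gamma(c(\hat h(x_j)))=\tau(x_j)$ and the fact that $\gamma$ is the identity on $X$, if $\tau(x_j)=x_i\in X$ then $c(\hat h(x_j))=x_i$, so $\hat h(x_j)\in B_{x_i}$ and has exactly $z_i$ lifts; if $\tau(x_j)\in Y$ then $\hat h(x_j)$ lies in a non-cloned block and has exactly one lift. Hence the fibre size equals $\prod_{i=1}^k z_i^{|\{j:\tau(x_j)=x_i\}|}=\prod_{i=1}^k z_i^{d_i}$. With this correction your argument goes through and coincides with the paper's.
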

\begin{proof}
We encourage the reader to consider Figure~\ref{fig:cd_4} for keeping track of the mappings and homomorphisms used in the proof.
 Let $\rho\colon V(G') \to V(G)$ be the function that maps cloned vertices to their original counterparts, that is
 \[ \rho(w) = \begin{cases}
w & \mbox{if $w\in Y$   } \\
\pi_1(w) & \mbox{otherwise}  
\end{cases} \]
Observe that $\rho$ is a homomorphism from~$G'$ to~$G$.
We define an equivalence relation on the set 
$\Ans^\tau((H,X),(G',c'))$ by setting $h \sim h'$ if and only if 
$\rho(h(\cdot)) = \rho(h'(\cdot))$. 
For every $h\in \Ans^\tau((H,X),(G',c'))$, we have $\rho(h(\cdot)) \in \Ans((H,X),G)$.

Moreover, since $h \in \Ans^\tau((H,X),(G',c'))$,
$\gamma(c'(h(\cdot))) = \tau$. 
From the definitions of $\rho$ and $c'$ it is immediate that 
$\gamma(c(\rho(\cdot)))   =  \gamma(c'(\cdot))$.
Thus $\gamma(c(\rho(h(\cdot))) =   \gamma(c'(h(\cdot))) = \tau$,  proving that 
$\rho(h(\cdot)) \in \Ans^\tau((H,X),(G,c))$.
Consequently, we can represent each equivalence class of $\sim$ by a homomorphism $\hat{h} \in \Ans^\tau((H,X),(G,c))$. 

Finally, each equivalence class  has size $\prod_{i=1}^k z_i^{d_i}$, since for each $i\in[k]$, there are $z_i$ possibilities for each of the $d_i$ vertices $u\in V(H)$ with $\tau(u) = v_i$. 
\end{proof}

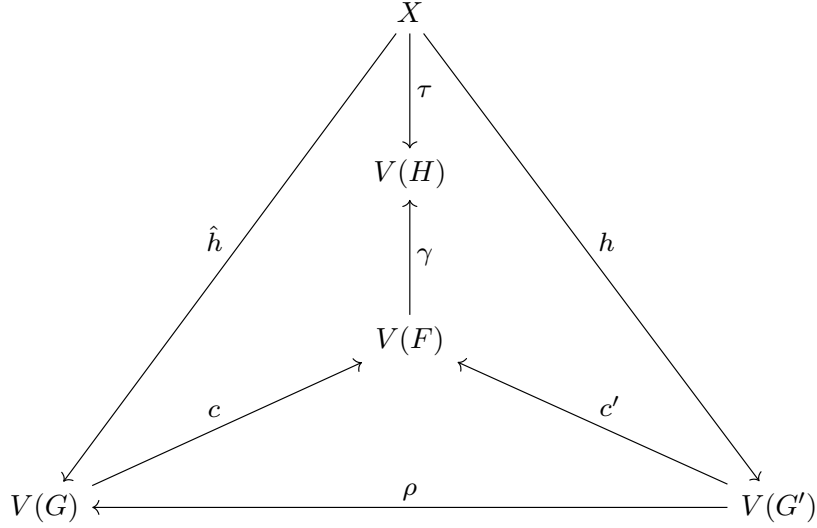
\begin{figure}
    \centering
\begin{tikzcd}[column sep=4em, row sep=4em]
~&~&X \arrow[llddd, swap, "{\scalebox{1.25}{\(\hat{h}\)}}"] \arrow[rrddd, "{\scalebox{1.25}{\(h\)}}"] \arrow[d, "{\scalebox{1.25}{\(\tau\)}}"] \\
~&~&V(H)\\
~&~&V(F) \arrow[u, swap, "{\scalebox{1.25}{\(\gamma\)}}"]\\
V(G) \arrow[urr, "{\scalebox{1.25}{\(c\)}}"] & ~& ~&~& V(G') \arrow[ull, swap, "{\scalebox{1.25}{\(c'\)}}"] \arrow[llll, swap, "{\scalebox{1.25}{\(\rho\)}}"]
\end{tikzcd}
    \caption{Overview of the mappings and homomorphisms used in the proof of Lemma~\ref{lem:cloning_property_projections}.} 
    \label{fig:cd_4}
\end{figure}

Lemma~\ref{lem:isolate_colourful} is the main technical lemma in this section, which is concerned with $\chi(F,\emptyset)$ and $\chi(F,\{x_1\})$.
Recall that the projection $\pi_1$ is an $F$-colouring of $\chi(F,\emptyset)$ and $\chi(F,\{x_1\})$. Moreover $\gamma$ is a homomorphism from $F$ to $H$. Thus:
\begin{observation}\label{obs:gamma_pi1_Hcol}
    $\gamma(\pi_1(\cdot))$ is an $H$-colouring of both $\chi(F,\emptyset)$ and $\chi(F,\{x_1\})$. 
\end{observation}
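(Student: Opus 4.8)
The plan is to prove the statement that $\gamma(\pi_1(\cdot))$ is an $H$-colouring of both $\chi(F,\emptyset)$ and $\chi(F,\{x_1\})$, i.e.\ that it is a homomorphism from each of these graphs to $H$. This is a composition-of-homomorphisms argument, so the proof should be short.

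First I would recall the two ingredients already established in the excerpt. By Observation~\ref{obs:rob}, the projection $\pi_1$ is a homomorphism from $\chi(F,W)$ to $F$ for any $W\subseteq V(F)$; in particular this applies to $W=\emptyset$ and to $W=\{x_1\}$. By Observation~\ref{obs:gamma_is_col}, the map $\gamma=\gamma[H,X,\ell]$ is a homomorphism from $F=F_\ell(H,X)$ to $H$. The key step, which is entirely routine, is the general fact that the composition of two homomorphisms is a homomorphism: if $f\colon A\to B$ and $g\colon B\to C$ are graph homomorphisms, then $g\circ f\colon A\to C$ is a graph homomorphism, since for every edge $\{u,v\}\in E(A)$ we have $\{f(u),f(v)\}\in E(B)$ and hence $\{g(f(u)),g(f(v))\}\in E(C)$.

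Applying this with $A=\chi(F,\emptyset)$ (respectively $A=\chi(F,\{x_1\})$), $B=F$, $C=H$, $f=\pi_1$, and $g=\gamma$, we conclude that $\gamma\circ\pi_1 = \gamma(\pi_1(\cdot))$ is a homomorphism from $\chi(F,\emptyset)$ to $H$ and also a homomorphism from $\chi(F,\{x_1\})$ to $H$. Since, by Definition of $H$-colourings, a homomorphism from a graph $G$ to $H$ is precisely an $H$-colouring of $G$, this is exactly the claimed statement.

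There is essentially no main obstacle here: the only mild point of care is bookkeeping about domains, namely that $\pi_1$ should formally be restricted to $V(\chi(F,W))$ before composing with $\gamma$ — but the excerpt explicitly notes just after Observation~\ref{obs:rob} that this restriction is suppressed in the notation, so no further comment is needed. The observation is stated precisely so that it can be quoted in the subsequent development (the ``reduction to the colourful case''), where one partitions answer sets of $(H,X)$ according to the induced $H$-colouring via $\gamma(\pi_1(\cdot))$.
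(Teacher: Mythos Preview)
Your proposal is correct and matches the paper's own justification exactly: the paper states the observation immediately after recalling that $\pi_1$ is an $F$-colouring of $\chi(F,\emptyset)$ and $\chi(F,\{x_1\})$ and that $\gamma$ is a homomorphism from $F$ to $H$, so the observation follows by composition of homomorphisms. There is nothing to add.
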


Recall that $\bij(X)$ is the set of all bijections from $X$ to $X$.

\begin{lemma}\label{lem:isolate_colourful} 
Let $c = \gamma(\pi_1(\cdot))$.
Let $\vv = (x_1,\ldots,x_k)$.
Suppose that
\[ \sum_{\tau \in \bij(X)} 
|\Ans^\tau((H,X),(\chi(F,\emptyset),c  ))| - 
|\Ans^\tau((H,X),(\chi(F,\{x_1\}),c))|  \neq 0.\]
Then there is a $k$-tuple $\vz=(z_1,\dots,z_k)$ of positive integers such that
\[ |\Ans((H,X),
\newgraph( \chi(F,\emptyset),F, c,\vv,\vz) )|  \neq |\Ans((H,X),
\newgraph( \chi(F,\{x_1\}), F,c,\vv,\vz) )|  .\]
\end{lemma}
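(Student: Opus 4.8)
The goal is to go from a nonvanishing alternating sum indexed by bijections $\tau \in \bij(X)$ to a genuine inequality of answer counts in two cloned CFI graphs. The natural tool is the polynomial-interpolation idea already used in Lemma~\ref{lem:gen}, now applied with the extra bookkeeping supplied by the $H$-colouring $c = \gamma(\pi_1(\cdot))$ and the cloning operation $\newgraph(\cdot,F,c,\vv,\vz)$. First I would fix notation: write $G_\emptyset = \chi(F,\emptyset)$ and $G_1 = \chi(F,\{x_1\})$, both equipped with the $F$-colouring $\pi_1$ (Observation~\ref{obs:rob}) and hence with the $H$-colouring $c$ (Observation~\ref{obs:gamma_pi1_Hcol}), and let $\vv = (x_1,\dots,x_k)$. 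By Observation~\ref{obs:partition_and_projection}, for each graph among the two cloned graphs, $|\Ans((H,X),\cdot)|$ splits as $\sum_{\tau\colon X\to V(H)} |\Ans^\tau((H,X),(\cdot,\cdot))|$ where the colouring used for the split is $\newcol(\cdot,F,c,\vv,\vz)$ composed appropriately with $\gamma$. Then Lemma~\ref{lem:cloning_property_projections} tells me exactly how each term scales under cloning: $|\Ans^\tau((H,X),(\newgraph(G,F,c,\vv,\vz),c'))| = |\Ans^\tau((H,X),(G,c))| \cdot \prod_{i=1}^k z_i^{d_i(\tau)}$, where $d_i(\tau) = |\{u\in X\mid \tau(u)=x_i\}|$.

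Next I would subtract the two expansions. Setting $a_\tau := |\Ans^\tau((H,X),(G_\emptyset,c))| - |\Ans^\tau((H,X),(G_1,c))|$ for each $\tau\colon X\to V(H)$, the difference of the two cloned answer counts equals $\sum_{\tau} a_\tau \cdot \prod_{i=1}^k z_i^{d_i(\tau)}$. The exponent vector $(d_1(\tau),\dots,d_k(\tau))$ is a composition of $|X|=k$ into $k$ nonnegative parts; crucially, $(d_1(\tau),\dots,d_k(\tau)) = (1,1,\dots,1)$ precisely when $\tau$ restricted to $X$ is a bijection onto $\{x_1,\dots,x_k\}=X$, i.e.\ $\tau\in\bij(X)$. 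So if I group the sum by the value of the monomial $\prod_i z_i^{d_i(\tau)}$, the coefficient of the squarefree monomial $z_1 z_2\cdots z_k$ is exactly $\sum_{\tau\in\bij(X)} a_\tau$, which by hypothesis is nonzero. The remaining step is a standard argument that a multivariate polynomial whose coefficient at $z_1\cdots z_k$ is nonzero cannot be identically zero on the positive integers: one can, e.g., plug in $z_i = p_i^N$ for distinct primes $p_i$ and large $N$, or invoke the Combinatorial Nullstellensatz / a Vandermonde-style argument on one variable at a time (as in Lemma~\ref{lem:gen}), to conclude there is a tuple $\vz=(z_1,\dots,z_k)$ of positive integers on which the polynomial is nonzero. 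For that tuple, $|\Ans((H,X),\newgraph(G_\emptyset,F,c,\vv,\vz))| \neq |\Ans((H,X),\newgraph(G_1,F,c,\vv,\vz))|$, which is the claim.

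One point I would be careful about is that the sum defining the answer count ranges over \emph{all} $\tau\colon X\to V(H)$, not just bijections, so I must make sure the non-bijection terms do not accidentally also contribute to the coefficient of $z_1\cdots z_k$. This is the only place where the combinatorics matters: if $\tau$ is not a bijection from $X$ onto $X$, then either $\tau$ misses some $x_j$ (so $d_j(\tau)=0$ and the monomial lacks the factor $z_j$) or $\tau$ sends some vertex of $X$ outside $X$ and/or sends two vertices of $X$ to the same $x_i$ (so some $d_i(\tau)\geq 2$ and the monomial has $z_i^2\mid$ it) — in both cases the monomial is distinct from $z_1 z_2\cdots z_k$. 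Hence the coefficient isolation is exact, and this is really the crux of why the hypothesis is phrased in terms of $\bij(X)$.

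\textbf{Main obstacle.} The genuinely delicate part is not the interpolation — that is routine — but setting up the colouring bookkeeping so that Lemma~\ref{lem:cloning_property_projections} applies verbatim. I need the colouring that $\newcol(G_\emptyset,F,c,\vv,\vz)$ and $\newcol(G_1,F,c,\vv,\vz)$ induce (which is an $F$-colouring) to interact correctly with $\gamma$ to give the $H$-colouring used in Observation~\ref{obs:partition_and_projection}, and I must verify that the tuple $\vv$ of \emph{$F$-vertices} $(x_1,\dots,x_k)$ is indeed a tuple of distinct vertices of $F=F_\ell(H,X)$ (which it is, since $X\subseteq V(F)$). Once the diagram of colourings commutes — $\gamma\circ\newcol(\cdot)$ is the $H$-colouring and it agrees with $c$ on primal vertices and is constant on clone-fibres — everything else is the algebra above.
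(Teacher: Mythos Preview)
Your proposal is correct and follows essentially the same route as the paper: split $|\Ans((H,X),\cdot)|$ via Observation~\ref{obs:partition_and_projection}, apply Lemma~\ref{lem:cloning_property_projections} to obtain the monomial weights $\prod_i z_i^{d_i(\tau)}$, and isolate the coefficient of $z_1\cdots z_k$ as the sum over $\bij(X)$. The only cosmetic difference is that the paper phrases the interpolation step contrapositively (assuming equality for all $\vz$ and deriving that the coefficient vanishes), whereas you argue directly that a polynomial with a nonzero coefficient cannot vanish on all positive-integer tuples; the colouring bookkeeping you flag as the ``main obstacle'' is exactly the point the paper handles (somewhat implicitly) by noting that $\gamma$ is the identity on $X$, so cloning along $\vv=(x_1,\dots,x_k)$ with respect to $c=\gamma\circ\pi_1$ coincides with cloning with respect to $\pi_1$.
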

\begin{proof}
Let $G_0 = \newgraph( \chi(F,\emptyset), c,\vv,\vz) ) $,
$c_0 = \newcol( \chi(F,\emptyset),F, c,\vv,\vz) )$,
$G_1 = \newgraph( \chi(F,\{x_1\}),F, c,\vv,\vz) ) $, and
$c_1 = \newcol( \chi(F,\{x_1\}),F, c,\vv,\vz) )$,

Suppose for contradiction that, for every $k$-tuple $\vz$ of positive integers, 
$ |\Ans((H,X), G_0)|    = |\Ans((H,X), G_1)|$. By 
Observation~\ref{obs:partition_and_projection}
$$\sum_{\tau\colon X \to V(H)} \Big(|\Ans^\tau ((H,X),(G_0,c_0))| - 
  |\Ans^\tau ((H,X),(G_1,c_1))\Big)| =0.$$
Let $d^\tau_i := |\{ u\in X \mid \tau(u)=x_i \}|$. Applying Lemma~\ref{lem:cloning_property_projections},
\begin{align}\label{eq:poly_interpol_setup}
\sum_{\tau\colon X \to V(H)}  
\Big( |\Ans^\tau ( (H,X) , (\chi(F,\emptyset),c) )| - 
    |\Ans^\tau  ( (H,X),  (\chi(F,\{x_1\}),c))| \Big) 
 \prod_{i=1}^k z_i^{d^\tau_i}     
     = 0.
\end{align} 

Define $b_\tau := |\Ans^\tau((H,X),(\chi(F,\emptyset),c))| - |\Ans^\tau((H,X),(\chi(F,\{x_1\}),c))|$. 
Then, treating the $z_i$ as variables, we can define a $k$-variate polynomial
\begin{align*}
    P(z_1,\dots,z_k) := \sum_{\tau: X \to V(H)} b_\tau \cdot \prod_{i=1}^k z_i^{d^\tau_i}\,.
\end{align*}
By~\eqref{eq:poly_interpol_setup}, $P(z_1,\dots,z_k)=0$ for all $k$-tuples of positive integers $(z_1,\dots,z_k)$. We wish to apply polynomial interpolation, which requires us to collect coefficients for all monomials: To this end, observe first that all $d^\tau_i$ are bounded from above by $k$ since $|X|=k$. Let $\mathcal{A}_k$ be the set of all $k$-tuples of integers $(a_1,\dots,a_k)$ with $0 \leq a_i \leq k$ for each $i\in[k]$.
Given a $k$-tuple $\Vec{a}\in \mathcal{A}_k$, we write $\mathsf{coeff}(\Vec{a})$ for the coefficient of the monomial $\prod_{i=1}^k z_i^{a_i}$, that is,
\begin{align*}
    P(z_1,\dots,z_k) = \sum_{\Vec{a}\in \mathcal{A}_k} \mathsf{coeff}(\Vec{a}) \cdot \prod_{i=1}^k z_i^{a_i}\,.
\end{align*}
By~\eqref{eq:poly_interpol_setup},  $P(z_1,\dots,z_k)=0$ for all $k$-tuples of positive integers $(z_1,\dots,z_k)$. By multivariate polynomial interpolation, only the constant $0$ polynomial can satisfy this conditions. Concretely, we obtain $\mathsf{coeff}(1,1,\dots,1)=0$. This   yields the desired  result since $\bij(X)$ is the set of functions $\tau\colon X \to V(H)$ with $d^\tau_i=1$ for all $i\in[k]$. Thus,
 $$
 \sum_{\tau \in \bij(X)} 
|\Ans^\tau((H,X),(\chi(F,\emptyset),c  ))| - 
|\Ans^\tau((H,X),(\chi(F,\{x_1\}),c))| \\
    = \mathsf{coeff}(1,1,\dots,1) = 0,$$
contradicting the assumption in the statement of the lemma, and
therefore
completing the proof.
\end{proof}

\begin{corollary}\label{cor:almost_final}
Suppose that the treewidth of~$F$ is $t$ and that $F$ is connected. 
Let $c = \gamma(\pi_1(\cdot))$.
Suppose that  
\[ \sum_{\tau \in \bij(X)} |\Ans^\tau((H,X),(\chi(F,\emptyset),
c ))| - |\Ans^\tau((H,X),(\chi(F,\{x_1\}),c))| \neq 0.\]
Then the WL-dimension of $(H,X)$ is at least $t$.
\end{corollary}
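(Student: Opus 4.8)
The plan is to chain together three facts: the cloning operation of Definition~\ref{def:cloning_operation} preserves $(t-1)$-WL-equivalence on CFI graphs (Lemma~\ref{lem:main_cloning_lemma}), the hypothesis of this corollary feeds exactly into Lemma~\ref{lem:isolate_colourful}, and the WL-dimension is by definition the least $k$ for which the answer-count is $k$-WL-invariant. First I would set $\vv=(x_1,\dots,x_k)$ and $c=\gamma(\pi_1(\cdot))$, which by Observation~\ref{obs:gamma_pi1_Hcol} is an $H$-colouring of both $\chi(F,\emptyset)$ and $\chi(F,\{x_1\})$ (though what actually matters for the cloning lemmas is that $\pi_1$ is an $F$-colouring, via Observation~\ref{obs:rob}). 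Applying Lemma~\ref{lem:isolate_colourful} to the hypothesis, I obtain a concrete $k$-tuple $\vz=(z_1,\dots,z_k)$ of positive integers witnessing
\[ |\Ans((H,X),\newgraph(\chi(F,\emptyset),F,c,\vv,\vz))| \neq |\Ans((H,X),\newgraph(\chi(F,\{x_1\}),F,c,\vv,\vz))|. \]

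Next I would show these two graphs are $(t-1)$-WL-equivalent. Here $F$ has treewidth $t$, so writing $t = (t-1)+1$ and invoking Lemma~\ref{lem:main_cloning_lemma} with $W=\{x_1\}$ and the $F$-colouring $\pi_1$ gives
\[ \newgraph(\chi(F,\emptyset),F,\pi_1,\vv,\vz) \cong_{t-1} \newgraph(\chi(F,\{x_1\}),F,\pi_1,\vv,\vz). \]
A minor bookkeeping point I would address is that Lemma~\ref{lem:main_cloning_lemma} is phrased in terms of $\newgraph(\cdot,F,\pi_1,\cdot,\cdot)$ whereas Lemma~\ref{lem:isolate_colourful} produced graphs built with the colouring $c=\gamma(\pi_1(\cdot))$; but cloning a colour class of $F$ under $\pi_1$ is literally cloning the vertex fibre $\pi_1^{-1}(v_i)$, and since $\gamma$ is injective on $X$ (each $x_i$ has its own colour), the fibres $c^{-1}(x_i)$ and $\pi_1^{-1}(x_i)$ coincide for the vertices $x_i\in X$ we are cloning — so $\newgraph(\chi(F,W),F,c,\vv,\vz)$ and $\newgraph(\chi(F,W),F,\pi_1,\vv,\vz)$ are the same graph. (If the paper prefers, one can simply note both cloning operations produce isomorphic graphs and WL-equivalence is isomorphism-invariant.)

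Finally I would conclude: we have exhibited a pair of graphs $G = \newgraph(\chi(F,\emptyset),F,c,\vv,\vz)$ and $G' = \newgraph(\chi(F,\{x_1\}),F,c,\vv,\vz)$ with $G\cong_{t-1}G'$ but $|\Ans((H,X),G)|\neq|\Ans((H,X),G')|$. Hence the graph parameter $G\mapsto|\Ans((H,X),G)|$ is not $(t-1)$-WL-invariant, so by the definition of WL-dimension (Definition~\ref{def:WLdim}, together with the monotonicity remark that $k'$-WL-invariance implies $k$-WL-invariance for $k\ge k'$) its WL-dimension is at least $t$. I expect the only real friction to be the notational reconciliation between the $c$-cloning and $\pi_1$-cloning described above, plus making sure the treewidth bookkeeping ($F$ has treewidth exactly $t$, so Lemma~\ref{lem:main_cloning_lemma} applies with the role of its ``$t+1$'' played by our ``$t$'') is stated cleanly; the substantive work has all been done in Lemmas~\ref{lem:main_cloning_lemma} and~\ref{lem:isolate_colourful}.
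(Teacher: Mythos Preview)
Your proposal is correct and follows essentially the same approach as the paper's proof: invoke Lemma~\ref{lem:main_cloning_lemma} (with the treewidth offset handled as you describe) to get $(t-1)$-WL-equivalence of the cloned graphs, invoke Lemma~\ref{lem:isolate_colourful} to get the answer-count discrepancy, and reconcile the $c$-versus-$\pi_1$ cloning via the observation that $\gamma$ is the identity on $X$ so $c^{-1}(x_i)=\pi_1^{-1}(x_i)$. The paper orders the two lemma invocations the other way around and phrases the reconciliation as ``$\gamma$ is the identity on $X$'' rather than ``$\gamma$ is injective on $X$'', but these are cosmetic differences.
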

\begin{proof}
Let $\vv = (x_1,\ldots,x_k)$.
By Lemma~\ref{lem:main_cloning_lemma}, for every $k$-tuple of positive integers $\Vec{z}$,
$$
\newgraph(\chi(F,\emptyset),F,\pi_1,\vv,\vz) 
\cong_{(t-1)}  \newgraph(\chi(F,\{x_1\}),F,\pi_1,\vv,\vz).$$

From the definition of the cloning operation (Definition~\ref{def:cloning_operation}) and the fact that $\gamma$ is the identity on $X$ (see Definition~\ref{def:gamma}), 
 $\newgraph( \chi(F,\emptyset),F,\pi_1,\vv,\vz) = 
 \newgraph( \chi(F,\emptyset), F, c, \vv,\vz)$.
Similarly, 
$\newgraph( \chi(F,\{x_1\}),F,\pi_1,\vv,\vz) = 
 \newgraph( \chi(F,\{x_1\}), F, c, \vv,\vz)$.
So for every $k$-tuple of positive integers $\vz$,
$
\newgraph(\chi(F,\emptyset),F,c,\vv,\vz) 
\cong_{(t-1)}  \newgraph(\chi(F,\{x_1\}),F,c,\vv,\vz)$.

However, by Lemma~\ref{lem:isolate_colourful},  
there is a $k$-tuple $\vz=(z_1,\dots,z_k)$ of positive integers such that
\[ |\Ans((H,X),
\newgraph( \chi(F,\emptyset), F,c,\vv,\vz) )|  \neq |\Ans((H,X),
\newgraph( \chi(F,\{x_1\}), F,c,\vv,\vz) )|.\]
 
This shows that the function $G\mapsto |\Ans((H,X),G)|$ can distinguish $(t-1)$-WL invariant graphs and thus the WL-dimension of $(H,X)$ at least $t$.
\end{proof}

We will consider the set of partial automorphisms of a conjunctive query, defined as follows:
\begin{definition}[$\mathsf{Aut}(H,X)$]
$\mathsf{Aut}(H,X) :=\{\tau: X \to X \mid \exists a \in \mathsf{Aut}(H): a|_X = \tau\}$.
\end{definition}
Note that a mapping $\tau: X \to X$ can only be extended to an automorphism of $H$ if it is bijective. Thus:
\begin{observation}
  $\mathsf{Aut}(H,X)\subseteq \bij(X)$  
\end{observation}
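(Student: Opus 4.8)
The statement to prove is merely the observation $\mathsf{Aut}(H,X)\subseteq \bij(X)$, whose content is entirely captured by the sentence preceding it: a function $\tau\colon X\to X$ that extends to an automorphism $a$ of $H$ must itself be a bijection. My plan is therefore just to unwind the definitions carefully. Suppose $\tau\in\mathsf{Aut}(H,X)$, so by Definition~\ref{def:aaa} — that is, Definition of $\mathsf{Aut}(H,X)$ — there is some $a\in\mathsf{Aut}(H)$ with $a|_X=\tau$. Since $a$ is an automorphism of $H$, it is in particular a bijection from $V(H)$ to $V(H)$.

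The only non-immediate point is that $a$ restricts to a bijection \emph{on} $X$ — a priori $a$ could map some vertex of $X$ outside of $X$. But here we use that $a|_X=\tau$ has codomain $X$: for every $x\in X$ we have $a(x)=\tau(x)\in X$, so $a(X)\subseteq X$. Since $a$ is injective on $V(H)$ and $X$ is finite, $a$ restricted to $X$ is an injective map $X\to X$, hence a bijection of the finite set $X$; equivalently, $a(X)\subseteq X$ together with $|a(X)|=|X|$ (injectivity) forces $a(X)=X$. Therefore $\tau=a|_X$ is a bijection from $X$ to $X$, i.e.\ $\tau\in\bij(X)$.

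The ``main obstacle,'' such as it is, is purely presentational: being careful about the distinction between $a$ as a map on all of $V(H)$ and its restriction to $X$, and noting that finiteness of $X$ is what upgrades injectivity to bijectivity. No combinatorial or width-theoretic input is needed; this is a one-line consequence of the definitions, and I would present it in one or two sentences, perhaps inline rather than as a displayed proof.
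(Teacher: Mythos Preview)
Your proposal is correct and matches the paper's approach: the paper treats this as an immediate consequence of the definitions, justified only by the preceding sentence that a map $\tau\colon X\to X$ extending to an automorphism of $H$ must be bijective. Your write-up is simply a slightly more explicit unwinding of that same observation.
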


The following result will be very useful for the remainder of this section; it can be found in Corollary 54.4 in the full version~\cite{DellRW19arxiv} of~\cite{DellRW19}.

\begin{lemma}[\cite{DellRW19arxiv,DellRW19}]\label{lem:minimal_auto}
Let $h$ be a homomosphism from~$H$ 
to~$H$ that 
surjectively maps~$X$ onto~$X$. 
If $(H,X)$ is counting minimal then $h$ is an automorphism of~$H$.
\end{lemma}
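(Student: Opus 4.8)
The plan is to establish the contrapositive-adjacent statement by a counting/orbit argument. Suppose the query $(H,X)$ is counting minimal, and let $c = \gamma(\pi_1(\cdot))$ be the $H$-colouring of both $\chi(F,\emptyset)$ and $\chi(F,\{x_1\})$ from Observation~\ref{obs:gamma_pi1_Hcol}. The goal (toward applying Corollary~\ref{cor:almost_final}) is to show that the alternating sum
\[
\sum_{\tau \in \bij(X)} |\Ans^\tau((H,X),(\chi(F,\emptyset),c))| - |\Ans^\tau((H,X),(\chi(F,\{x_1\}),c))|
\]
is nonzero, and for this the present lemma (Lemma~\ref{lem:minimal_auto}) is the crucial structural input. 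First I would note that an answer $h \in \Ans^\tau((H,X),(G,c))$ with $\tau \in \bij(X)$ extends to some homomorphism $\hat h$ from $H$ into the coloured graph, and composing $\hat h$ with the colouring $c$ (which projects $\chi(F,W)$ down to $H$ via $\gamma \circ \pi_1$) yields a homomorphism from $H$ to $H$ whose restriction to $X$ is the bijection $\tau$. Since $X$ is finite and $\tau$ is a bijection of $X$, this self-map of $H$ surjectively maps $X$ onto $X$, so Lemma~\ref{lem:minimal_auto} forces it to be an automorphism of $H$; hence $\tau \in \mathsf{Aut}(H,X)$. This is the payoff: it restricts the relevant $\tau$ in the alternating sum from all of $\bij(X)$ down to $\mathsf{Aut}(H,X)$, which is exactly what makes the subsequent symmetry/orbit-counting argument tractable.

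For the proof of Lemma~\ref{lem:minimal_auto} itself (which the excerpt attributes to \cite{DellRW19arxiv,DellRW19}), the approach I would take is the standard minimality argument for conjunctive queries. Let $h \colon H \to H$ be a homomorphism with $h(X) = X$ (surjectively, hence bijectively on the finite set $X$). Consider the image $H' = h(H)$, a subgraph of $H$, together with the set $X' = h(X) = X$; then $(H',X)$ is a conjunctive query that is a subgraph of $(H,X)$. The key step is to verify that $(H',X) \sim (H,X)$, i.e.\ they are counting equivalent: on one hand $\Ans((H',X),G) \supseteq \Ans((H,X),G)$ because any homomorphism $H \to G$ restricts to a homomorphism $H' \to G$ with the same behaviour on $X$; on the other hand, given a homomorphism $g \colon H' \to G$, the composition $g \circ h \colon H \to G$ is a homomorphism, and since $h|_X$ is a bijection of $X$, the map $a \mapsto a \circ h|_X$ is a bijection between $\Ans((H',X),G)$ and $\Ans((H,X),G)$ — so the two answer sets have equal cardinality for every $G$. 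Thus $(H',X) \sim (H,X)$, and by counting minimality of $(H,X)$ we must have $H' = H$, so $h$ is surjective; a surjective endomorphism of a finite graph is an automorphism, which completes the argument.

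The main obstacle I anticipate is pinning down the bijection between $\Ans((H',X),G)$ and $\Ans((H,X),G)$ with enough care: one must be sure that precomposition with $h$ genuinely induces a bijection on the answer sets (not just an injection in one direction), which hinges on $h|_X$ being invertible as a map $X \to X$ — true because $X$ is finite and $h|_X$ is surjective — and on the fact that extending an answer to a full homomorphism is where the ``counting'' (rather than merely ``set-theoretic'') content lives. In our application this is handled cleanly because $\bij(X)$ already consists of bijections, so $h|_X$ being surjective is automatic; the only genuinely new ingredient we need from Lemma~\ref{lem:minimal_auto} is the implication $h|_X \in \bij(X) \Rightarrow h \in \mathsf{Aut}(H)$, and hence $\tau = h|_X \in \mathsf{Aut}(H,X)$, which is precisely what the lemma delivers off the shelf.
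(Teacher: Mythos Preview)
The paper does not prove Lemma~\ref{lem:minimal_auto}; it simply cites it as Corollary~54.4 of~\cite{DellRW19arxiv}. So there is no in-paper proof to compare against.

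Your second paragraph gives the standard argument, and it is essentially correct. One small imprecision: you assert that $a \mapsto a \circ h|_X$ is a \emph{bijection} from $\Ans((H',X),G)$ to $\Ans((H,X),G)$, but what is immediate is only that this map is an \emph{injection} (since $h|_X$ is a bijection of $X$). Combined with the containment $\Ans((H,X),G) \subseteq \Ans((H',X),G)$ that you already established, this suffices to conclude equal cardinalities for every finite $G$, and hence $(H',X) \sim (H,X)$; the map is then a bijection a posteriori. Counting minimality forces $H'=H$, and the final step ``a surjective endomorphism of a finite simple graph is an automorphism'' is routine (vertex-bijectivity gives edge-injectivity, hence edge-bijectivity by finiteness).

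Your first and third paragraphs are not about proving the lemma but about how it is applied downstream; they essentially anticipate Lemmas~\ref{lem:aut_magig_1} and~\ref{lem:aut_magic_2} of the paper. That discussion is accurate, but it is orthogonal to the proof of the statement itself.
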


\begin{lemma}\label{lem:aut_magig_1}
    Let $G$ be a graph, let $c$ be a homomorphism from~$G$ to~$H$, and
let $\tau \in \bij(X) \setminus \mathsf{Aut}(H,X)$. If $(H,X)$ is counting minimal, then $|\Ans^\tau((H,X),(G,c))| = 0$.
\end{lemma}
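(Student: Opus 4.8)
\textbf{Proof plan for Lemma~\ref{lem:aut_magig_1}.}

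The plan is to argue by contradiction: suppose $|\Ans^\tau((H,X),(G,c))| \neq 0$ for some $\tau \in \bij(X) \setminus \Aut(H,X)$, and use a homomorphism witnessing a nonempty $\Ans^\tau$ to manufacture a self-map of $H$ that surjects $X$ onto $X$, then invoke Lemma~\ref{lem:minimal_auto} to conclude $\tau$ must in fact lie in $\Aut(H,X)$. First I would unpack the definitions: if $h \in \Ans^\tau((H,X),(G,c))$, then by Definition~\ref{def:partition_cq}, $h\colon X \to V(G)$ extends to some $\hat h \in \Hom(H,G)$ with $c(h(x)) = \tau(x)$ for all $x \in X$; and since $h = \hat h|_X$, the composite $c \circ \hat h\colon V(H) \to V(H)$ is a homomorphism from $H$ to $H$ (composition of the homomorphisms $\hat h\colon H \to G$ and $c\colon G \to H$). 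Call this composite $g := c\circ \hat h$. On $X$ we have $g|_X = c \circ h = \tau$, and since $\tau \in \bij(X)$ is a bijection of $X$ onto itself, $g$ maps $X$ surjectively onto $X$.

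Now apply Lemma~\ref{lem:minimal_auto} to $g$: since $(H,X)$ is counting minimal and $g\colon H \to H$ surjectively maps $X$ onto $X$, the lemma gives that $g$ is an automorphism of $H$. But then $g \in \Aut(H)$ with $g|_X = \tau$, which by the definition of $\Aut(H,X)$ means precisely $\tau \in \Aut(H,X)$. This contradicts the hypothesis $\tau \notin \Aut(H,X)$. Hence $\Ans^\tau((H,X),(G,c))$ must be empty, i.e.\ $|\Ans^\tau((H,X),(G,c))| = 0$.

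I do not expect any serious obstacle here; the only point requiring a little care is the bookkeeping of which map is a homomorphism in which direction — specifically confirming that $c \circ \hat h$ is a well-defined homomorphism $H \to H$ and that its restriction to $X$ equals $\tau$ (not merely agrees up to the action of some automorphism). The surjectivity of $g$ onto $X$ is immediate from $\tau$ being a bijection of the finite set $X$. Everything else is a direct application of Lemma~\ref{lem:minimal_auto} and the definition of $\Aut(H,X)$.
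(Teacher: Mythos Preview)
Your proposal is correct and follows essentially the same argument as the paper's proof: assume a witness $h$ exists, extend it to a homomorphism $\hat h\colon H\to G$, compose with $c$ to get a self-homomorphism of $H$ that surjects $X$ onto $X$, and invoke Lemma~\ref{lem:minimal_auto} to force $\tau\in\Aut(H,X)$, a contradiction. The only differences are notational (the paper writes $\varphi$ for your $\hat h$ and $c(\varphi(\cdot))$ for your $g$).
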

\begin{proof}
Assume for contradiction that there is a homomorphism $h\in \Ans^\tau((H,X),(G,c))$. Then there is $\varphi \in \Hom(H,G)$ with $\varphi|_X = h$. Moreover,
$c(h(\cdot))=\tau(\cdot)$. Thus $c(\varphi(\cdot))$ is a
homomorphism from~$H$ to~$H$ such that,
for all $x\in X$, $c(\varphi(x))=\tau(x)$. Since $\tau\in \bij(X)$, $c(\varphi(\cdot))$ maps $X$ surjectively to itself. By Lemma~\ref{lem:minimal_auto}, $c(\varphi(\cdot))$ is an automorphism of~$H$, and thus $\tau \in \mathsf{Aut}(H,X)$,  contradicting the assumption in the statement of the lemma that $\tau \notin  \mathsf{Aut}(H,X)$.
\end{proof}

\begin{lemma}\label{lem:aut_magic_2}
    Let $G$ be a graph, let $c$ be a homomorphism from~$G$ to~$H$, and
let $\tau \in \mathsf{Aut}(H,X)$. Then
$|\Ans^\tau((H,X),(G,c))| = |\Ans^\mathsf{id}((H,X),(G,c))|$.
\end{lemma}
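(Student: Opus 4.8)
The plan is to exhibit an explicit bijection between $\Ans^\tau((H,X),(G,c))$ and $\Ans^{\mathsf{id}}((H,X),(G,c))$, which immediately gives the equality of cardinalities. Since $\tau \in \mathsf{Aut}(H,X)$, I would fix an automorphism $a \in \Aut(H)$ with $a|_X = \tau$. Because $\tau \in \bij(X)$, so that $\tau(X)=X$, the automorphism $a$ restricts to a bijection of $X$ onto itself, and hence $a^{-1}$ also restricts to a bijection of $X$ onto itself with $a^{-1}|_X = \tau^{-1}$. This observation is the one piece of bookkeeping that actually needs care; everything after it is routine composition of homomorphisms.

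Next I would define $\Phi \colon \Ans^\tau((H,X),(G,c)) \to \Ans^{\mathsf{id}}((H,X),(G,c))$ by $\Phi(h) = h \circ \tau^{-1}$, and check that it is well defined. Given $h \in \Ans^\tau((H,X),(G,c))$ there is $\varphi \in \Hom(H,G)$ with $\varphi|_X = h$; then $\varphi \circ a^{-1} \in \Hom(H,G)$, and for every $x \in X$ we have $(\varphi \circ a^{-1})(x) = \varphi(a^{-1}(x)) = h(\tau^{-1}(x)) = (h \circ \tau^{-1})(x)$, using $a^{-1}(x) \in X$ and $a^{-1}|_X = \tau^{-1}$. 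So $h \circ \tau^{-1} \in \Ans((H,X),G)$. For the colour condition, since $c(h(y)) = \tau(y)$ for all $y \in X$, we get $c\bigl(h(\tau^{-1}(x))\bigr) = \tau(\tau^{-1}(x)) = x = \mathsf{id}(x)$ for every $x \in X$, hence $\Phi(h) \in \Ans^{\mathsf{id}}((H,X),(G,c))$.

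Symmetrically I would define $\Psi \colon \Ans^{\mathsf{id}}((H,X),(G,c)) \to \Ans^\tau((H,X),(G,c))$ by $\Psi(h') = h' \circ \tau$, now composing with $a$ in place of $a^{-1}$; the identical computation (using $a|_X = \tau$) shows $\Psi$ is well defined. Finally, $\Phi$ and $\Psi$ are mutually inverse, since $(h \circ \tau^{-1}) \circ \tau = h$ and $(h' \circ \tau) \circ \tau^{-1} = h'$. Hence $\Phi$ is a bijection and the two answer sets have equal cardinality, which is the claim. I do not expect any real obstacle here: unlike Lemma~\ref{lem:aut_magig_1}, this lemma holds for arbitrary $(H,X)$ (no counting minimality is needed), and the only subtlety is making sure the restrictions of $a$ and $a^{-1}$ to $X$ are genuine bijections of $X$, which is exactly where $\tau \in \bij(X)$ enters.
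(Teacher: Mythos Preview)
Your proof is correct and essentially identical to the paper's: both fix an automorphism $a$ extending $\tau$, define the bijection $h \mapsto h \circ \tau^{-1}$ (witnessed on the homomorphism side by composing with $a^{-1}$), verify the colour condition via $\tau \circ \tau^{-1} = \mathsf{id}$, and exhibit the inverse $h' \mapsto h' \circ \tau$. Your closing remark that counting minimality is not needed here is also in line with the paper, which invokes minimality only in Lemma~\ref{lem:aut_magig_1}.
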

\begin{proof}
We will construct a bijection between  
$\Ans^\tau((H,X),(G,c))$ and $\Ans^\mathsf{id}((H,X),(G,c))$.

Since $\tau \in \mathsf{Aut}(H,X)$ there is an automorphism $a$ of $H$ such that $a|_X=\tau$. Let $a^{-1}$ be the inverse of $a$ 
in the group   $\mathsf{Aut}(H)$ and observe that $\tau^{-1} =a^{-1}|_X$.

We first show that, for every function $h\in \Ans^\tau((H,X),(G,c))$,
the function $h(\tau^{-1}(\cdot))$ is in 
$\Ans^\mathsf{id}((H,X),(G,c))$.
To see this, consider 
$h\in \Ans^\tau((H,X),(G,c))$.
From the definition of $\Ans^\tau((H,X),(G,c))$,
there is  a homomorphism $\varphi \in \Hom(H,G)$
with $\varphi|_X=h$ and that $c(h(\cdot))=\tau$. 
Since $a^{-1}$ is an automorphism of $H$,  $\varphi(a^{-1}(\cdot))\in \Hom(H,G)$. Also, for every $x\in X$,
$\varphi(a^{-1}(x))=\varphi(\tau^{-1}(x))$.
Since $\tau^{-1}(x)\in X$ and $\varphi|_X = h$,
$\varphi(\tau^{-1}(x))=h(\tau^{-1}(x))$.
Putting these equalities together, 
$\varphi(a^{-1}(x))=h(\tau^{-1}(x))$.
Thus $h(\tau^{-1}(\cdot))\in \Ans((H,X),G)$. 
Finally, since $h\in \Ans^\tau((H,X),(G,c))$,
$c(h(\tau^{-1}(\cdot))) = \tau(\tau^{-1}(\cdot))=\mathsf{id}(\cdot)$ so $h(\tau^{-1}(\cdot))\in \Ans^\mathsf{id}((H,X),(G,c))$, as required.

Let $b$ be the function that maps every  
$h\in \Ans^\tau((H,X),(G,c))$ to the
function $b(h) := h(\tau^{-1}(\cdot))$.
We have shown that $b$ is a map from 
$\Ans^\tau((H,X),(G,c))$ to $\Ans^\mathsf{id}((H,X),(G,c))$.
By a symmetric argument, the function~$\hat{b}$
that maps every $h\in \Ans^\mathsf{id}((H,X),(G,c))$
to $\hat{b}(h) := h(\tau(\cdot))$
is a map from $\Ans^\mathsf{id}((H,X),(G,c))$ to 
$\Ans^\tau((H,X),(G,c))$.

Since $b(\hat{b}(\cdot))=\mathsf{id}$ and $\hat{b}(b(\cdot))=\mathsf{id}$, $b$ and $\hat{b}$ are bijections, completing the proof. 
\end{proof}

Corollary~\ref{cor:tool} is the main tool that 
we will use to lower bound the WL-dimension of conjunctive queries.

\begin{corollary}\label{cor:tool}
Suppose that
$\mathsf{tw}(F)= t>1$, that $F$ is connected, and that  $(H,X)$ is  counting minimal.
Let $c = \gamma(\pi_1(\cdot))$.
Suppose that
$ |\Ans^{\mathsf{id}}((H,X),(\chi(F,\emptyset),c ))|$ is
not equal to $|\Ans^{\mathsf{id}}((H,X),(\chi(F,\{x_1\}),c ))|$. Then the WL-dimension of $(H,X)$ is at least $t$.
\end{corollary}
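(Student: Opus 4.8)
The plan is to deduce Corollary~\ref{cor:tool} from Corollary~\ref{cor:almost_final} by showing that, under the hypothesis that $(H,X)$ is counting minimal, the quantity
\[ \sum_{\tau \in \bij(X)} |\Ans^\tau((H,X),(\chi(F,\emptyset),c))| - |\Ans^\tau((H,X),(\chi(F,\{x_1\}),c))| \]
collapses to a nonzero multiple of the $\tau=\mathsf{id}$ term. So the first step is to observe that both $\chi(F,\emptyset)$ and $\chi(F,\{x_1\})$ come equipped with the $H$-colouring $c = \gamma(\pi_1(\cdot))$ (Observation~\ref{obs:gamma_pi1_Hcol}), so that Lemmas~\ref{lem:aut_magig_1} and~\ref{lem:aut_magic_2} apply with $G$ being either of these two graphs.

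Next I would split the sum over $\bij(X)$ into the part indexed by $\tau \in \mathsf{Aut}(H,X)$ and the part indexed by $\tau \in \bij(X) \setminus \mathsf{Aut}(H,X)$. By Lemma~\ref{lem:aut_magig_1}, every term in the second part vanishes — for both CFI graphs — since $(H,X)$ is counting minimal; so that part contributes $0$. For the first part, Lemma~\ref{lem:aut_magic_2} gives $|\Ans^\tau((H,X),(G,c))| = |\Ans^{\mathsf{id}}((H,X),(G,c))|$ for each $\tau \in \mathsf{Aut}(H,X)$, again for $G \in \{\chi(F,\emptyset),\chi(F,\{x_1\})\}$. Hence the whole alternating sum equals
\[ |\mathsf{Aut}(H,X)| \cdot \Big( |\Ans^{\mathsf{id}}((H,X),(\chi(F,\emptyset),c))| - |\Ans^{\mathsf{id}}((H,X),(\chi(F,\{x_1\}),c))| \Big). \]
Since $\mathsf{id} \in \mathsf{Aut}(H,X)$ (the identity automorphism restricts to the identity on $X$), the factor $|\mathsf{Aut}(H,X)|$ is a positive integer, hence nonzero.

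Therefore, if $|\Ans^{\mathsf{id}}((H,X),(\chi(F,\emptyset),c))| \neq |\Ans^{\mathsf{id}}((H,X),(\chi(F,\{x_1\}),c))|$, the displayed alternating sum is nonzero, and we are precisely in the position to invoke Corollary~\ref{cor:almost_final}: since $F$ is connected and $\tw(F) = t > 1$, that corollary yields that the WL-dimension of $(H,X)$ is at least $t$, which is the claim. I do not anticipate a genuine obstacle here — this corollary is a bookkeeping assembly of the preceding lemmas — but the one point that must be handled carefully is checking that Lemmas~\ref{lem:aut_magig_1} and~\ref{lem:aut_magic_2} are stated for an $H$-colouring $c$ (a homomorphism $G \to H$), and confirming via Observation~\ref{obs:gamma_pi1_Hcol} that $c = \gamma(\pi_1(\cdot))$ is exactly such a homomorphism for each of the two CFI graphs, so that the hypotheses of those lemmas are met.
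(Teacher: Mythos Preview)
Your proposal is correct and follows essentially the same approach as the paper: reduce to Corollary~\ref{cor:almost_final} by using Observation~\ref{obs:gamma_pi1_Hcol} to justify applying Lemmas~\ref{lem:aut_magig_1} and~\ref{lem:aut_magic_2}, thereby collapsing the sum over $\bij(X)$ to $|\mathsf{Aut}(H,X)|$ times the $\tau=\mathsf{id}$ term, and noting that $|\mathsf{Aut}(H,X)|\neq 0$.
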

\begin{proof}
By Corollary~\ref{cor:almost_final} it suffices to establish the following.
\[ \sum_{\tau \in \bij(X)} |\Ans^\tau((H,X),(\chi(F,\emptyset),
c ))| - |\Ans^\tau((H,X),(\chi(F,\{x_1\}),c))| \neq 0.\]
By Observation~\ref{obs:gamma_pi1_Hcol}, $c$ is an $H$-colouring of both $\chi(F,\emptyset)$ and $\chi(F,\{x_1\})$. Thus we can use Lemmas~\ref{lem:aut_magig_1} and~\ref{lem:aut_magic_2}, and obtain:
\begin{align*}
  ~&~  \sum_{\tau \in \bij(X)} |\Ans^\tau((H,X),(\chi(F,\emptyset),
c ))| - |\Ans^\tau((H,X),(\chi(F,\{x_1\}),c))|\\
=&~ \sum_{\tau \in \mathsf{Aut}(H,X)} |\Ans^\tau((H,X),(\chi(F,\emptyset),
c ))| - |\Ans^\tau((H,X),(\chi(F,\{x_1\}),c))|\\
=&~ |\mathsf{Aut}(H,X)| \cdot \left( |\Ans^{\mathsf{id}}((H,X),(\chi(F,\emptyset),c ))| - |\Ans^{\mathsf{id}}((H,X),(\chi(F,\{x_1\}),c ))|\right).
\end{align*}
The first factor is non-zero because $\mathsf{Aut}(H,X)$ is non-empty. The second factor is non-zero by the assumption in the statement of the corollary. 
\end{proof}

\subsection{Proving the Lower Bound}

 We first set up some notation, following~\cite{DellRW19}.

\begin{definition}[Colour-prescribed Homomorphisms]
Let $H$ and $G$ be graphs, let $(H,X)$ be a conjunctive query, and let $c$ be an $H$-colouring of $G$. 
Define
\begin{align*} \cpHom(H,(G,c)) &= 
\{ h\in \Hom(H,G) \mid  \forall v \in V(H),
c(h(v))=v\}, \mbox{and} \\
\cpAns((H,X),(G,c)) &= 
\{a \colon X \to V(G) \mid 
\mbox{
$\exists h\in \cpHom(H,(G,c))$ such that $h|_X=a$ }  \}.
\end{align*}
Homomorphisms in $\cpHom(H,(G,c))$ are said to be
``colour-prescribed'' with respect to~$c$.
\end{definition}

\begin{observation}\label{obs:oneway}
$\cpAns((H,X),(G,c)) \subseteq \Ans^{\mathsf{id}}((H,X),(G,c ))$   
\end{observation}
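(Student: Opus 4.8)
The plan is to prove Observation~\ref{obs:oneway} directly from the definitions, by verifying the required inclusion element-by-element. Let $a \in \cpAns((H,X),(G,c))$ be arbitrary; I want to show $a \in \Ans^{\mathsf{id}}((H,X),(G,c))$.

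\begin{proof}
Let $a \in \cpAns((H,X),(G,c))$. By definition of $\cpAns$, there is a homomorphism $h\in \cpHom(H,(G,c))$ with $h|_X = a$. In particular $h\in \Hom(H,G)$, so $a = h|_X \in \Ans((H,X),G)$. It remains to check that $a$ lies in the part $\Ans^{\mathsf{id}}((H,X),(G,c))$ of $\Ans((H,X),G)$, i.e. (recalling Definition~\ref{def:partition_cq} with $\tau = \mathsf{id}$) that $c(a(\cdot)) = \mathsf{id}(\cdot)$ as a function on $X$. But $h$ is colour-prescribed with respect to~$c$, which means $c(h(v)) = v$ for every $v\in V(H)$, and in particular for every $v\in X$. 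Since $a(v) = h(v)$ for $v\in X$, we get $c(a(v)) = c(h(v)) = v = \mathsf{id}(v)$ for all $v\in X$, as required. Hence $a\in \Ans^{\mathsf{id}}((H,X),(G,c))$.
\end{proof}

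There is essentially no obstacle here: the statement is an immediate unfolding of the two definitions, with the only ``content'' being the observation that the colour-prescribed condition $c\circ h = \mathrm{id}_{V(H)}$ restricts to the weaker condition $c\circ h|_X = \mathrm{id}_X$ needed to land in the $\mathsf{id}$-block of the partition. I would keep the proof to the short paragraph above. (The reverse inclusion is of course false in general, since $\Ans^{\mathsf{id}}$ only constrains the colours of the images of the free variables $X$, whereas $\cpHom$ constrains the colours of all of $V(H)$, including the quantified variables $Y$ — but that asymmetry is exactly why only one direction is claimed.)
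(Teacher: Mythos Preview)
Your proof is correct and is essentially the same as the paper's: the paper simply notes that the inclusion follows directly from the definitions, since $c(h(v))=v$ is required for all $v\in V(H)$ in $\cpAns$ but only for $v\in X$ in $\Ans^{\mathsf{id}}$. Your argument is just a slightly more detailed unfolding of exactly this.
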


Observation~\ref{obs:oneway} follows directly
from the definitions.
Recall that
$\Ans^{\mathsf{id}}((H,X),(G,c)) = \{h \in \Ans((H,X),G)\mid 
\forall v\in X,
c(h(v)) = v  \} \}$.
So $c(h(v))=v$ is required for all $v\in X$
in the definition of 
$\Ans^{\mathsf{id}}((H,X),(G,c))$, while for all $v\in V(H)$ in the definition of $\cpAns((H,X),(G,c))$.
Lemma~\ref{lem:setsequal} shows equivalence for counting minimal conjunctive queries.

\begin{lemma}\label{lem:setsequal}
Let $(H,X)$ be a counting minimal conjunctive query, let $G$ be a graph, and let~$c$ be an $H$-colouring of $G$. Then $\cpAns((H,X),(G,c)) = \Ans^{\mathsf{id}}((H,X),(G,c ))$.
\end{lemma}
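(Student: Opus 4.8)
By Observation~\ref{obs:oneway}, the inclusion $\cpAns((H,X),(G,c)) \subseteq \Ans^{\mathsf{id}}((H,X),(G,c))$ holds for any conjunctive query, so the plan is to establish the reverse inclusion under the counting minimality hypothesis. Take any $a \in \Ans^{\mathsf{id}}((H,X),(G,c))$. By definition there is a homomorphism $\varphi \in \Hom(H,G)$ with $\varphi|_X = a$ and $c(\varphi(v)) = v$ for all $v \in X$. The composition $c(\varphi(\cdot))$ is then a homomorphism from $H$ to $H$ that is the identity on $X$; in particular it surjectively maps $X$ onto $X$. By Lemma~\ref{lem:minimal_auto}, since $(H,X)$ is counting minimal, $c(\varphi(\cdot))$ is an automorphism $\alpha$ of $H$, and $\alpha|_X = \mathsf{id}_X$.

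The key idea is then to correct $\varphi$ by precomposing with $\alpha^{-1}$. Consider $\psi := \varphi(\alpha^{-1}(\cdot))$. Since $\alpha^{-1}$ is an automorphism of $H$ and $\varphi$ is a homomorphism from $H$ to $G$, the map $\psi$ is again a homomorphism from $H$ to $G$. For every $v \in V(H)$ we have $c(\psi(v)) = c(\varphi(\alpha^{-1}(v))) = \alpha(\alpha^{-1}(v)) = v$, so $\psi \in \cpHom(H,(G,c))$. It remains to check that $\psi$ agrees with $a$ on $X$: for $v \in X$, since $\alpha|_X = \mathsf{id}_X$ we get $\alpha^{-1}(v) = v$, hence $\psi(v) = \varphi(v) = a(v)$. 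Therefore $\psi$ witnesses $a \in \cpAns((H,X),(G,c))$, which establishes $\Ans^{\mathsf{id}}((H,X),(G,c)) \subseteq \cpAns((H,X),(G,c))$ and completes the proof.

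The only subtle point is making sure the automorphism obtained from Lemma~\ref{lem:minimal_auto} actually restricts to the identity on $X$ rather than merely to some permutation of $X$ — but this is immediate here since $c(\varphi(v)) = v$ holds on all of $X$ by the definition of $\Ans^{\mathsf{id}}$, so there is no real obstacle; the argument is a short composition-of-homomorphisms check once Lemma~\ref{lem:minimal_auto} is invoked. (Note the argument also explains why counting minimality is essential: without it, $c(\varphi(\cdot))$ need only be a non-injective endomorphism, and then one cannot invert it to transport $\varphi$ into a colour-prescribed homomorphism.)
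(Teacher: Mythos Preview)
Your proof is correct and essentially identical to the paper's own argument: both invoke Observation~\ref{obs:oneway} for one inclusion, then for the other take the extending homomorphism, observe that its composition with $c$ is an automorphism fixing $X$ pointwise by Lemma~\ref{lem:minimal_auto}, and precompose with the inverse of that automorphism to obtain a colour-prescribed extension of $a$. The only difference is notational (your $\varphi,\alpha,\psi$ are the paper's $h,\varphi,h'$).
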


\begin{proof}
Observation~\ref{obs:oneway} proves one direction. For the other direction, consider a map $a \in \Ans^{\mathsf{id}}((H,X),(G,c ))$. From the definition, there is a homomorphism $h \in \Hom(H,G)$ such that $h|_X = a$ and $c(a(x))=x$ for all $x \in X$. From the definitions of~$c$ and~$h$, the function $\varphi:=c(h(\cdot))$ is a homomorphism from~$H$ to itself. Since $\varphi$ maps $X$ to $X$
and  $(H,X)$ is counting minimal,  Lemma~\ref{lem:minimal_auto} guarantees that $\varphi$ is an automorphism of~$H$. Let $h'=h(\varphi^{-1}(\cdot))$. Clearly, $h'\in \Hom(H,G)$. Moreover, for each $v\in V(H)$, $c(h'(v)) = c(h(\varphi^{-1}(v)))=\varphi(\varphi^{-1})(v) = v$.
Thus $h'\in \cpHom(H,(G,c))$. Since $\varphi$ is the identity on $X$, the same is true of~$\varphi^{-1}$. Thus,   for all $x\in X$, $h'(x)= h(\varphi^{-1}(x))=h(x)=a(x)$, implying that $a \in \cpAns((H,X),(G,c))$.  
\end{proof}

The following definition will be useful; it will provide a parity condition which is both sufficient and necessary for containment in the relevant set of answers.

\begin{definition}[Extendable Assignments, $\mathcal{E}(X,F,W)$]\label{def:extendables}\label{def:extendable}
Let $(H,X)$ be a conjunctive query with $ \emptyset \subsetneq X=\{x_1,\ldots,x_k\} \subsetneq V(H) $ and suppose that $x_1$ is adjacent to at least one vertex in $Y=V(H)\setminus X$. Suppose that $H$ is connected and let $\ell$ be an odd positive integer.  Let $F=F_\ell(H,X)$ and let $c=\gamma(\pi_1(\cdot))$.
Let $C_1,\dots,C_m$ be the connected components of $H[Y]$.
For each $i\in[m]$, denote the vertex sets of the $\ell$ copies of $C_i$ in $F$ by $V_i^1,\dots,V_i^\ell$.
Let $W$ be a subset of $X$ and let
$\varphi$ be an assignment from $X$ to $V(\chi(F,W))$ 
such that, for all $p\in [k]$, $c(\varphi(x_p)) = x_p$.
Define $S_1,\ldots,S_k$ such that  $\varphi(x_p)=(x_p,S_p)$ for all $p\in[k]$. 
We say that $\varphi$ is \emph{extendable} if the following two conditions hold.
\begin{itemize}
\item[(E1)] For every $\{x_a,x_b\} \in E(H[X])$, $x_a \in S_b \Longleftrightarrow x_b \in S_a$.
\item[(E2)] For every $i\in [m]$ there is a $j\in [\ell]$ such that
$ \sum_{p=1}^k |S_p \cap V_i^j| $ is even.   
\end{itemize}
Define 
$\mathcal{E}(X,F,W) := \{ \varphi : X \to \chi(F,W) \mid 
\mbox{ $\varphi$ is extendable and $\forall x\in X$, $c(\phi(x))=x$} \}$.  
\end{definition}

We will prove  
Lemmas~\ref{lem:distinguish_extendables},
\ref{lem:former_claim1} and \ref{lem:former_claim2}
in Section~\ref{sec:proofs}.

\begin{lemma}\label{lem:distinguish_extendables}
 Let $(H,X)$ be a conjunctive query with $ \emptyset \subsetneq X \subsetneq V(H) $ and suppose that $x_1$ is adjacent to at least one vertex in $V(H)\setminus X$. Suppose that $H$ is connected and let $\ell$ be an odd positive integer.  Let $F=F_\ell(H,X)$ and let $c=\gamma(\pi_1(\cdot))$. 
Then  $|\mathcal{E}(X,F,\emptyset)|> |\mathcal{E}(X,F,\{x_1\})|$.
\end{lemma}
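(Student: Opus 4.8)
The plan is to derive an exact closed form for the difference $D:=|\mathcal{E}(X,F,\emptyset)|-|\mathcal{E}(X,F,\{x_1\})|$ via a Fourier computation over $\mathbb{Z}_2^{k}$, and to read off $D>0$ from it. Write $k=|X|$, let $C_1,\dots,C_m$ be the connected components of $H[Y]$, and for each $i$ put $P_i=\{\,p\in[k]:x_p\text{ is adjacent to }C_i\text{ in }H\,\}$; the hypothesis on $x_1$ says that $1\in P_{i_0}$ for some $i_0$. A member $\varphi$ of $\mathcal{E}(X,F,W)$ is forced to have $\varphi(x_p)=(x_p,S_p)$ (the colour constraint leaves $x_p$ as the only admissible $F$-vertex), so $\mathcal{E}(X,F,W)$ is exactly the set of tuples $(S_1,\dots,S_k)$ with $S_p\subseteq N_F(x_p)$, $|S_p|\equiv\delta_{x_p,W}\pmod 2$, and (E1),(E2). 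I would split $S_p=S_p^{(X)}\sqcup S_p^{(Y)}$ with $S_p^{(X)}=S_p\cap X$ and $S_p^{(Y)}=S_p\cap(Y\times[\ell])$, and exploit three facts: (i) (E1) only constrains the $S_p^{(X)}$, and the admissible tuples $(S_1^{(X)},\dots,S_k^{(X)})$ correspond bijectively to $\beta\in\{0,1\}^{E(H[X])}$ by recording, for each edge $e=\{x_a,x_b\}$, whether $x_b\in S_a^{(X)}$ (equivalently $x_a\in S_b^{(X)}$), with $|S_p^{(X)}|=\sum_{e\ni x_p}\beta_e$; (ii) $Y\times[\ell]$ is the disjoint union over $i,j$ of the copies $V_i^{j}$ of $C_i$, the parts $S_p^{(Y)}\cap V_i^{j}$ may be chosen freely inside $N_F(x_p)\cap V_i^{j}$ (which is nonempty iff $p\in P_i$), and (E2) for $C_i$ depends only on the parts with index $i$; (iii) the only coupling between components is through the parities $|S_p^{(Y)}|=|S_p|-|S_p^{(X)}|$.

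For each component $C_i$ and each $\varepsilon\in\{\pm1\}^{P_i}$ I would introduce the local generating function
\[
g_i(\varepsilon)=\sum_{(T_{p,j})}\ \prod_{p\in P_i}\varepsilon_p^{\,\sum_{j\in[\ell]}|T_{p,j}|},
\]
the sum over all families $(T_{p,j})_{p\in P_i,\,j\in[\ell]}$ with $T_{p,j}\subseteq N_F(x_p)\cap V_i^{j}$ that satisfy (E2) for $C_i$. Applying Fourier inversion on $\mathbb{Z}_2^{k}$ to the $k$ parity conditions $|S_p|\equiv\delta_{x_p,W}$, summing over all $\beta$, and using facts (ii)--(iii) to factor the count across components, I expect to obtain
\[
D=\frac{1}{2^{k}}\sum_{\varepsilon\in\{\pm1\}^{k}}(1-\varepsilon_1)\ \Bigl(\,\prod_{\{x_a,x_b\}\in E(H[X])}(1+\varepsilon_a\varepsilon_b)\Bigr)\ \prod_{i=1}^{m}g_i(\varepsilon|_{P_i}),
\]
where $1-\varepsilon_1=\prod_p\varepsilon_p^{\delta_{x_p,\emptyset}}-\prod_p\varepsilon_p^{\delta_{x_p,\{x_1\}}}$, and $\prod_{e}(1+\varepsilon_a\varepsilon_b)=\sum_{\beta}\prod_p\varepsilon_p^{|S_p^{(X)}|}=\sum_{\beta}\prod_{e}(\varepsilon_a\varepsilon_b)^{\beta_e}$.

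It then remains to evaluate $g_i$ and to identify the surviving terms. Here oddness of $\ell$ is used: computing the sum of $\prod_p\varepsilon_p^{\sum_j|T_{p,j}|}$ over all families, and subtracting the sub-sum over families violating (E2) (those for which every copy $j$ has odd total), one finds $g_i(\varepsilon)=0$ when $\varepsilon$ is non-constant on $P_i$, and a positive power of two when $\varepsilon$ is constant on $P_i$; in the ``$\varepsilon\equiv-1$'' case the violating sub-sum equals $(-2^{\,N_i-1})^{\ell}=-2^{\,\ell(N_i-1)}$ precisely because $\ell$ is odd (where $N_i\ge1$ counts the $H$-edges between $X$ and $C_i$), so that $g_i=2^{\ell(N_i-1)}>0$, whereas for even $\ell$ this quantity would be nonpositive. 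In particular $g_i\ge0$ throughout, so every summand above is nonnegative. A summand is nonzero only when $\varepsilon_1=-1$, $\varepsilon$ is constant on each connected component of $H[X]$, and $\varepsilon$ is constant on each $P_i$; since $H$ is connected, a walk in $H$ between any two vertices of $X$ alternates between $X$-edges (across which $\varepsilon$ is constant) and excursions through a single $C_i$ whose two endpoints lie in $P_i$ (across which $\varepsilon$ is constant), so $\varepsilon$ must be globally constant, and together with $\varepsilon_1=-1$ the only surviving term is $\varepsilon=(-1,\dots,-1)$, contributing $2\cdot2^{|E(H[X])|}\cdot\prod_i g_i((-1,\dots,-1))>0$. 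Hence $D>0$, which is the claim.

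I expect the main obstacle to be the explicit evaluation of the local generating functions $g_i$ — carrying the parity bookkeeping carefully enough to see that oddness of $\ell$ is exactly what makes $g_i\ge0$ with strictly positive values on the constant sign patterns — together with the connectivity argument that pins down $\varepsilon=(-1,\dots,-1)$ as the unique surviving sign vector.
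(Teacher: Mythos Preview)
Your approach is correct and genuinely different from the paper's. The paper argues combinatorially: it partitions $\mathcal{E}(X,F,W)$ into pieces $\mathcal{E}(X,F,W,0),\dots,\mathcal{E}(X,F,W,m)$ according to the first component $C_i$ for which some copy $j_i>1$ witnesses (E2); it then builds an explicit bijection $\mathcal{E}(X,F,\emptyset,i)\to\mathcal{E}(X,F,\{x_1\},i)$ for each $i\ge1$ by XOR-ing the sets $S_p$ along a path in $F$ from $V_i^1$ to $x_1$, exhibits a concrete element of $\mathcal{E}(X,F,\emptyset,0)$, and shows $\mathcal{E}(X,F,\{x_1\},0)=\emptyset$ via a global parity count on a $0/1$ matrix. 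Your route instead Fourier-expands the $k$ parity constraints over $\mathbb{Z}_2^k$, factors the resulting weighted sum across $E(H[X])$ and across the components $C_i$, and reads off $D$ as a single nonnegative sum with a unique surviving term at $\varepsilon=(-1,\dots,-1)$.

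Each approach has its virtues. The paper's bijective argument is elementary and self-contained, requiring no generating-function machinery. Your character-sum computation is more systematic: it yields an explicit closed form $D=2^{1-k+|E(H[X])|}\prod_i 2^{\ell(N_i-1)}$, and it makes the role of odd $\ell$ completely transparent (it is exactly what flips the sign of $(-2^{N_i-1})^\ell$ in the ``all $-1$'' evaluation of $g_i$). Your connectivity argument identifying $\varepsilon=(-1,\dots,-1)$ as the unique survivor is the same in spirit as the paper's implicit use of connectivity in constructing the bijection path.

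One small slip: when $\varepsilon\equiv+1$ on $P_i$ you get $g_i=2^{\ell N_i}-2^{\ell(N_i-1)}=2^{\ell(N_i-1)}(2^\ell-1)$, which is positive but not a power of two as you wrote; this does not affect the argument, since only nonnegativity (and strict positivity at $\varepsilon\equiv-1$) is used.
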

\begin{lemma}\label{lem:former_claim1}
Let $(H,X)$ be a conjunctive query with $ \emptyset \subsetneq X=\{x_1,\ldots,x_k\} \subsetneq V(H) $ and suppose that $x_1$ is adjacent to at least one vertex in $V(H)\setminus X$. Suppose that $H$ is connected and let $\ell$ be an odd positive integer.  Let $F=F_\ell(H,X)$ and let $c=\gamma(\pi_1(\cdot))$. 
Let $W$ be a subset of $X$ and let
$\varphi$ be an assignment from $X$ to $V(\chi(F,W))$ 
such that, for all $p\in [k]$, $c(\varphi(x_p)) = x_p$.
If $\varphi$ is not extendable, then $\varphi \notin \cpAns((H,X),(\chi(F,W),c ))$.
\end{lemma}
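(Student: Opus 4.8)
\textbf{Proof plan for Lemma~\ref{lem:former_claim1}.}
The plan is to prove the contrapositive: assuming $\varphi \in \cpAns((H,X),(\chi(F,W),c))$, I would show that $\varphi$ satisfies both (E1) and (E2). So suppose there is a colour-prescribed homomorphism $h \in \cpHom(H,(\chi(F,W),c))$ with $h|_X = \varphi$. For each vertex $v$ of $H$, since $h$ is colour-prescribed with respect to $c=\gamma(\pi_1(\cdot))$, we know that $h(v) = (v', S_v)$ for some $v' \in V(F)$ with $\gamma(v') = v$ and some $S_v \subseteq N_F(v')$ with the correct parity constraint coming from the definition of $\chi(F,W)$; in particular, $v' = v$ when $v \in X$ (so indeed $h(x_p) = \varphi(x_p) = (x_p, S_p)$), and $v' \in \{v\}\times[\ell]$ when $v \in Y$. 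So $h$ determines, for each $y \in Y$, a copy index $j(y) \in [\ell]$ telling us which of the $\ell$ clones of $y$ the vertex $h(y)$ projects onto.

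First I would establish (E1). For an edge $\{x_a,x_b\} \in E(H[X])$, the pair $\{h(x_a),h(x_b)\} = \{(x_a,S_a),(x_b,S_b)\}$ must be an edge of $\chi(F,W)$. Since $\gamma$ is a homomorphism and $x_a,x_b$ are both in $X$, the edge $\{x_a,x_b\}$ is also an edge of $F = F_\ell(H,X)$ (it lies in $E_X$), so the ``$\{w,w'\}\in E(F)$'' condition of Definition~\ref{def:B} holds; the edge condition of $\chi(F,W)$ then forces exactly ``$x_a \in S_b \Longleftrightarrow x_b \in S_a$'', which is precisely (E1).

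Next, (E2) is the main obstacle and requires the connectivity/copy-index argument. Fix a connected component $C_i$ of $H[Y]$. The restriction $h|_{C_i}$ is a homomorphism from the connected graph $C_i$ into $\chi(F,W)$, and composing with $\pi_1$ gives a homomorphism from $C_i$ into $F$ that is colour-consistent with $\gamma$, i.e.\ it maps each $y\in C_i$ into $\{y\}\times[\ell]$. I would argue that because $C_i$ is connected and the only edges of $F$ among the clones of $C_i$ are the ``diagonal'' edges $\{(u,j),(v,j)\}$ with $j$ fixed (from $E_Y$), the copy index $j(y)$ is constant across $C_i$; call this common value $j^\star=j(i)$. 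Then all of $h(C_i)$ lives over the single copy $V_i^{j^\star}$, so $S_y \subseteq N_F((\cdot)) $ restricted to that copy is the relevant local data. Now I would invoke the parity structure of $\chi(F,W)$: summing the defining congruences $\delta_{w,W} \equiv |S| \pmod 2$ over the vertices $w$ of $F$ touched by $h$ within the copy $V_i^{j^\star}$ together with the boundary vertices in $X$ adjacent to $C_i$, and using that each internal edge of the CFI gadget is ``counted twice'' (once from each endpoint's set $S$), the total $\sum_{p=1}^k |S_p \cap V_i^{j^\star}|$ is forced to have a fixed parity; since $\delta_{w,W}=0$ for all the $Y$-clone vertices $w$ (as $W\subseteq X$), and the contributions of the $X$-vertices cancel in pairs along edges within $X$, this parity is even. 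This gives (E2) with the witness $j = j^\star$. The delicate point to get right is the bookkeeping of exactly which edges and which membership indicators contribute, and confirming the cancellation yields ``even'' rather than some other residue — this is where I expect to spend the most care, and it is essentially an adaptation of the standard CFI parity invariant (as in Roberson~\cite{roberson2022oddomorphisms}) to the cloned query graph $F=F_\ell(H,X)$.
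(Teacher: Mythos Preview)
Your approach is essentially the paper's: it too argues via the contrapositive, derives (E1) directly from the edge condition of $\chi(F,W)$, pins down a single copy index $j^\star$ for each component $C_i$ using connectivity and the structure of $E_Y$, and then runs a parity count on the sets $T_s$ where $h(y_{i,s})=((y_{i,s},j^\star),T_s)$. The paper organises this count as a $t\times(k+t)$ $\{0,1\}$-matrix $M$ with submatrices $M_X$, $M_Y$, but the content is the same as your ``sum and cancel'' argument.

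One correction to your (E2) bookkeeping: do \emph{not} include the boundary $X$-vertices in the sum of parity congruences; sum only over the $Y$-clone vertices $(y_{i,s},j^\star)$. Including $|S_p|$ for boundary $x_p$ would drag in contributions from every component $C_{i'}$ and every copy index $j'$, not just $V_i^{j^\star}$, and your proposed cancellation ``in pairs along edges within $X$'' does not eliminate those extra terms. The clean version is: $\sum_s |T_s|$ is even since each $|T_s|$ is even (because $W\subseteq X$); the $Y$--$Y$ part $\sum_s |T_s\cap V_i^{j^\star}|$ is even because $h$ being a homomorphism on $E(C_i)$ forces $(y_{s'},j^\star)\in T_s \Longleftrightarrow (y_s,j^\star)\in T_{s'}$; hence $\sum_s |T_s\cap X|$ is even, and this quantity equals $\sum_{p=1}^k |S_p\cap V_i^{j^\star}|$ by applying the edge condition of $\chi(F,W)$ to the $H$-edges between $X$ and $C_i$. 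That is exactly the paper's argument, phrased without the matrix.
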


\begin{lemma}\label{lem:former_claim2}
Let $(H,X)$ be a conjunctive query with $ \emptyset \subsetneq X=\{x_1,\ldots,x_k\} \subsetneq V(H) $ and suppose that $x_1$ is adjacent to at least one vertex in $V(H)\setminus X$. Suppose that $H$ is connected and let $\ell$ be an odd positive integer.  Let $F=F_\ell(H,X)$ and let $c=\gamma(\pi_1(\cdot))$. 
Let $W$ be a subset of $X$ and let
$\varphi$ be an assignment from $X$ to $V(\chi(F,W))$ 
such that, for all $p\in [k]$, $c(\varphi(x_p)) = x_p$.
If $\varphi$ is extendable, then $\varphi \in \cpAns((H,X),(\chi(F,W),c ))$
\end{lemma}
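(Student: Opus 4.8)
\textbf{Proof plan for Lemma~\ref{lem:former_claim2}.}
The goal is to show that an extendable assignment $\varphi\colon X\to V(\chi(F,W))$ lies in $\cpAns((H,X),(\chi(F,W),c))$; that is, we must exhibit a colour-prescribed homomorphism $h\in\cpHom(H,(\chi(F,W),c))$ with $h|_X=\varphi$. Since $c=\gamma(\pi_1(\cdot))$, being colour-prescribed means that every vertex $v$ of $H$ is sent to a vertex of $\chi(F,W)$ whose $F$-colour (under $\pi_1$) lies in $\gamma^{-1}(v)$. On the free variables $X$ this forces $h(x_p)=(x_p,S_p)$ where the $S_p$ are the sets given by $\varphi$; on each existentially quantified vertex $u\in C_i\subseteq Y$ we are free to pick a copy index $j\in[\ell]$ and then a set of the appropriate parity. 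So the plan is: first I would fix, for each connected component $C_i$ of $H[Y]$, a ``good'' copy index $j=j(i)\in[\ell]$ supplied by condition (E2) — the one for which $\sum_{p=1}^k|S_p\cap V_i^{j(i)}|$ is even — and commit to mapping all of $C_i$ into that single copy $V_i^{j(i)}$ together with its attached $X$-vertices. This reduces the problem to, for each $i$ separately, extending the partial assignment on $N(C_i)\cap X$ to a colour-prescribed homomorphism on $H[\hat C_i]$ where $\hat C_i=C_i\cup(N(C_i)\cap X)$, using only vertices of $\chi(F,W)$ sitting above the single copy $V_i^{j(i)}$ of $C_i$ (plus the relevant $X$-vertices).

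For this local extension I would invoke the defining feature of the CFI construction: the ``gadget'' over a connected piece of $F$ admits a colour-prescribed homomorphism from that piece precisely when a global parity constraint, determined by the boundary choices and by $|W|$ restricted to that piece, is even. Concretely, I would build $h$ on $C_i$ greedily along a spanning tree of $C_i$ in $F$ (recall $F=F_\ell(H,X)$, so the copy $V_i^{j(i)}$ is an isomorphic copy of $C_i$): having fixed $h$ on the $X$-neighbours (forced by $\varphi$) and on a partial subtree, the edge constraint of $\chi(F,W)$, namely $\{(w,S),(w',S')\}\in E(\chi(F,W))$ iff $w'\in S\Leftrightarrow w\in S'$, determines for each newly reached vertex $w'$ exactly one coordinate ($w\in S'$?) in terms of data already chosen, and leaves all other coordinates of $S'$ free subject only to the global parity condition $\delta_{w',W}\equiv|S'|\pmod 2$. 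Walking around all the fundamental cycles of $C_i$ within $F$, the only potential obstruction to consistency is a parity mismatch, and summing these local parities telescopes to exactly the quantity $\sum_{p=1}^k|S_p\cap V_i^{j(i)}|+(\text{contribution of }W)$; condition (E1) handles the parities along edges of $H[X]$, and the ``$\ell$ is odd'' hypothesis together with the choice of a \emph{single} copy means the $W$-contribution over that copy is controlled. Hence (E2) is exactly what makes the local extension succeed.

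The main obstacle I anticipate is pinning down the parity bookkeeping in the last step: one has to verify that the obstruction to extending a colour-prescribed homomorphism over the connected graph $C_i$ (embedded in $F$, hence into $\chi(F,W)$) is genuinely a single $\mathbb{Z}/2$ quantity and that this quantity equals the even sum in (E1)--(E2), including getting the role of $W$ and the parities $\delta_{w,W}$ exactly right. This is essentially a restatement, in the coloured/partial-homomorphism setting, of the standard fact about CFI gadgets that a "twist" can be pushed around a connected component; I would isolate it as the crux and prove it by the spanning-tree argument above, being careful that the free variables $X$ play the role of the "boundary" where the twist is prescribed rather than free. Once the local extensions exist for every $i$, gluing them is immediate since distinct components $C_i$ share no vertices of $H$ and the $X$-vertices are assigned consistently by $\varphi$, and the resulting $h$ is by construction colour-prescribed with $h|_X=\varphi$, giving $\varphi\in\cpAns((H,X),(\chi(F,W),c))$ as required.
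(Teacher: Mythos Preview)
Your plan is correct and follows the same strategy as the paper: pick the copy index $j_i$ via (E2), let the $X$-part of each target set be forced by $\varphi$ (so $T_{i,s}\cap X=\{x_p:(y_{i,s},j_i)\in S_p\}$), and then choose the remaining $Y$-part to repair the parity of $|T_{i,s}|$. The paper packages the parity-repair step as a separate combinatorial lemma (Lemma~\ref{lem:connected_parity}): on a connected graph, a $0/1$ edge-labelling with prescribed vertex-degree parities exists iff the set of odd-parity vertices has even size; applying it to $C_i$ with $\Omega_i=\{y_{i,s}:|T_{i,s,X}|\text{ odd}\}$, which has even size exactly because $\sum_p|S_p\cap V_i^{j_i}|$ is even, immediately yields the $Y$-bits. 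Your spanning-tree greedy is essentially an inline proof of the same lemma.

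A few details in your write-up are off and worth correcting. First, fundamental cycles are a red herring: the CFI edge relation is symmetric ($w'\in S\Leftrightarrow w\in S'$), so \emph{any} symmetric choice $\beta\colon E(C_i)\to\{0,1\}$ gives consistent adjacencies automatically; the only obstruction is the \emph{per-vertex} parity of $|T_{i,s}|$, not a cycle condition. Second, (E1) is used only to check that $h$ is a homomorphism on edges inside $H[X]$, not for the parity bookkeeping on the $Y$-side. Third, neither the ``$\ell$ odd'' hypothesis nor any ``contribution of $W$'' enters this lemma: since $W\subseteq X$, every $(y_{i,s},j_i)\notin W$, so the required parity of each $|T_{i,s}|$ is simply \emph{even}, and the global obstruction is exactly $\sum_p|S_p\cap V_i^{j_i}|\bmod 2$ with no extra term.
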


\begin{lemma}\label{lem:paritysame}
Let $(H,X)$ be a conjunctive query with $ \emptyset \subsetneq X \subsetneq V(H) $ and suppose that $x_1$ is adjacent to at least one vertex in $V(H)\setminus X$. Suppose that $H$ is connected and let $\ell$ be an odd positive integer.  Let $F=F_\ell(H,X)$ and let $c=\gamma(\pi_1(\cdot))$. Let $W$ be a subset of $X$. Then 
$\cpAns((H,X),(\chi(F,W),c )) = \calE(X,F,W)$.
\end{lemma}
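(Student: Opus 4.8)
The plan is to prove Lemma~\ref{lem:paritysame} as an immediate consequence of the three lemmas stated just above it, namely Lemmas~\ref{lem:former_claim1} and~\ref{lem:former_claim2} together with the observation built into Definition~\ref{def:extendable}. The statement asserts set equality $\cpAns((H,X),(\chi(F,W),c)) = \calE(X,F,W)$, so I would prove the two inclusions separately. For this, I first need to normalise the two sides so that they range over the same family of candidate maps: both $\cpAns((H,X),(\chi(F,W),c))$ and $\calE(X,F,W)$ consist of assignments $\varphi\colon X \to V(\chi(F,W))$ with the property that $c(\varphi(x_p)) = x_p$ for all $p \in [k]$. For $\calE$ this is explicit in Definition~\ref{def:extendable}; for $\cpAns$ it follows because any $a \in \cpAns((H,X),(\chi(F,W),c))$ extends to some $h \in \cpHom(H,(\chi(F,W),c))$, and the colour-prescription condition forces $c(h(v)) = v$ for all $v \in V(H)$, in particular for $v = x_p \in X$.

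With that common domain fixed, the forward inclusion $\calE(X,F,W) \subseteq \cpAns((H,X),(\chi(F,W),c))$ is exactly the content of Lemma~\ref{lem:former_claim2}: if $\varphi$ lies in $\calE(X,F,W)$ then by definition $\varphi$ is extendable and satisfies the colour condition, so Lemma~\ref{lem:former_claim2} gives $\varphi \in \cpAns((H,X),(\chi(F,W),c))$. For the reverse inclusion $\cpAns((H,X),(\chi(F,W),c)) \subseteq \calE(X,F,W)$, I would argue by contraposition using Lemma~\ref{lem:former_claim1}: take $\varphi \in \cpAns((H,X),(\chi(F,W),c))$; as noted above $\varphi$ satisfies $c(\varphi(x_p)) = x_p$ for all $p$; if $\varphi$ were not extendable, Lemma~\ref{lem:former_claim1} would give $\varphi \notin \cpAns((H,X),(\chi(F,W),c))$, a contradiction. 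Hence $\varphi$ is extendable, and since it also meets the colour condition, $\varphi \in \calE(X,F,W)$. Combining the two inclusions yields the claimed equality.

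There is essentially no obstacle at the level of this lemma itself — it is a pure bookkeeping combination of the two preceding lemmas, and the only thing to be careful about is making sure the ambient set of candidate assignments is the same on both sides, which is handled by the colour-prescription remark. The genuine mathematical work is all deferred: the real difficulty lies inside Lemmas~\ref{lem:former_claim1} and~\ref{lem:former_claim2} (the parity characterisation of which colour-prescribed partial assignments extend to colour-prescribed homomorphisms of $\chi(F,W)$), and this is where the CFI-style twist-gadget reasoning happens. For the proof of Lemma~\ref{lem:paritysame} as stated, though, I would simply write the two-inclusion argument above in a few lines, citing Lemmas~\ref{lem:former_claim1} and~\ref{lem:former_claim2}.

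\begin{proof}[Proof of Lemma~\ref{lem:paritysame}]
Both $\cpAns((H,X),(\chi(F,W),c))$ and $\calE(X,F,W)$ are sets of assignments $\varphi\colon X \to V(\chi(F,W))$. We first record that every element of either set satisfies $c(\varphi(x_p)) = x_p$ for all $p\in[k]$. For $\calE(X,F,W)$ this is part of Definition~\ref{def:extendable}. For $\cpAns((H,X),(\chi(F,W),c))$, if $\varphi$ lies in this set then there is $h\in\cpHom(H,(\chi(F,W),c))$ with $h|_X=\varphi$, and by definition of colour-prescribed homomorphisms $c(h(v))=v$ for all $v\in V(H)$, in particular $c(\varphi(x_p))=c(h(x_p))=x_p$ for all $p\in[k]$.

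We show the two inclusions. First, let $\varphi\in\calE(X,F,W)$. Then $\varphi$ is extendable and satisfies $c(\varphi(x_p))=x_p$ for all $p\in[k]$, so Lemma~\ref{lem:former_claim2} gives $\varphi\in\cpAns((H,X),(\chi(F,W),c))$. Hence $\calE(X,F,W)\subseteq\cpAns((H,X),(\chi(F,W),c))$.

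Conversely, let $\varphi\in\cpAns((H,X),(\chi(F,W),c))$. By the remark above, $c(\varphi(x_p))=x_p$ for all $p\in[k]$. If $\varphi$ were not extendable, then Lemma~\ref{lem:former_claim1} would imply $\varphi\notin\cpAns((H,X),(\chi(F,W),c))$, a contradiction. Therefore $\varphi$ is extendable, and together with the colour condition this gives $\varphi\in\calE(X,F,W)$. Hence $\cpAns((H,X),(\chi(F,W),c))\subseteq\calE(X,F,W)$, and the two sets are equal.
\end{proof}
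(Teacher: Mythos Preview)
Your proposal is correct and follows essentially the same approach as the paper: both arguments reduce the set equality to a case analysis on assignments $\varphi\colon X\to V(\chi(F,W))$ according to whether the colour condition $c(\varphi(x_p))=x_p$ holds and whether $\varphi$ is extendable, invoking Lemmas~\ref{lem:former_claim1} and~\ref{lem:former_claim2} for the nontrivial cases. The paper phrases this as a three-way case split on all assignments rather than as two inclusions, but the content is identical.
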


\begin{proof} 
Let $\phi$ be an assignment from $X$ to $V(\chi(F,W))$.
If there is an $x\in X$ such that $c(\phi(x)) \neq x$ then
$\phi \notin \calE(X,F,W)$ and
$\phi \notin \cpAns((H,X),(\chi(F,W),c ))$.
Otherwise, if $\phi$ is not extendable then it not in 
$\calE(X,F,W)$ by definition and it is not in 
$\cpAns((H,X),(\chi(F,W),c ))$ by 
Lemma~\ref{lem:former_claim1}.
If $\phi$ is extendable then it is in 
$\calE(X,F,W)$ by definition, Also, by 
Lemma~\ref{lem:former_claim2}, $\phi$
is  in 
$\cpAns((H,X),(\chi(F,W),c ))$. 
\end{proof}

\begin{lemma}\label{lem:main_technical_WL_LB}
Let $(H,X)$ be a conjunctive query with $ \emptyset \subsetneq X \subsetneq V(H) $ and suppose that $x_1$ is adjacent to at least one vertex in $V(H)\setminus X$. Suppose that $H$ is connected and let $\ell$ be an odd positive integer.  Let $F=F_\ell(H,X)$ and let $c=\gamma(\pi_1(\cdot))$. Then
$ |\cpAns((H,X),(\chi(F,\emptyset),c ))| > |\cpAns((H,X),(\chi(F,\{x_1\}),c ))|$.
\end{lemma}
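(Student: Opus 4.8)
The plan is to obtain Lemma~\ref{lem:main_technical_WL_LB} as an immediate consequence of the characterisation Lemma~\ref{lem:paritysame} together with the counting inequality Lemma~\ref{lem:distinguish_extendables}, both of which are stated under exactly the hypotheses assumed here. First I would apply Lemma~\ref{lem:paritysame} twice: once with $W=\emptyset$ to get $\cpAns((H,X),(\chi(F,\emptyset),c)) = \calE(X,F,\emptyset)$, and once with $W=\{x_1\}$ to get $\cpAns((H,X),(\chi(F,\{x_1\}),c)) = \calE(X,F,\{x_1\})$. Both applications are legitimate since $\emptyset\subseteq X$ and $\{x_1\}\subseteq X$, and the remaining preconditions of Lemma~\ref{lem:paritysame} ($H$ connected, $\ell$ odd, $x_1$ adjacent to a vertex of $V(H)\setminus X$, $F=F_\ell(H,X)$, $c=\gamma(\pi_1(\cdot))$) are exactly the hypotheses of the present lemma.

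Next I would invoke Lemma~\ref{lem:distinguish_extendables}, which asserts precisely that $|\calE(X,F,\emptyset)| > |\calE(X,F,\{x_1\})|$ under these same hypotheses. Chaining the two equalities from the previous paragraph with this strict inequality gives $|\cpAns((H,X),(\chi(F,\emptyset),c))| = |\calE(X,F,\emptyset)| > |\calE(X,F,\{x_1\})| = |\cpAns((H,X),(\chi(F,\{x_1\}),c))|$, which is the claim. So the proof of this particular lemma is a one-line gluing step.

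The substantive obstacles lie entirely in the results this lemma is built on, which are deferred to Section~\ref{sec:proofs}: Lemma~\ref{lem:distinguish_extendables} is the genuine combinatorial heart, counting the extendable assignments into the two CFI graphs and extracting a strict inequality from the odd parity of $\ell$ together with the fact that $x_1$ has a neighbour among the quantified variables; and Lemma~\ref{lem:paritysame} reduces the set of colour-prescribed answers to the purely combinatorial set $\calE(X,F,W)$, which in turn rests on the two directions $\varphi$ extendable $\iff$ $\varphi\in\cpAns$ supplied by Lemmas~\ref{lem:former_claim1} and~\ref{lem:former_claim2}. In other words, the role of Lemma~\ref{lem:main_technical_WL_LB} is simply to package the semantic characterisation and the combinatorial count into the strict comparison of colour-prescribed answer counts that the surrounding argument (via Corollary~\ref{cor:tool} and Lemma~\ref{lem:setsequal}) needs.
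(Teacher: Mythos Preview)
Your proposal is correct and matches the paper's proof exactly: the paper states in one line that the lemma follows immediately from Lemmas~\ref{lem:distinguish_extendables} and~\ref{lem:paritysame}, which is precisely the gluing step you describe. Your additional commentary on where the substantive work lies is also accurate.
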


\begin{proof} 
The lemma follows immediately from 
Lemmas~\ref{lem:distinguish_extendables} and~\ref{lem:paritysame}.
\end{proof}

We can now state and prove Lemma~\ref{lem:thisgoal},
the main goal of this subsection, which will enable us to immediately infer Theorem~\ref{thm:lower_bound}.

\begin{lemma}\label{lem:thisgoal}
Let $(H,X)$ be a counting minimal conjunctive query 
such that $H$ is connected, and 
$ \emptyset \subsetneq X \subsetneq V(H) $.
Without loss of generality, suppose
that $x_1$ is adjacent to at least one vertex in $V(H)\setminus X$. Let $\ell$ be an odd positive integer. 
Let $F=F_\ell(H,X)$ and let $c=\gamma(\pi_1(\cdot))$. 
Then
$$
|\Ans^{\mathsf{id}}((H,X),(\chi(F,\emptyset),c ))| > |\Ans^{\mathsf{id}}((H,X),(\chi(F,\{x_1\}),c ))|
$$
\end{lemma}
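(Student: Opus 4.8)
The plan is to reduce the statement to Lemma~\ref{lem:main_technical_WL_LB} by replacing $\Ans^{\mathsf{id}}$ with $\cpAns$. The crucial observation is that $(H,X)$ is assumed to be \emph{counting minimal} in this lemma, whereas Lemma~\ref{lem:main_technical_WL_LB} applies to arbitrary conjunctive queries $(H,X)$ with $\emptyset\subsetneq X\subsetneq V(H)$ and $x_1$ adjacent to a vertex in $V(H)\setminus X$. So first I would note that the hypotheses of Lemma~\ref{lem:main_technical_WL_LB} are met: $H$ is connected, $\emptyset\subsetneq X\subsetneq V(H)$, $\ell$ is an odd positive integer, $F=F_\ell(H,X)$, $c=\gamma(\pi_1(\cdot))$, and by the ``without loss of generality'' clause $x_1$ is adjacent to at least one vertex of $V(H)\setminus X$. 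Hence Lemma~\ref{lem:main_technical_WL_LB} gives
\[
|\cpAns((H,X),(\chi(F,\emptyset),c ))| > |\cpAns((H,X),(\chi(F,\{x_1\}),c ))|.
\]

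Next I would invoke Lemma~\ref{lem:setsequal}, which requires $(H,X)$ to be counting minimal — exactly the extra hypothesis available here. Applying it with $G=\chi(F,\emptyset)$ and the $H$-colouring $c$ (which is a legitimate $H$-colouring by Observation~\ref{obs:gamma_pi1_Hcol}), we get $\cpAns((H,X),(\chi(F,\emptyset),c)) = \Ans^{\mathsf{id}}((H,X),(\chi(F,\emptyset),c))$, and similarly with $G=\chi(F,\{x_1\})$ we get $\cpAns((H,X),(\chi(F,\{x_1\}),c)) = \Ans^{\mathsf{id}}((H,X),(\chi(F,\{x_1\}),c))$. Substituting these two equalities into the strict inequality above yields precisely
\[
|\Ans^{\mathsf{id}}((H,X),(\chi(F,\emptyset),c ))| > |\Ans^{\mathsf{id}}((H,X),(\chi(F,\{x_1\}),c ))|,
\]
which is the claim.

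This proof is essentially a two-line bookkeeping argument; there is no genuine obstacle at this stage, since all the heavy lifting has been deferred to Lemmas~\ref{lem:distinguish_extendables}, \ref{lem:former_claim1}, \ref{lem:former_claim2} (whose proofs are promised in Section~\ref{sec:proofs}) and to the structural results about CFI graphs and the cloning operation. The only point needing a moment's care is checking that the ``without loss of generality'' reduction to the case where $x_1$ is adjacent to $V(H)\setminus X$ is justified: since $H$ is connected and $Y=V(H)\setminus X\neq\emptyset$, some vertex of $X$ must be adjacent to some vertex of $Y$ (otherwise $H[X]$ and $H[Y]$ would be disconnected from each other), and we may relabel the free variables so that this vertex is $x_1$; relabelling does not change $|\Ans^{\mathsf{id}}(\cdot)|$ up to the corresponding relabelling of colours, and the strict inequality is unaffected. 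The real work — establishing $|\mathcal{E}(X,F,\emptyset)| > |\mathcal{E}(X,F,\{x_1\})|$ and the parity characterisation of colour-prescribed answers — lies in the lemmas this one cites, not here.
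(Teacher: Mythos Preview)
Your proposal is correct and matches the paper's own proof essentially verbatim: the paper simply states that the lemma follows directly from Lemma~\ref{lem:setsequal} and Lemma~\ref{lem:main_technical_WL_LB}, which is exactly the two-step reduction you describe. Your additional remarks on the ``without loss of generality'' clause and the verification that $c$ is an $H$-colouring via Observation~\ref{obs:gamma_pi1_Hcol} are helpful elaborations but not required beyond what the paper already assumes.
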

\begin{proof} 
The lemma follows directly from Lemma~\ref{lem:setsequal}
and lemma~\ref{lem:main_technical_WL_LB}.
\end{proof}

We can now prove Theorem~\ref{thm:lower_bound}.
\thmlowerbound*

\begin{proof} 
By Corollary~\ref{cor:ewidth_alternative}, $\ctw(H,X)= \max\{\tw(F_\ell(H,X))\mid \ell>0 \}$. Choose $\ell$ large enough such that $\ctw(H,X)=\tw(F_\ell(H,X))$. We can assume, without loss of generality, that $\ell$ is odd: If $\ell$ is even, choose $\ell+1$ instead and note that $\tw(F_{\ell+1}(H,X))\geq \tw(F_\ell(H,X))$ since $F_\ell(H,X)$ is a subgraph of $F_{\ell+1}(H,X)$ and since treewidth is monotone under taking subgraphs.

Let $F=F_\ell(H,X)$ (note that $F$ is connected as $(H,X)$ is connected) and let $c = \gamma(\pi_1(\cdot))$. By Corollary~\ref{cor:tool}, it suffices to show that
$ |\Ans^{\mathsf{id}}((H,X),(\chi(F,\emptyset),c ))|$ is
not equal to $|\Ans^{\mathsf{id}}((H,X),(\chi(F,\{x_1\}),c ))|$, which holds by Lemma~\ref{lem:thisgoal}. This concludes the proof.
\end{proof}

\subsubsection{Proofs of Lemmas~\ref{lem:distinguish_extendables},
\ref{lem:former_claim1} and \ref{lem:former_claim2}}\label{sec:proofs}

This section will use the following 
combinatorial lemma,  Lemma~\ref{lem:connected_parity},
which seems to be folklore.

\begin{lemma}\label{lem:connected_parity}
Let $G$ be a connected graph and 
let $S\subseteq V(G)$ be a vertex-subset of even cardinality. Then there is an assignment $\beta: E(G) \to \{0,1\}$ such that, for all $v\in V(G)$, 
\begin{equation}\label{eq:comb_lemma}
\sum_{u\in N(v)} \beta(\{u,v\}) \equiv \begin{cases} 1 \mod 2 & \mbox{if $v \in S$,} \\ 0 \mod 2 & \mbox{if $v \notin S$.} \end{cases}
\end{equation}
\end{lemma}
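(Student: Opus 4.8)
The statement is a standard ``$\mathbb{F}_2$-flow exists matching a prescribed even boundary'' fact, and the cleanest route is via the cycle space of $G$ over $\mathbb{F}_2$. Work in the edge space $\mathbb{F}_2^{E(G)}$ and consider the boundary map $\partial\colon \mathbb{F}_2^{E(G)} \to \mathbb{F}_2^{V(G)}$ sending an edge $\{u,v\}$ to $e_u + e_v$; for a general $\beta\colon E(G)\to\{0,1\}$, $\partial\beta$ is exactly the vector whose $v$-coordinate is $\sum_{u\in N(v)}\beta(\{u,v\}) \bmod 2$. So the claim is precisely that the indicator vector $\mathbf{1}_S \in \mathbb{F}_2^{V(G)}$ lies in the image of $\partial$, whenever $|S|$ is even and $G$ is connected.

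\textbf{Key steps.} First I would observe that the image of $\partial$ always lies in the subspace $Z = \{x \in \mathbb{F}_2^{V(G)} : \sum_v x_v = 0\}$ of even-weight vectors, since each column $e_u + e_v$ has weight $2$; this is the ``only if'' direction and explains why the parity hypothesis on $|S|$ is necessary. Then I would prove the converse, that $\operatorname{im}(\partial) = Z$ when $G$ is connected, by exhibiting enough vectors in the image: fix a spanning tree $T$ of $G$ (exists since $G$ is connected), and for each edge $\{u,v\}$ of $G$ set $\beta(\{u,v\}) = 1$; more usefully, for any two vertices $s,t$ take a path $P$ from $s$ to $t$ in $T$ and let $\beta_P$ be the indicator of $E(P)$ — then $\partial \beta_P = e_s + e_t$, because interior vertices of the path have exactly two incident path-edges and cancel mod $2$. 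Since the vectors $e_s + e_t$ span $Z$ (indeed, fixing a root $r$, the vectors $\{e_r + e_t : t \neq r\}$ form a basis of $Z$), every element of $Z$ is a sum of such vectors, hence in $\operatorname{im}(\partial)$ by linearity of $\partial$. Applying this to $x = \mathbf{1}_S \in Z$ yields the desired $\beta$, concretely: pair up the vertices of $S$ arbitrarily as $\{s_1,t_1\},\dots,\{s_r,t_r\}$ (possible since $|S|$ is even), pick tree-paths $P_j$ from $s_j$ to $t_j$, and set $\beta = \sum_j \beta_{P_j}$ (symmetric difference of the edge sets), so that $\partial\beta = \sum_j (e_{s_j} + e_{t_j}) = \mathbf{1}_S$.

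\textbf{Main obstacle.} There is no serious obstacle; the only thing to be careful about is the mod-$2$ cancellation bookkeeping when summing several path-indicators — an edge used by an even number of the $P_j$ gets $\beta$-value $0$, and one checks that at every vertex $v$ the parity of $\sum_{u \in N(v)}\beta(\{u,v\})$ equals the number of paths $P_j$ having $v$ as an endpoint, mod $2$, which is $1$ exactly for $v \in S$. I would present this either via the linear-algebra formulation (cleanest, since $\partial$ is linear so one only needs the single-pair computation $\partial\beta_P = e_s + e_t$ and then invokes spanning of $Z$) or, if a more elementary exposition is preferred, directly by induction on $|S|/2$: the base case $S = \emptyset$ takes $\beta \equiv 0$, and for the inductive step remove a pair $s,t \in S$, apply induction to $S \setminus \{s,t\}$ to get $\beta'$, take any $s$-$t$ path $P$ in $G$ (connectedness), and set $\beta = \beta' + \beta_P$, noting that adding $\beta_P$ flips the parity constraint at exactly $s$ and $t$ and leaves all other vertices unchanged. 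Either way the proof is short.
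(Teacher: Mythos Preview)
Your proof is correct but takes a genuinely different route from the paper. The paper proves the lemma by induction on $|V(G)|$: it removes a vertex $v$ whose deletion leaves $G$ connected, fixes the $\beta$-values on the edges incident to $v$ (setting exactly one of them to $1$ if $v\in S$, and all of them to $0$ otherwise), adjusts $S$ accordingly on the smaller graph, and applies the induction hypothesis. Your argument instead works either via the $\mathbb{F}_2$-boundary map (showing $\operatorname{im}(\partial)$ equals the even-weight subspace $Z$ when $G$ is connected) or by induction on $|S|/2$ using path-flipping. Your linear-algebra formulation is more conceptual and makes the ``only if'' direction transparent; it also avoids the small side step of locating a non-cut vertex. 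The paper's vertex-deletion induction is more hands-on and self-contained, requiring no auxiliary structure like a spanning tree or path decomposition. Both are short and standard; either would be perfectly acceptable here.
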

\begin{proof}
The proof is by induction on $n=|V(G)|$. If $n=1$, then $E(G)=\emptyset$    so necessarily $S=\emptyset$. 
By convention, the (empty) sum is~$0$ which is as desired since $v\notin S$.

Assume for the induction hypothesis that the claim holds for $n$ and let $G=(V,E)$ be a graph with $n+1$ vertices. 
Let $v$ be a vertex of~$G$ such that $G' := G\setminus \{v\}$ is connected.  Let $v_1,\dots,v_d$ be the neighbours of $v$. Note that $d>0$ since $G$ is connected. We consider two cases.

{\bf Case 1. If $v\in S$:}  Set $\beta^\ast(\{v,v_1\})=1$ and $\beta^\ast(\{v,v_i\})=0$ for all $1<i\leq d$. Set $S'= (S\setminus v) \oplus v_1$. Note that $S'$ has even cardinality, since $S \setminus v$ has odd cardinality. We can thus apply the induction hypothesis to $G'$ and $S'$ and obtain a function $\beta':E(G') \to \{0,1\}$ that satisfies~(\ref{eq:comb_lemma}) for $G'$. It is easy to see that $\beta^\ast\cup\beta'$ satisfies (\ref{eq:comb_lemma}) for $G$.
   
{\bf Case 2. If $v\notin S$:}   Set $\beta^\ast(\{v,v_i\})=0$ for all $i\in[d]$.  
We apply the induction hypothesis to $G'$ and $S$ and obtain a function $\beta':E(G') \to \{0,1\}$ that satisfies~(\ref{eq:comb_lemma}) in $G'$. It is easy to see that $\beta^\ast\cup\beta'$ satisfies (\ref{eq:comb_lemma}) in $G$.
 
\end{proof}

We finish by proving Lemmas~\ref{lem:former_claim1},
\ref{lem:former_claim2} and~\ref{lem:distinguish_extendables}.

\begin{proof}[Proof of Lemma~\ref{lem:former_claim1}]
Let $Y = V(H) \setminus X$. Let $G=\chi(F,W)$.
For all $p\in [k]$, $i\in [m] $
and $j\in [\ell]$, define $S_p$, $C_i$, and $V_i^j$  as in Definition~\ref{def:extendable}.
If (E1) is not satisfied, then $\varphi$ cannot be extended to a homomorphism from~$H$ to~$G$.
If (E2) is not satisfied then there is an $i\in[m]$ such that, for all $j\in[\ell]$, 
$\sum_{p=1}^k |S_p \cap V_i^j|$ is odd. 
Fix this~$i$.
Assume for contradiction that $\varphi \in \cpAns((H,X),(G,c ))$. Then there is a   homomorphism $h\in \cpHom(H,(G,c))$ with $h|_X=\varphi$.

Let $y_1,\dots,y_t$ be the vertices of~$C_i$. Since $h\in \cpHom(H,(G,c))$, the definition of~$c$ 
implies that for all $v\in V(H)$, $\gamma(\pi_1(h(v)))=v$.
Recall that the vertices of 
$F=F_\ell(H,X)$ are $X \cup (Y \times [\ell])$ and that the vertices of
$G=\chi(F,W)$
are pairs $(w,T)$ where $w$ is a vertex of~$F$
and $T\subseteq N(w)$.  
So for a vertex $v=y_s$ of~$C_i$,
$\gamma(\pi_1(h(y_s)))=y_s$ implies that $h(y_s)$
is of the form $((y_s,j),T_s)$ for some $j\in [\ell]$.
Moreover, since $C_i$ is a connected component of~$H[Y]$
there is a single $j$ such that, for all $s\in [t]$,
$h(y_s) =  ((y_s,j),T_s)$. 

We next define a $t$-by-$(k+t)$ matrix $M$, indexed by vertices of~$F$. The rows of $M$ are indexed by $(y_1,j),\dots,(y_t,j)$, and the columns of $M$ are indexed $x_1,\dots,x_k,(y_1,j),\dots,(y_t,j)$.   The entries of $M$ are defined as follows. For $s\in [t]$,
\[ M((y_s,j),v)=\begin{cases} 1 & \mbox{if $v\in T_s$,} \\ 0 & \mbox{if $v \notin T_s$}. \end{cases} \]
From the definition of $G=\chi(F,W)$,
the size of $T_s$ is odd if $(y_s,j) \in W$ and even otherwise. Since $W$ is a subset of $X$, 
every set $T_s$ has even cardinality. 
Thus every row of~$M$ has an even number of $1$s, and therefore $M$ has an even number of $1$s. Let $M_X$ be the submatrix of $M$ containing only the columns indexed by $x_1,\dots,x_k$. Let $M_Y$ be the submatrix of $M$ containing only the columns indexed by  $(y_1,j),\dots,(y_t,j)$.

Note that $M_Y$ is a square matrix. We next show that it is symmetric.  If $M((y_s,j),(y_{s'},j))=1$, then $(y_{s'},j) \in T_s$, which implies that $(y_s,j)$ and $(y_{s'},j)$ are adjacent in $F$ and thus $y_s$ and $y_{s'}$ are adjacent in $H$. Suppose for contradiction that $M((y_{s'},j),(y_s,j))=0$. Then $(y_{s},j) \notin T_{s'}$. So $((y_s,j),T_s)$ and $((y_{s'},j),T_{s'})$ are not adjacent in $G=\chi(F, W)$, contradicting the fact that $h$ is a homomorphism from~$H$ to~$G$. Thus, our assumption was wrong, and $M_Y$ is symmetric. Since the diagonal of $M_Y$ contains only $0$s (since $H$ and $F$ do not have self-loops), the number of $1$s in $M_Y$ is even.

To finish the proof 
we will show that, for every $p \in [k]$,
the number of $1$s in the column of~$M$ indexed by~$x_p$ is  $|S_p \cap V_i^j|$ so the number of $1$s in $M_X$ 
is $\sum_{p=1}^k |S_p \cap V_i^j|$, which is odd by the choice of~$i$. Thus, $M_X$ has an odd number of $1$s and $M_Y$ has an even number of $1$s, contradicting the fact that $M$ has an even number of $1$s. We conclude that
our initial assumption for contradiction, that 
$\varphi \in \cpAns((H,X),(G,c ))$, is false, proving the lemma.

So to finish, fix $p \in [k]$.
We will show that the number of $1$s in the column of~$M$ indexed by~$x_p$ is $|S_p \cap V_i^j|$.

Before considering the entries of this column of~$M$, we establish a fact which we will use twice ---
if $x_p$ and $(y_s,j)$ are adjacent in $F$ then, from the definition of~$F$,
$x_p$ and $y_s$ are adjacent in~$H$.
Since $h$ is a homomorphism from~$H$ to~$G$ that extends $\varphi$, from the definition of~$S_p$, 
$h(x_p) = \phi(x_p) = (x_p,S_p)$, so
it follows that 
$(x_p,S_p)$ is adjacent to $h(y_s) = ((y_s,j),T_s)$ in~$G$. 
Using this fact, we will show that the number of $1$s in the column of~$M$ indexed by~$x_k$ is $|S_p \cap V_i^j|$.

First, consider any $s$ such that that $M((y_s,j),x_p)=1$.  
From the definition of~$M$,
$x_p\in T_s$ so, 
from the definition of $G=\chi(F,W)$,
$x_p$ and $(y_s,j)$ are adjacent in $F$.
From the fact,  
$(x_p,S_p)$ is adjacent to $((y_s,j),T_s)$ in~$G$. From the definition of $G$, $(y_s,j) \in S_p$.
By construction, $(y_s,j) \in V_i^j$, so 
$(y_s,j) \in S_p \cap V_i^j$.

Finally, consider any $s$ such that that $M((y_s,j),x_p)=0$. From the definition of $M$,
$x_p \notin T_s$. There are two cases.

Case 1:
If  $x_p$ and $(y_s,j)$ are not adjacent in $F$ then,
from the definition of $S_p$, $(y_s,j)$ is not in $S_p$ so it is clearly not in $S_p \cap V_i^j$.

Case 2. If $x_p$ and $(y_s,j)$ are adjacent in $F$
but $x_p \notin T_s$  then from the fact 
$(x_p,S_p)$ is adjacent to $((y_s,j),T_s)$ in~$G$. 
So from the definition of $G$, $(y_s,j) \notin S_p$
so it is clearly not in $S_p \cap V_i^j$.
 
As required, we thus obtain that 
the number of $1$s in the column of~$M$ indexed by~$x_k$ is $|S_p \cap V_i^j|$.

\end{proof}

\begin{proof}[Proof of Lemma~\ref{lem:former_claim2}]
Let $Y = V(H) \setminus X$. Let $G=\chi(F,W)$.
For all $p\in [k]$, $i\in [m] $
and $j\in [\ell]$, define $S_p$, $C_i$, and $V_i^j$  as in Definition~\ref{def:extendable}.
Condition (E1) ensures that, for every $\{x_a,x_b\} \in E(H[X])$, $(x_a,S_a)$ and $(x_b,S_b)$ are adjacent in~$G$.
By (E2), for every $i\in [m]$ there is a $j_i \in [\ell]$ such that $\sum_{p=1}^k |S_p \cap V_i^{j_i}|$ is even. 

To show that $\phi \in \cpAns((H,X),(G,c))$,
we will construct an $h\in \cpHom(H,(G,c))$ with $h|_X = \varphi$. 
To do this we will choose a value for $h(y)$ for each $y\in Y$. 
For every $i\in[m]$, let 
$V(C_i) := \{y_{i,1},\dots,y_{i,t_i}\}$ be the vertices of the connected component $C_i$ of $H[Y]$. 
Given a vertex $y_{i,s} \in V(C_i)$,
let $N_i(y_{i,s})$ denote the set of its neighbours in~$C_i$.
For every $y_{i,s} \in V(C_i)$,
let $T_{i,s,X} = \{ x_p \in X \mid (y_{i,s},j_i) \in S_p\}$.

Let $\Omega_i = \{ y_{i,s} \>:\>
\mbox{$|T_{i,s,X}|$ is odd}\}$.
We will show that $|\Omega_i|$ is even.
To see this, let $M_X$ be a matrix with rows indexed by
$(y_{i,1},j_i),\ldots,(y_{i,t_i},j_i)$ and columns indexed by~$X$. Define $M( y_{i,s},j_i),x_p)$ to be~$1$ if 
$x_p \in T_{i,s,X}$ and~$0$ otherwise.
By construction, 
the number of $1$s in the column of $M$ indexed by~$x_p$ is the number of vertices $(y_{i,s},j_i) \in S_p$ 
so this is 
$|S_p \cap V_i^{j_i}|$.
By the choice of~$j_i$, 
the number of $1$s in $M_X$ is even.
So $\Omega_i$, which contains the indices of rows with an odd number of $1$s, has even cardinality.
Apply Lemma~\ref{lem:connected_parity} with graph
$C_i$ and $S=\Omega_i$ to obtain an assignment 
$\beta \colon E(C_i) \to \{0,1\}$ such that,
for all $y_{i,s}\in \Omega_i$, 
$ \sum_{y_{i,s'}\in N_i(y_{i,s})} \beta(\{y_{i,s'},y_{i,s}\})$ is odd and
for all $y_{i,s} \in V(C_i)\setminus \Omega_i$,
this sum is even.
Finally, let $T_{i,s,Y} = 
\{ (y_{i,s'},j_i)   
\mid \mbox{$y_{i,s'} \in N_i(y_{i,s}) $ and 
$\beta(\{y_{i,s'},y_{i,s}\})=1$}\}
$ and
let $T_{i,s} = T_{i,s,X} \cup T_{i,s,Y}$. 
We then define 
$h(y_{i,s}) = ((y_{i,s},j_i),T_{i,s})$. 
Our goal is to show that $h\in \cpHom(H,(G,c))$.
For this we require
\begin{itemize}
\item Property 1: Each $h(y_{i,s})$ is a vertex of~$G$.
\item Property 2: For all $y_{i,s}\in Y$, $c(h(y_{i,s}))=y_{i,s}$. (This follows immediately from the definition of~$h$.)
\item Property 3: $h$ is a homomorphism from~$H$ to~$G$. 
That is, all three of the following hold.
\begin{itemize}
\item Property 3a: For every edge $\{x_a,x_b\}$ of $E(H[X])$, $\{h(x_a),h(x_b)\}$ is an edge of $G$ (this follows immediately from (E1) and the definition of~$G$).
\item Property 3b: For every edge $\{x_p, y_{i,s}\}$ of~$H$ with $x_p\in X$ and $y_{i,s} \in C_i$, 
$\{h(x_p),h(y_{i,s})\}$ is an edge of $G$.
\item Property 3c: For every edge $\{ y_{i,s},y_{i,s'}\}$ of $C_i$,
$\{ h(y_{i,s}), h(y_{i,s'})\}$ is an edge of~$G$.
\end{itemize} 
\end{itemize}
We start by showing Property~1 --- that each $h(y_{i,s})$ is a vertex of~$G$. For this, we need two constraints to be satisfied as follows.
\begin{itemize} 
\item Constraint 1: Every vertex of~$T_{i,s}$  
must be a neighbour of~$(y_{i,s},j_i)$ in~$F$.
\item 
Constraint 2: $|T_{i,s}|$ must be even (this constraint comes from the definition of~$G$
since $y_{i,s}$ is not in $W$, which is a subset of~$X$).
\end{itemize}
Constraint~1 follows directly from the definition of~$T_{i,s}$.
To see that every $x_p \in T_{i,s,X}$ is a neighbour of  $(y_{i,s},j_i) $ note that  $(y_{i,s},j_i) \in S_p$ which implies 
that $\{x_p,(y_{i,s},j_i)\}$ is an edge of~$F$ since
$(x_p,S_p)$ is a vertex of~$G$. It is immediate from the definition of $T_{i,s,Y}$ and $F$ that every vertex in $T_{i,s,Y}$ 
is a neighbour of  $(y_{i,s},j_i) $ in~$F$.

Constraint~2 is by construction. Consider any pair $(i,s)$. If $|T_{i,s,X}|$ is even then $y_{i,s} \notin \Omega_i$ so $|T_{i,s,Y}|$ is even. On the other hand, if $|T_{i,s,X}|$ is odd then $Y_{i,s} \in \Omega_i$, so $|T_{i,s,Y}|$ is odd.

We next consider Property 3b. Consider an edge $\{x_p, y_{i,s}\}$ of~$H$ with $x_p\in X$ and $y_{i,s} \in C_i$.
Then $h(x_p) = (x_p,S_p)$. If $(y_{i,s},j_i) \in S_p$ then $x_p \in T_{i,s,X}$ so $x_p \in T_{i,s}$ 
and $h(y_{i,s}) = ((y_{i,s},j_i),T_{i,s})$ is connected to $(x_p,S_p)$ in $G$ by the definitions of~$F$ and~$G$.
Similarly, if $(y_{i,s},j_i)\notin S_p$ then $x_p \notin T_{i,s}$ so again 
$((y_{i,s},j_i),T_{i,s})$ is connected to $(x_p,S_p)$ in $G$.

To finish the proof, we establish Property~3c.
Consider an edge $\{ y_{i,s},y_{i,s'}\}$ of $C_i$.
Then $h(y_{i,s}) = ((y_{i,s},j_i),T_{i,s})$ and $h(y_{i,s'}) = ((y_{i,s'},j_i),T_{i,s'})$
By construction, $(y_{i,s'},j_i) \in T_{i,s}$ iff $(y_{i,s},j_i) \in T_{i,s'}$. Hence $h(y_{i,s})$ and
$h(y_{i,s'})$ are connected in~$G$.
\end{proof}

\begin{proof}[Proof of Lemma~\ref{lem:distinguish_extendables}] 
Let $Y = V(H) \setminus X$. For $i\in [m]$ and $j\in [\ell]$, define $C_i$ and $V_i^j$  as in Definition~\ref{def:extendable}.
For every  
subset $W$ of $X$ and every
map $\phi \colon X \to V(\chi(F,W))$ that satisfies
$c(\phi(x_p)) = x_p$ for all $p\in [k]$, let $S_1(\phi),\ldots,S_k(\phi)$ be the sets 
defined in Definition~\ref{def:extendable}. 
For every $\phi \in \calE(X,F,W)$, (E2) guarantees that for every $i\in [m]$ there is a $j_i\in [\ell]$
such that $\sum_{p=1}^k |S_p(\phi) \cap V_i^{j_i}|$ is even.
We will partition $\mathcal{E}(X,F,W)$ into disjoint sets in terms of~$i$ and~$j_i$ as follows.
First we define $\mathcal{E}(X,F,W,1)$ by fixing $i=1$ and $j_i>1$.
 $$
\mathcal{E}(X,F,W,1) := \{ \varphi \in \mathcal{E}(X,F,W) \mid \exists j_1>1: 
\mbox{$\sum_{p=1}^k |S_p(\phi) \cap V_1^{j_1}|$ is even} \}.
 $$
For all $i\in \{2,\ldots,m\}$, define
$$
\mathcal{E}(X,F,W,i) := \{ \varphi \in \mathcal{E}(X,F,W)\setminus 
(\cup_{q=1}^{i-1}
\mathcal{E}(X,F,W,q)) \mid \exists j_i>1: 
\mbox{$\sum_{p=1}^k |S_p(\phi) \cap V_i^{j_i}|$ is even}  \}.
$$
Finally, define
$
\mathcal{E}(X,F,W,0) :=  \mathcal{E}(X,F,W) \setminus 
(\bigcup_{i=1}^m \mathcal{E}(X,F,W,i) )
$.

Since the sets $\calE(X,F,W,0),\ldots,\calE(X,F,W,m)$ are a disjoint partition of 
$\calE(X,F,W)$ for every subset $W$ of~$X$, the lemma follows immediately from the following three claims, which clearly imply $|\calE(X,F,\emptyset)| > |\calE(X,F,\{x_1\}|$, as required.

\noindent{\bf Claim 1.} For all $i\in [m]$, 
$|\calE(X,F,\emptyset,i)| = |\calE(X,F,\{x_1\},i)|$.

Fix $i\in [m]$.
To prove Claim~1, we construct a bijection~$b$ from~$\calE(X,F,\emptyset,i)$ to~$\calE(X,F,\{x_1\},i)$. 
Since $H$ is connected, there is path from $C_i$ to~$x_1$
in $H$ that starts at some vertex $y\in C_i$,
takes an edge from~$y$ 
to~$X$ and does not re-visit $C_i$ before reaching $x_1$. 
So there is a path from $(y,1)$ to~$x_1$ in~$F$
whose vertices are  in $X \cup \{ (y',1) \mid y' \in Y\}$
and is of the form 
$P = (y,1) x_{t_1} P_1 x_{t_2} P_2 x_{t_3} P_3 \dots x_{t_{s-1}} P_{s-1} x_{t_{s}}$  
where  
$x_{t_1},\ldots,x_{t_{s}}$ are distinct vertices in~$X$ 
with $t_{s}=1$
and
each $P_j$ is either empty or 
for some $i(j) \in [m]\setminus \{i\}$,
it is 
a non-empty simple path in~$F$ whose vertices are in $V_{i(j)}^1$.
Without loss of generality, 
the path visits every connected component
of $H[Y]$ at most once so for any distinct $j$ and $j'$
such that $P_j$ and $P_{j'}$ are both non-empty, 
$i(j) \neq i(j')$.

Our goal is to define the bijection~$b$.
For each $\phi\in\calE(X,F,\emptyset,i)$
and each $p\in [k]$,
we first define a subset $S'_p(\phi)$ of vertices of~$F$.
Let $N_P(x_p)$ be the neighbours of~$x_p$ in the path~$P$
(if $x_p$ is not in the path~$P$, then $N_P(x_p)=\emptyset$).
Then define $S'_p(\phi) = S_p(\phi) \oplus N_P(x_p)$.
Finally, we define   $b(\phi)$ to be the map
which maps every $p\in [k]$ to $(x_p,S'_p(\phi))$.

We wish to show that $(x_p,S'_p(\phi))$ is a vertex of
$\chi(F,\{x_1\})$.
Since $(x_p,S_p(\phi))$ is a vertex of
$\chi(F,\emptyset)$, 
$S_p(\phi)$ is a subset of $N_F(x_p)$.
So, by construction, $S'_p(\phi)$ is a subset of $N_F(x_p)$.
Each set $S_p(\phi)$ has even cardinality. 
Note that $N_P(x_p)$ has even cardinality unless $p=1$,
in which case it has odd cardinality.
Thus, $S'_p(\phi)$ has even cardinality unless $p=1$,
in which case it has odd cardinality.

The map~$b$ is a bijection since it can be inverted using $S_p(\phi) = S_p'(\phi) \oplus N_P(p)$.
The map $b(\phi)$ satisfies (E1) since $\phi$ satisfies (E1) 
and $x_{a} \in N_P(x_b)$ iff $x_{b} \in N_P(x_a)$.
The map $b(\phi)$  
It satisfies (E2) since the definition of $\calE(X,F,\emptyset,i)$ guarantees $j_i>1$
so $S_p'(\phi)\cap V_i^{j_i} = S_p(\phi) \cap V_i^{j_i}$.
It satisfies $c(\phi(x_p))=x_p$ for all $x_p\in X$ by construction.
Thus, $b(\phi) \in \calE(X,F,\{x_1\})$.
Finally, 
the same $j_i>1$ that shows $\phi \in \calE(X,F,\emptyset,i)$ shows that
$b(\phi)$ is in $\calE(X,F,\{x_1\},i)$.

\noindent{\bf Claim 2.} $|\calE(X,F,\emptyset,0)| > 0$.\quad

To prove Claim~2, we exhibit a $\phi \in \calE(X,F,\emptyset,0)$.  
Let $Z\subseteq Y$ be a set containing exactly one vertex from each component $C_1,\ldots,C_m$ 
such that every vertex in~$Z$ is adjacent to~$X$ in~$H$.
For each $z\in Z$, let  $p(z) = \min \{p\in [k] \mid (x_p,z)\in E(H)\}$.
For each $p\in [k]$
let $S_p =  \{z\in Z \mid p(z)= p\} \times \{2,\ldots,\ell\}$.
Since $\ell>1$ is odd, $|S_p|$ is even.
Let $\phi$ be the map from~$X$ to $V(\chi(F,\emptyset))$
such that, for all $p\in [k]$, $\phi(x_p) = (x_p,S_p)$.
The map~$\phi$ satisfies (E1) since $S_p\cap X = \emptyset$ for all $p\in [k]$.
It satisfies (E2) for any $i\in [m]$
by taking $j=1$ since for all $p\in [k]$, $|S_p \cap V_i^1|=0$.
It satisfies $c(\phi(x_p))=x_p$ for all $p\in [k]$ by construction so $\phi \in \calE(X,F,\emptyset)$.
To see that $\phi \in \calE(X,F,\emptyset,0)$, fix any $i\in[m]$ and any $j\in \{2,\ldots,\ell\}$.
Let $z$ be the unique vertex in $Z\cap V(C_i)$.
Then $(z,j)$ is the unique vertex in $S_{p(z)} \cap V_i^j$
Furthermore, for $p'\neq p(z)$ with $p'\in [k]$,
$S_{p'} \cap V_i^j$ is empty.
Thus, $\sum_{p=1}^k |S_p \cap V_i^j|$ is odd.

\noindent{\bf Claim 3.} $|\calE(X,F,\{x_1\},0)| = 0$.
 
We conclude by proving Claim~3. 
Assume for contradiction that   $\phi \in \calE(X,F,\{x_1\},0)$.
Let $G = \chi(F,\{x_1\})$.
As in Definition~\ref{def:extendable}, define $S_1,\ldots,S_k$ such that, for all $p\in [k]$, $\phi(x_p)=(x_p,S_p)$. Since $\phi(x_p)$ is in $V(G)$,
$|S_1|$ is odd and for every $p\in \{2,\ldots,k\}$, $|S_p|$ is even.
We will analyse a matrix $M$. 
The rows of $M$ are indexed by $x_1,\dots,x_k$ and the columns of $M$ are indexed by the vertices of $F$. For every vertex $v$ of~$F$, $M(x_p,v)$ is defined to be~$1$ if~$v\in S_p$ and $0$~otherwise.
Since $|S_1|$ is odd and $|S_2|,\dots,|S_k|$ are even, the total number of $1$s in $M$ is odd.

We split $M$ into two submatrices: $M_X$ consists of the columns indexed by vertices in~$X$ and $\MD$   consists of the columns indexed by the remaining vertices of~$F$.
For every $p\in [k]$, $x_p\notin S_p$ since $(x_p,S_p)$ is a vertex of~$G$ and~$F$ has no self-loops. Also, $M_X$ is symmetric by (E1). So the number of $1$s in $M_X$ is even.

Now consider $\MD$. Since $\phi \in \calE(X,F,\{x_1\})$, 
(E2) implies that for every $i\in [m]$ there is a $j\in [\ell]$ such that
$\sum_{p=1}^k |S_p \cap V_i^j|$ is even.
Since $\phi\in \calE(X,F,\{x_1\},0)$,
for each $i\in[m]$, the following conditions hold.
\begin{itemize}
\item[(C1)] $\sum_{p=1}^k |S_p \cap V_i^1| $ is even, and
\item[(C2)] for all $j>1$,  $\sum_{p=1}^k |S_p \cap V_i^j| $ is
odd. 
    \end{itemize}

For each $i\in[m]$ and $j\in[\ell]$, let $M^j_i$ be the submatrix of $\MD$ containing only the columns indexed by the vertices in $V_i^j$. Then 
for all $i\in [m]$, condition~(C1) implies that $M_i^1$ contains an even number of $1$s   and 
condition~(C2) implies that for all $j\in \{2,\ldots,\ell\}$,
$M_i^j$ contains an odd number of $1$s.  Since $\ell$ is odd,
the total number of $1$s in the matrices 
$M_i^1,\ldots,M_i^\ell$ is even.
Consequently, the total number of $1$s in $\MD$ is even, contradicting the fact that $M$ has an odd number of $1$s and $M_X$ has an even number of $1$s.  \end{proof}

\section{Main Result and Consequences}\label{sec:main}
With upper and lower bounds established, we are now able to proof Theorem~\ref{thm:main_thm}, which we restate for convenience.

\mthm*
\begin{proof}
We first consider the special case where $(H,X)$ is a \emph{full conjunctive queries}, that is, no variable of $(H,X)$ is existentially quantified  so $X=V(H)$. In this case, $(H,X)$ is counting minimal, since counting equivalence is the same as isomorphism in this case~\cite{DellRW19}. Moreover, $\Gamma(H,X)=H$. Thus $\sew(H,X)=\tw(H)$. Since $\Ans((H,X),G)=\Hom(H,G)$ for $X=V(H)$, counting answers to $(H,X)$ is the same as counting homomorphisms from $H$, and the WL-dimension of counting homomorphisms is $\tw(H)$ as shown by Neuen~\cite{neuen2023homomorphism}.

Now consider the case where $X \neq V(H)$ and let $(H',X')$ be a counting minimal conjunctive query with $(H',X')\sim (H,X)$. Then, $|\Ans((H,X),G)|=|\Ans((H',X'),G)|$ for every graph $G$ and thus $(H,X)$ and $(H',X')$ have the same WL-dimension. 
Furthermore, since $(H,X)$ is connected, so is $(H',X')$ --- see~\cite[Section 6]{DellRW19arxiv}. 
Theorems~\ref{thm:lower_bound} and~\ref{thm:upper_bound} now state that the WL-dimension of $(H',X')$ is equal to $\ctw(H',X')$. Finally, by definition of semantic extension width, we have $\sew(H,X)=\ctw(H',X')$, concluding the proof.
\end{proof}

\subsection{Homomorphism Indistinguishability and Conjunctive Queries}
Given a class of graphs $\mathcal{F}$, two graphs $G$ and $G'$ are called $\mathcal{F}$\emph{-indistinguishable}, denoted by $G \cong_{\mathcal{F}} G'$ if $|\Hom(F,G)|=|\Hom(F,G')|$ for all $F\in \mathcal{F}$. If $\mathcal{F}$ is the class of all graphs, then a classical result of Lov{\'{a}}sz states that $\cong_\mathcal{F}$ coincides with isomorphism (see e.g.\ Theorem 5.29 in~\cite{Lovasz12}). Recent years have seen numerous exciting results on the structure of $\mathcal{F}$-indistinguishability, depending on the class $\mathcal{F}$: For example, Dvor{\'{a}}k~\cite{Dvorak10}, and Dell, Grohe and Rattan~\cite{DellGR18} have shown that $\cong_\mathcal{F}$ coincides with $\cong_k$, i.e., with $k$-WL-equivalence, if $\mathcal{F}$ is the class of all graphs of treewidth at most $k$, and Mancinska and Roberson have shown that $\cong_\mathcal{F}$ coincides with what is called quantum-isomorphism if $\mathcal{F}$ is the class of all planar graphs~\cite{MancinskaR20}.

To state our first corollary, we extend the notion of homomorphism indistinguishability to conjunctive queries.

\begin{definition}
    Let $\Psi$ be a class of conjunctive queries. Two graphs $G$ and $G'$ are $\Psi$-\emph{indistinguishable}, denoted by $G \cong_\Psi G'$, if $|\Ans((H,X),G)|=|\Ans((H,X),G')|$ for all queries $(H,X)\in \Psi$.
\end{definition}

Then, using the notion of conjunctive query indistinguishability, we obtain a new characterisation of $k$-WL-equivalence.

\begin{corollary}[Corollary~\ref{cor:into_WL_char}, restated]
Let $k$ be a positive integer and let $\Psi_k$ be the set of all connected conjunctive queries with at least one free variable and with semantic extension width at most $k$. Then for any pair of graphs $G$ and $G'$,   $G\cong_k G'$ if and only if $G \cong_{\Psi_k} G'$.
\end{corollary}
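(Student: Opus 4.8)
The plan is to derive Corollary~\ref{cor:into_WL_char} from the main theorem (Theorem~\ref{thm:main_thm}) together with the two WL-invariance notions already in place: one direction uses the upper bound on the WL-dimension and the other uses the lower bound, with a small reduction to handle counting-minimal representatives.

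First I would prove the forward direction ($G\cong_k G'$ implies $G\cong_{\Psi_k}G'$). Fix $(H,X)\in\Psi_k$, so $(H,X)$ is connected, $X\neq\emptyset$, and $\sew(H,X)\leq k$. By Theorem~\ref{thm:main_thm} the WL-dimension of $(H,X)$ equals $\sew(H,X)\leq k$, so the graph parameter $G\mapsto|\Ans((H,X),G)|$ is $k$-WL-invariant by the definition of WL-dimension (Definition~\ref{def:WLdim}) together with the monotonicity remark that a $k'$-WL-invariant parameter is $k$-WL-invariant for $k\geq k'$. Hence $G\cong_k G'$ gives $|\Ans((H,X),G)|=|\Ans((H,X),G')|$. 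Since $(H,X)\in\Psi_k$ was arbitrary, $G\cong_{\Psi_k}G'$.

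For the converse I would argue contrapositively: suppose $G\not\cong_k G'$. By the characterisation of WL-equivalence (Definition~\ref{def:wl_equivalence}) there is a graph $H$ of treewidth at most $k$ with $|\Hom(H,G)|\neq|\Hom(H,G')|$. I would take a connected component $H_0$ of $H$ that already witnesses this inequality, which exists because $|\Hom(H,\cdot)|$ factors as a product over connected components; note $\tw(H_0)\leq\tw(H)\leq k$. Now view the full conjunctive query $(H_0,V(H_0))$: it is connected, has $X=V(H_0)\neq\emptyset$, and $|\Ans((H_0,V(H_0)),\cdot)|=|\Hom(H_0,\cdot)|$. For a full query $\Gamma(H_0,V(H_0))=H_0$ and the query is counting minimal (counting equivalence coincides with isomorphism here, as noted in the proof of Theorem~\ref{thm:main_thm}), so $\sew(H_0,V(H_0))=\tw(H_0)\leq k$, placing $(H_0,V(H_0))$ in $\Psi_k$. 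It distinguishes $G$ and $G'$, so $G\not\cong_{\Psi_k}G'$.

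The only mild subtlety — and the step I would be most careful about — is the edge case where $H$ has an isolated vertex or is otherwise degenerate: one must ensure the chosen component $H_0$ genuinely has a nonempty vertex set and that $|\Hom(H_0,G)|\neq|\Hom(H_0,G')|$, which follows since the total homomorphism count is the product over components and counts of isolated vertices are $|V(G)|$ vs.\ $|V(G')|$ (and if those differ we may use a single-vertex query, which lies in $\Psi_k$ for every $k\geq 1$). Modulo this bookkeeping, both directions are immediate consequences of Theorem~\ref{thm:main_thm}, so the proof is short; the forward direction is the one that genuinely uses the full strength of the main theorem (via the upper bound), while the converse only needs the classical homomorphism-count characterisation of $\cong_k$ plus the trivial computation of $\sew$ for full queries.
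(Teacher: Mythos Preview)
Your proof is correct and follows essentially the same approach as the paper: the forward direction uses Theorem~\ref{thm:main_thm} (really just the upper bound $\mathrm{WL\text{-}dim}\leq\sew$) to get $k$-WL-invariance of every query in $\Psi_k$, and the backward direction reduces to full queries $(H_0,V(H_0))$ with $\sew=\tw\leq k$ together with the multiplicativity of $|\Hom(\cdot,G)|$ over connected components. The paper phrases the second direction directly rather than contrapositively, but the content is identical.
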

\begin{proof}
For the first direction, suppose that $G\cong_k G'$ and consider $(H,X)\in \Psi_k$. Then $\sew(H,X)\leq k$ and thus the WL-dimension of $(H,X)$ is at most $k$ by Theorem~\ref{thm:main_thm}. Consequently, $|\Ans((H,X),G)|=|\Ans((H,X),G')|$. This shows that $G \cong_{\Psi_k} G'$. 

For the other direction, suppose that $G \cong_{\Psi_k} G'$. Recall that $\sew(H,V(H))=\tw(H)$ for all $h$. Thus $\Psi_k$ contains all conjunctive queries $(H,V(H))$ with $\tw(H)\leq k$. Since $\Ans((H,V(H)),F)=\Hom(H,F)$ for all $H$ and $F$, $G \cong_{\Psi_k} G'$ implies $G \cong_{\mathcal{F}'} G'$ where $\mathcal{F}'$  is the 
class of all conjunctive queries $(H,V(H))$ where $H$ is a 
\emph{connected} graph with   treewidth at most $k$. Finally, we can remove the connectivity constraint as follows: Let $F=F_1 \cup F_2$ be the disjoint union of two graphs $F_1$ and $F_2$. If $F$ has treewidth at most $k$, then both $F_1$ and $F_2$ also have treewidth at most $k$, since treewidth is monotone under taking subgraphs. Thus, $G \cong_{\mathcal{F}'} G'$ implies  $\Hom(F,G)=\Hom(F_1,G)\cdot \Hom(F_2,G) = \Hom(F_1,G')\cdot \Hom(F_2,G')= \Hom(F,G')$. Consequently $G \cong_{\mathcal{F}} G'$ where $\mathcal{F}$ 
is the class of all queries $(H,V(H))$ such that $H$ is a graph with treewidth at most $k$  and thus $G\cong_k G'$.
\end{proof}

In the following corollary, we show that the treewidth of a conjunctive query alone is insufficient for describing the WL-dimension. This is even the case for treewidth $1$, i.e., for acyclic queries.
\begin{corollary}\label{thm:acyclic_WL_dim}
    The class of acyclic conjunctive queries have unbounded WL-dimension, that is, there is no $k$ such that $G \cong_k G'$ if and only if $G \cong_{\mathcal{T}} G'$, where $\mathcal{T}$ is the class of all acyclic conjunctive queries.
\end{corollary}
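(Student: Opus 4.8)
The plan is to derive Corollary~\ref{thm:acyclic_WL_dim} as an immediate consequence of the already-established machinery, in particular the computation $\sew(S_k,X_k)=k$ for the $k$-star queries carried out in the introduction, together with Theorem~\ref{thm:main_thm}. First I would recall that each star query $(S_k,X_k)$ is acyclic, since $S_k$ is a tree (it has treewidth~$1$), so $(S_k,X_k)\in\mathcal{T}$ for every $k$. By Theorem~\ref{thm:main_thm}, the WL-dimension of $(S_k,X_k)$ equals $\sew(S_k,X_k)$, and we have already argued that $\Gamma(S_k,X_k)$ is the $(k+1)$-clique, which has treewidth $k$, and that $(S_k,X_k)$ is counting minimal, so $\sew(S_k,X_k)=k$. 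Hence for every positive integer $k$ there is an acyclic conjunctive query whose WL-dimension is exactly $k$.

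Next I would turn this into the non-characterisation statement. Suppose toward a contradiction that there were some fixed $k$ with $G\cong_k G'$ iff $G\cong_{\mathcal{T}} G'$ for all graphs $G,G'$. The direction ``$G\cong_{\mathcal T}G'\Rightarrow G\cong_k G'$'' would in particular say that indistinguishability by all acyclic queries forces $k$-WL-equivalence; but by monotonicity of WL-equivalence this would in turn mean that every acyclic conjunctive query has WL-dimension at most $k$ (because if $G\cong_k G'$ then, by the assumed equivalence, $G\cong_{\mathcal T}G'$, so every acyclic query agrees on $G$ and $G'$). That contradicts the previous paragraph by taking the star query $(S_{k+1},X_{k+1})$, whose WL-dimension is $k+1>k$: there exist $G\cong_k G'$ with $|\Ans((S_{k+1},X_{k+1}),G)|\neq|\Ans((S_{k+1},X_{k+1}),G')|$, so $G\not\cong_{\mathcal T}G'$ while $G\cong_k G'$, breaking the claimed equivalence. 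I should be slightly careful about the direction of the argument: what we actually need is that $\cong_{\mathcal T}$ is strictly coarser (as an equivalence relation) than every single $\cong_k$ fails to capture it; the cleanest phrasing is that $\cong_{\mathcal{T}}$ refines $\cong_k$ for no $k$, witnessed by the star queries showing $\cong_{\mathcal T}\ \subsetneq\ \cong_k$ would force WL-dimension $\le k$ for all of $\mathcal T$.

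Concretely, the proof I would write is: \emph{For each $k$, the star query $(S_k,X_k)$ is acyclic and, by the computation in the introduction together with Theorem~\ref{thm:main_thm}, has WL-dimension $k$; hence there are graphs $G_k\cong_{k-1}G_k'$ with $|\Ans((S_k,X_k),G_k)|\neq|\Ans((S_k,X_k),G_k')|$. If some fixed $\hat k$ satisfied $G\cong_{\hat k}G'\iff G\cong_{\mathcal T}G'$, then taking $k=\hat k+1$ we get $G_k\cong_{\hat k}G_k'$ (since $G_k\cong_{k-1}G_k'$ and $k-1=\hat k$), hence $G_k\cong_{\mathcal T}G_k'$, hence $(S_k,X_k)$ agrees on $G_k,G_k'$ — a contradiction.}

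I do not anticipate a serious obstacle here: the only thing to get right is the bookkeeping between ``WL-dimension $k$'' and the existence of a $(k-1)$-WL-equivalent pair distinguished by the query (which is exactly what Definition~\ref{def:WLdim} and the lower-bound half of Theorem~\ref{thm:main_thm}, i.e.\ Theorem~\ref{thm:lower_bound}, give for counting-minimal queries such as the star queries), and making sure the quantifier structure of the negated characterisation is handled cleanly. The mild subtlety — and the place I would be most careful — is that the statement to refute is a biconditional for a \emph{single} universal $k$, so one must exhibit, for that putative $k$, a concrete separating pair coming from a star query of index $k+1$; everything else is immediate from the already-proven results and the introduction's example.
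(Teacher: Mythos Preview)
Your proposal is correct and follows essentially the same approach as the paper: both use the $k$-star queries $(S_k,X_k)$, note they are acyclic and counting minimal with $\sew(S_k,X_k)=k$, and invoke Theorem~\ref{thm:main_thm} to conclude their WL-dimension is $k$, hence unbounded over $\mathcal{T}$. The only difference is that you spell out explicitly how unbounded WL-dimension contradicts the putative biconditional characterisation (via the separating pair for $(S_{\hat k+1},X_{\hat k+1})$), whereas the paper leaves this step implicit with ``follows immediately''.
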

\begin{proof}
The corollary follows immediately from Theorem~\ref{thm:main_thm} and the fact that acyclic conjunctive queries can have arbitrary high semantic extension width. Recall the $k$-star query $(S_k,X_k)$, where $X_k=\{x_1,\dots,x_k\}$, $V(S_k)=X_k \cup \{y\}$ and $E(S_k)=\{\{x_i,y\}\mid i\in[k] \}$. Clearly, $(S_k,X_k)$ is acyclic. Moreover, it is well-known that $(s_k,X_k)$ is counting minimal (see e.g. \cite{DellRW19}). Finally, $\Gamma(S_k,X_k)$ is the $k+1$-clique, and thus $\sew(S_k,X_k)=\tw(K_{k+1})=k$.
\end{proof}

Corollary~\ref{thm:acyclic_WL_dim}
is in stark contrast to the quantifier-free case. The WL-dimension of any acyclic conjunctive query $(H,V(H))$  is equal to $1$ since this case is equivalent to counting homomorphisms from acyclic graphs~\cite{Dvorak10,DellGR18}. 
Given Corollary~\ref{thm:acyclic_WL_dim} one might ask how powerful indistinguishability by acyclic conjunctive queries is: Is there any $k>1$ such that $\mathcal{T}$-indistinguishability is at least as powerful as $k$-WL-equivalence? We provide a negative answer.

\begin{observation}
Let $2K_3$ denote the graph consisting of two disjoint triangles and let $C_6$ denote the $6$-cycle. Let $(H,X)$ be a connected and acyclic conjunctive query. Then $|\Ans((H,X),2K_3)| = |\Ans((H,X),C_6|$.
\end{observation}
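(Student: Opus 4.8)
The plan is to reduce to homomorphism counts via the interpolation argument already set up in this section. By Lemma~\ref{lem:gen}, it suffices to prove that $|\Hom(F_\ell(H,X),2K_3)| = |\Hom(F_\ell(H,X),C_6)|$ for every positive integer $\ell$. It is worth noting up front that one cannot simply invoke the fact that $2K_3$ and $C_6$ are indistinguishable by the $1$-dimensional WL-algorithm (they are both $2$-regular on six vertices): the graphs $F_\ell(H,X)$ need \emph{not} have treewidth $1$, since by Corollary~\ref{cor:ewidth_alternative} the treewidth of $F_\ell(H,X)$ can be as large as the extension width of $(H,X)$, and this is unbounded even for acyclic queries (cf.\ the star queries). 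So we need a different structural property of $F_\ell(H,X)$, and the right one is bipartiteness.

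First I would record that $F_\ell(H,X)$ is bipartite for every $\ell$. Since $(H,X)$ is connected and acyclic, $H$ is a tree, hence bipartite; fix a proper $2$-colouring of $H$, that is, a homomorphism $\kappa\colon H\to K_2$. By Observation~\ref{obs:gamma_is_col}, the map $\gamma[H,X,\ell]$ is a homomorphism from $F_\ell(H,X)$ to $H$, so $\kappa\circ\gamma[H,X,\ell]$ is a homomorphism from $F_\ell(H,X)$ to $K_2$, i.e.\ a proper $2$-colouring; thus $F_\ell(H,X)$ is bipartite.

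Next I would prove the self-contained claim that $|\Hom(F,2K_3)| = |\Hom(F,C_6)|$ for every bipartite graph $F$. The two key facts are: (i) $C_6$ is isomorphic to the tensor (categorical) product of $K_3$ and $K_2$ --- the graph on $V(K_3)\times V(K_2)$ in which $(a,b)$ is adjacent to $(a',b')$ exactly when $\{a,a'\}\in E(K_3)$ and $\{b,b'\}\in E(K_2)$ --- and (ii) a homomorphism from a graph $F$ into this product is precisely a pair consisting of a homomorphism $F\to K_3$ and a homomorphism $F\to K_2$, so $|\Hom(F,C_6)| = |\Hom(F,K_3)|\cdot|\Hom(F,K_2)|$. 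Writing $c$ for the number of connected components of $F$: since $F$ is bipartite, each component admits exactly two homomorphisms to $K_2$, so $|\Hom(F,K_2)| = 2^c$ and hence $|\Hom(F,C_6)| = 2^c\,|\Hom(F,K_3)|$. On the other side, a homomorphism maps each connected component of $F$ onto a connected subgraph of $2K_3$, hence into one of its two triangles; thus the number of homomorphisms from a single component into $2K_3$ is twice the number into $K_3$, and multiplying over the $c$ components gives $|\Hom(F,2K_3)| = 2^c\,|\Hom(F,K_3)|$ as well. Comparing the two expressions proves the claim. Applying it to $F=F_\ell(H,X)$ for every $\ell$ and then invoking Lemma~\ref{lem:gen} completes the proof.

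I do not expect a serious obstacle: once one resists the (failing) temptation to argue through $1$-WL-equivalence and instead isolates bipartiteness as the property of $F_\ell(H,X)$ that is genuinely preserved, the rest is short. The only minor care needed is for degenerate cases --- for instance $X=\emptyset$, or $F_\ell(H,X)$ having isolated vertices when $H$ is a single existentially quantified vertex --- but the component-wise formulation of the claim above handles all of these uniformly.
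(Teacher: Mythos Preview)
Your proof is correct and takes a different route from the paper's. The paper argues directly on $H$: it treats $X=\emptyset$ separately, and for $X\neq\emptyset$ roots the tree at a free variable, uses that both $2K_3$ and $C_6$ are $2$-regular so that once the image of a vertex is fixed each child has exactly two possible images in either target, and then appeals (somewhat informally) to the fact that only the images of the free variables are counted to conclude equality. Your argument instead reduces via Lemma~\ref{lem:gen} to homomorphism counts from the auxiliary graphs $F_\ell(H,X)$, observes that these are all bipartite (since $H$ is a tree and $\gamma$ is a homomorphism $F_\ell(H,X)\to H$), and then proves the clean standalone claim that $|\Hom(F,2K_3)|=|\Hom(F,C_6)|$ for \emph{every} bipartite $F$ using the tensor factorisation $C_6\cong K_3\times K_2$. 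What your approach buys is uniformity (no case split on $X=\emptyset$), reuse of machinery already built in the paper, and an actual explanation of the phenomenon: $2K_3$ and $C_6$ have the same bipartite double cover, which is exactly what governs homomorphism counts from bipartite sources. What the paper's approach buys is self-containment: it needs neither Lemma~\ref{lem:gen} nor tensor products, only the $2$-regularity of the two targets. Your side remark that the $1$-WL route fails because $\tw(F_\ell(H,X))$ can be large is also a point the paper's direct argument never has to confront.
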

\begin{proof}
For the disjoint union of two conjunctive queries $(H_1,X_1)\cup (H_2,X_2)$ and any graph $G$, $|\Ans((H_1,X_1)\cup (H_2,X_2),G)|=|\Ans((H_1,X_1),G)|\cdot|\Ans((H_2,X_2),G)|$. Thus, it suffices to prove the observation for connected queries.
    
Let $(H,X)$ be any connected acyclic conjunctive query. If $X=\emptyset$ then the observation is trivial since computing $G \mapsto |\Ans((H,X),G)|$ is equivalent to deciding whether there is a homomorphism from $H$ to $G$. As $H$ is acyclic, we thus have $|\Ans((H,X),2K_3)| = |\Ans((H,X),C_6| = 1$.
    
Now assume that $X\neq \emptyset$. For the proof, we associate the query $(H,X)$ with an edge-weighted tree $T$ as follows: The vertices of $T$ are $X$, and two (distinct) vertices $x_1,x_2$ of $T$ are adjacent if and only if there is a path from $x_1$ to $x_2$ in $H$, the intermediate vertices of which are all existentially quantified variables, i.e., contained in $V(H)\setminus X$. Note that there can be at most one such path since $(H,X)$ is acyclic. The number of intermediate vertices on this path will become the weight of the edge $\{x_1,x_2\}$ in $T$; note that the weight is $0$ if and only if $x_1$ and $x_2$ are adjacent in $H$. Let us write $w:E(T)\to \mathbb{N}$ for the weight function of the edges of $T$.

Now it is easy to see that, for any graph $G$ without isolated vertices, the elements of $\Ans((H,X),G)$ are precisely the mappings $\varphi: V(T) \to V(G)$ such that for every $e=\{x_1,x_2\}\in E(T)$ there is a (not necessarily simple) walk from $\varphi(x_1)$ to $\varphi(x_2)$ with $w(e)$ internal vertices.

Let us write $\#\Ans((T,w),G)$ for the set of such mappings. It remains to show that $\#\Ans((T,w),C_6)=\#\Ans((T,w),2K_3)$. We prove this claim by induction on $n=|V(T)|$. Since $V(T)=X$ and $X\neq \emptyset$, the induction base is $n=1$, for which we have that $\#\Ans((T,w),C_6)=\#\Ans((T,w),2K_3) = 6$.

For the induction step, fix a leaf $x$ of $T$, let $x'$ be its neighbour, and let $\hat{w}=w(\{x,x'\})$. Moreover, let $T'=T\setminus \{x\}$ and $w'=w|_{E(T')}$. By the induction hypothesis we have $\#\Ans((T',w'),C_6)=\#\Ans((T',w'),2K_3)$. Now fix \emph{any} pair of queries $\varphi_1\in \Ans((T',w'),C_6)$ and $\varphi_2 \in \Ans((T',w'),2K_3)$. Let $s_1$ and $s_2$ be the number of extensions of $\varphi_1$ and $\varphi_2$ that yield an element in $\Ans((T,w),C_6)$ and $\Ans((T,w),2K3)$, respectively, that is, $s_1=|V_1|$ and $s_2=|V_2|$ where
\begin{align*}
    V_1 &=\{v \in V(C_6)\mid \varphi_1 \cup \{x \mapsto v\}\in \Ans((T,w),C_6) \}\\
    V_2 &=\{v \in V(2K_3)\mid \varphi_1 \cup \{x \mapsto v\}\in \Ans((T,w),2K_3) \}
\end{align*}
Next we claim
\begin{align*}
    s_1=s_2=\begin{cases}
        2 & \hat{w}=0\\
        3 & \hat{w}>0
    \end{cases}
\end{align*}
Note that proving this claim concludes the proof since it implies that \[\#\Ans((T,w),C_6)= s\cdot \#\Ans((T',w'),C_6)= s\cdot \#\Ans((T',w'),2K_3)= \#\Ans((T,w),2K3),\]
where $s\in \{2,3\}$ depends only on $\hat{w}$.

Finally, to prove the claim, assume $C_6$ has vertices $\{0,\dots,5\}$ and edges $\{i,i+1 \mod 6\}$. Moreover, assume that $2K3$ has vertices $\{a,b,c,a',b',c'\}$ and that the triangles are $\{a,b,c\}$ and $\{a',b',c'\}$.
Now let $u_1=\varphi_1(x')$ and $u_2=\varphi_2(x')$. W.l.o.g.\ assume that $u_1=0$ and $u_2=a$. 
\begin{itemize}
    \item If $\hat{w}=0$ then $x$ must be mapped to a neighbour of the image of $x'$. Since both $2K3$ and $C_6$ are $2$-regular, there are precisely $2$ options for each case.
    \item If $\hat{w}$ is odd, then $V_1=\{0,2,4\}$ and $V_2=\{a,b,c\}$. Thus $n_1=n_2=3$.
    \item Otherwise $\hat{w}$ is positive and even. Then $V_1=\{1,3,5\}$ and $V_2=\{a,b,c\}$. Thus $n_1=n_2=3$.
\end{itemize}
This concludes the proof.
\end{proof}
In other words, acyclic conjunctive queries cannot even distinguish $2K_3$ and $C_6$, which are the most common examples of graphs which are $1$-WL-equivalent, but which are not $2$-WL-equivalent.

\subsection{WL-Dimension and the Complexity of Counting}
 
In this section, we give a connection between   WL-dimension and the parameterised complexity of counting answers to conjunctive queries. Recall that, given a class of conjunctive queries $\Psi$, the counting problem $\#\textsc{CQ}(\Psi)$ takes as input a pair consisting of a conjunctive query $(H,X)\in \Psi$ and a graph $G$ and outputs $|\Ans((H,X),G)|$. 

Recall that a class of conjunctive queries has \emph{bounded} WL-dimension if there is a constant $B$ that upper bounds the WL-dimension of all queries in the class. 
\corcomplexity*
\begin{proof}

Given a conjunctive query $(H,X)$, recall the definition of the graph $\Gamma(H,X)$ from Definition~\ref{def:ew}.
The \emph{contract} of   $(H,X)$ is the 
induced graph subgraph~$\Gamma[X]$ (see~\cite[Definition 8]{DellRW19}).

Using the notion of contract,
Chen, Durand, and Mengel~\cite{DurandM15,ChenM15} and Dell, Roth and Wellnitz~\cite{DellRW19} have established an exhaustive complexity classification for $\#\textsc{CQ}(\Psi)$. For an explicit statement, Theorem 10 in~\cite{DellRW19} states that (conditioned on $\mathrm{FPT} \neq W[1]$),
$\#\textsc{CQ}(\Psi)$ is solvable in polynomial time if and only if 
both the treewidth of $\Psi$ and the treewidth of the \emph{contracts} of $\Psi$ are  bounded.

By Theorem~\ref{thm:main_thm}, the WL-dimension of a 
connected counting minimal query with $X\neq \emptyset$ 
is equal to its extension width.
It remains to show that the condition that both the treewidth of $\Psi$ and the treewidth of the contracts of $\Psi$ are bounded
is the same as the condition that $\Psi$ has bounded extension width.

First, suppose that $\Psi$ has bounded extension width.
Then for some quantity~$B$,
every query $(H,X)$ in $\Psi$ has extension width at most~$B$.
So $\tw(\Gamma(X,H))\leq B$. Since $H$ and $\Gamma[X]$ are both subgraphs of
$\Gamma(X,H)$, they both have treewidth at most~$B$, as required.

For the other direction, suppose that the treewidth of every query in~$\Psi$ and every contract of every query in~$\Psi$ is at most~$B$.
Consider a query $(H,X) \in \Psi$. Let $t_1\leq B$ be the treewidth of $H$ and let $t_2\leq B$ be the treewidth of $\Gamma[X]$. Let $\mathcal{T}=(T,\mathcal{B})$ be an optimal tree-decomposition of $\Gamma[X]$. Let $C_1,\dots,C_d$ be the connected components of $H[Y]$. For each $i\in[d]$ let $\delta_i$ be the subset of vertices in $X$ that are adjacent to $C_i$ in $H$. By the definition of $\Gamma$, each $\delta_i$ induces a clique in $\Gamma$, and thus in $\Gamma[X]$. Therefore, there is a bag $B_i$ that  contains $\delta_i$ (see e.g.\ \cite[Lemma 3]{BodlaenderK06}). Since the treewidth of $\Gamma[X]$ is $t_2$ it follows that $|\delta_i|\leq t_2+1$. 

Next, for each $i\in[d]$, let $\mathcal{T}_i$ be an optimal tree decomposition of $C_i$ and note that the width of all $\mathcal{T}_i$ is at most $t_1$ since all components~$C_i$ are subgraphs of $H$. It is now straightforward to construct a tree-decomposition of width at most $t_1+t_2$ of $\Gamma(H,X)$. For every $i\in[d]$, we add $\delta_i$ to each bag of $\mathcal{T}_i$. Finally, we fix an arbitrary node $v_i$ of the tree $T_i$ of $\mathcal{T}$ and connect it to the node of $T$ with bag $B_i$. Clearly (T1)-(T3) are satisfied. 
 
\end{proof}

\subsection{Linear Combinations of Conjunctive Queries}

The study of linear combinations of homomorphism counts dates back to the work of Lov{\'{a}}sz (see  the textbook~\cite{Lovasz12}). Recently, staring with the work of Chen and Mengel~\cite{ChenM16} and of Curticapean, Dell and Marx~\cite{CurticapeanDM17}, the study of these linear combinations has re-arisen in the context of parameterised counting complexity theory. Moreover, the works of Seppelt~\cite{Seppelt23}, Neuen~\cite{neuen2023homomorphism}, and Lanzinger and Barceló~\cite{LanzingerB23} have shown that the WL-dimension of a function evaluating a finite linear combination of homomorphism counts is equal to the maximum WL-dimension of any term in the combination. Using Theorem~\ref{thm:main_thm}, we  establish a similar  result (Corollary~\ref{cor:quantum_WL}) for linear combinations of conjunctive queries. 
This gives a precise quantification   of
the WL-dimension of unions of conjunctive queries and of conjunctive queries with disequalities and negations over the free variables.

Following Lov{\'{a}}sz's notion of a ``quantum graph''\cite[Chapter 6.1]{Lovasz12}, we formalise our linear combinations as follows.
\begin{definition}[Quantum Query]\label{def:quantumquery}
A \emph{quantum query} $Q$ is a formal finite linear combination of conjunctive queries
$Q = \sum_{i=1}^\ell c_i\cdot (H_i,X_i)$
such that, for all $i\in[\ell]$, $c_i \in \mathbb{Q}\setminus\{0\}$ and $(H_i,X_i)$ is a connected and counting minimal conjunctive query with $X_i\neq 0$. Moreover, the 
conjunctive queries
$(H_i,X_i)$ are pairwise non-isomorphic. We call the queries $(H_i,X_i)$ the \emph{constituents} of $Q$.
The number of answers of $Q$ in a graph $G$ is defined as  
$|\Ans(Q,G)| := \sum_{i=1}^\ell c_i\cdot |\Ans((H_i,X_i),G)|$.
\end{definition}

Chen and Mengel~\cite{ChenM16}, and Dell, Roth, and Wellnitz~\cite{DellRW19} have shown that 
for every 
union $\phi$  of (connected) conjunctive queries with at least one free variable
there is a quantum query $Q[\phi]$ such that,
for all graphs $G$, the number of answers of $\varphi$ in $G$ is equal to $|\Ans(Q[\varphi],G)|$. Moreover, $Q[\varphi]$ is unique up to reordering terms (and up to isomorphim of the constituents). 
They have also shown a similar result
when $\phi$ is a conjunctive query with disequalities and negations.

\begin{definition}[Hereditary Semantic Extension Width]
    The \emph{hereditary semantic extension width} of a quantum query $Q = \sum_{i=1}^\ell c_i\cdot (H_i,X_i)$
    is $\hsew(Q)=\max\{\sew(H_i,X_i) \mid i \in[\ell] \}$.
\end{definition}

We define the WL-dimension of a quantum query $Q$ as the WL-dimension of the graph parameter $G \mapsto |\Ans(Q,G)|$. The following lemma was shown by Seppelt~\cite{Seppelt23} in the special case of homomorphisms, i.e., the case in which each constituent $(H_i,X_i)$ satisfies $X_i =V(H_i)$. The proof of the generalised version follows the same idea, combining   Seppelt's approach  and Theorem~\ref{thm:main_thm}.

\corquantum*
\begin{proof}
Let $Q = \sum_{i=1}^\ell c_i\cdot (H_i,X_i)$ and $k=\hsew(Q)$. We first show that the WL-dimension of $Q$ is upper bounded by $k$; this is the easy direction. 
For each $i\in [\ell]$,
let $k_i = \sew(H_i,X_i)$. 
By the definition of $\hsew(Q)$, 
$k= \max_{i\in [\ell]} k_i$.

By Theorem~\ref{thm:main_thm}, the WL-dimension of the function $G\mapsto |\Ans(H_i,X_i)|$ is~$k_i$.  
Let $G$ and $G'$ be graphs such that
$G\cong_k G'$ (which means that for every $i\in [\ell]$, $G\cong_{k_i} G'$).
Then for every $i\in [\ell]$,  
$|\Ans((H_i,X_i),G)|=|\Ans((H_i,X_i),G')|$
so
\[|\Ans(Q,G)| = \sum_{i=1}^\ell c_i\cdot |\Ans((H_i,X_i),G)| = \sum_{i=1}^\ell c_i\cdot |\Ans((H_i,X_i),G')| =|\Ans(Q,G')|. \]
This means that the function 
$G\mapsto |\Ans(Q,G)|$ cannot distinguish $k$-WL-equivalent graphs so the WL-dimension of~$Q$ is at most~$k$.

The more difficult direction is the lower bound. To this end, we will construct graphs $F$ and $F'$ such that $F\cong_{k-1} F'$ and $|\Ans(Q,F)|\neq |\Ans(Q,F')|$. 
This implies that the WL-dimension of 
the graph parameter $G\mapsto |\Ans(Q,G)|$ is greater than $k-1$ so the WL-dimension of $Q$ is at least~$k$.

Assume without loss of generality that $\sew(H_1,X_1)= k$. By Theorem~\ref{thm:main_thm} the WL-dimension of $(H_1,X_1)$ is $k$ 
so $k$ is the minimum positive integer such that counting answers from $(H_1,X_1)$ cannot
distinguish $k$-equivalent graphs.
Thus there are graphs $G$ and $G'$ with $G\cong_{k-1} G'$ and $|\Ans((H_1,X_1),G)|\neq |\Ans((H_1,X_1),G')|$. 

The   \emph{tensor product}  of two graphs $A$ and $B$,   denoted by $A \otimes B$, has the property that, for every graph $H$,  $|\Hom(H,A\otimes B)|=|\Hom(H,A)|\cdot|\Hom(H,B)|$. See, for example, \cite[Chapters 3.3 and 5.2.3]{Lovasz12}. 
For each graph $T$ of treewidth at most $k-1$,   $|\Hom(T,G\otimes H)|=|\Hom(T,G)|\cdot|\Hom(T,H)| = |\Hom(T,G')|\cdot|\Hom(T,H)| = |\Hom(T,G'\otimes H)|$ Thus,
for all graphs $H$,  $G \otimes H \cong_{k-1} G'\otimes H$.

Suppose for contradiction that, 
for all $H$,
$|\Ans(Q,G \otimes H)|= |\Ans(Q,G'\otimes H)|$. This rewrites to
\[ \sum_{i=1}^\ell c_i(|\Ans((H_i,X_i),G)|-|\Ans((H_i,X_i),G')|) \cdot |\Ans((H_i,X_i),H)| = 0. \]
Let $d_i:= c_i(|\Ans((H_i,X_i),G)|-|\Ans((H_i,X_i),G')|)$ and  $m=\max\{|V(H_i)|\mid i\in[\ell] \}$. Let  $\mathcal{H}$ be the set of all graphs with at most $m^m$ vertices.
Then we obtain a system of linear equations
containing the following equation for each $H\in \mathcal{H}$.
\[\sum_{i=1}^\ell d_i\cdot  |\Ans((H_i,X_i),H)| = 0\,. \]
It was shown in~\cite[Lemma~34(iii)]{DellRW19arxiv} that the matrix corresponding to this system has full rank. Therefore, for all $i\in [\ell]$,
$d_i=0$. This implies $|\Ans((H_1,X_1),G)|=|\Ans((H_1,X_1),G')|$, contradicting the choice of~$G$ and~$G'$. Thus our assumption was wrong, and there is a graph $H$ (in fact $H\in\mathcal{H}$) such that $|\Ans(Q,G \otimes H)|\neq |\Ans(Q,G'\otimes H)|$. Since $G \otimes H \cong_{k_1} G' \otimes H$, the proof is concluded.
\end{proof}

\subsection{Star Queries and Dominating Sets}
In this final section we  use Theorem~\ref{thm:main_thm} to determine the WL-dimension of counting dominating sets (proving Corollary~\ref{cor:intro_domset}).

\begin{definition}[Dominating Set]
Let $G$ be a graph and let $k$ be a positive integer. A \emph{dominating set} of $G$ is a subset $D\subseteq V(G)$ such that each vertex of $G$ is either contained in~$D$ or adjacent to a vertex in $D$. The set $\Delta_k(G)$ contains all size-$k$ dominating sets of  $G$.
\end{definition}

For analysing the WL-dimension of counting size-$k$ dominating sets   we will consider, as an intermediate step, the $k$-star-query.
\begin{definition}
Let $k$ be a positive integer. The $k$-\emph{star} is the conjunctive query $(S_k,X_k)$ where $X_k=\{x_1,\dots,x_k\}$, $V(S_k)=X \cup \{y\}$, and $E(S_k)=\{\{x_i,y\} \mid i\in[k] \}$.
\end{definition}

The $k$-star is often written in the more prominent form
$\phi(x_1,\dots,x_k) =  \exists y: \bigwedge_{i=1}^k E(x_i,y).$
It is well-known that $(S_k,X_k)$ is counting minimal (see e.g.~\cite{DellRW19}). Moreover, $\Gamma(S_k,X_k)$ is the $(k+1)$-clique, which has treewidth $k$. Thus $\sew(S_k,X_k)= k$.
Corollary~\ref{cor:cor_star} follows immediately 
from   Theorem~\ref{thm:main_thm}.
\begin{corollary}\label{cor:cor_star}
    The WL-dimension of $(S_k,X_k)$ is $k$.
\end{corollary}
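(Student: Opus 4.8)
The plan is to obtain the corollary as an immediate instance of Theorem~\ref{thm:main_thm}, which equates the WL-dimension of a connected conjunctive query with at least one free variable to its semantic extension width. Since $k$ is a positive integer, $S_k$ is a star and hence connected, and $X_k=\{x_1,\dots,x_k\}\neq\emptyset$, so the hypotheses of Theorem~\ref{thm:main_thm} are met and it is enough to show $\sew(S_k,X_k)=k$.

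First I would recall that $(S_k,X_k)$ is counting minimal (a standard fact, see e.g.\ \cite{DellRW19}), so that $\sew(S_k,X_k)=\ctw(S_k,X_k)=\tw(\Gamma(S_k,X_k))$ by the definition of semantic extension width. Next I would describe $\Gamma(S_k,X_k)$ explicitly using Definition~\ref{def:ew}: here $Y=\{y\}$, so $S_k[Y]$ is the single vertex $y$, which constitutes one connected component and is adjacent in $S_k$ to every $x_i$. Hence $\Gamma(S_k,X_k)$ keeps the original edges $\{x_i,y\}$ and adds an edge $\{x_i,x_j\}$ for every $i\neq j$; that is, $\Gamma(S_k,X_k)$ is the complete graph $K_{k+1}$ on $\{x_1,\dots,x_k,y\}$.

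Since $\tw(K_{k+1})=k$, this gives $\sew(S_k,X_k)=k$, and Theorem~\ref{thm:main_thm} then yields that the WL-dimension of $(S_k,X_k)$ is exactly $k$, completing the proof. There is essentially no obstacle: the one place requiring care is the appeal to counting minimality of the star query, which is what upgrades the easy inequality $\sew(S_k,X_k)\le\ctw(S_k,X_k)$ to an equality, but this is well documented in the literature.
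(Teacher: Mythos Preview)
Your proposal is correct and follows essentially the same approach as the paper: verify that $(S_k,X_k)$ is counting minimal, compute $\Gamma(S_k,X_k)=K_{k+1}$ so that $\sew(S_k,X_k)=\tw(K_{k+1})=k$, and invoke Theorem~\ref{thm:main_thm}. The only difference is that you spell out the verification of the hypotheses of Theorem~\ref{thm:main_thm} and the description of $\Gamma(S_k,X_k)$ in slightly more detail than the paper does.
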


We can now prove Corollary~\ref{cor:intro_domset}. 

\begin{corollary}[Corollary~\ref{cor:intro_domset}, restated]\label{cor:cor_ds}
    The WL-dimension of the function $G \mapsto |\Delta_k(G)|$ is $k$.
\end{corollary}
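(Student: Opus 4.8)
The plan is to reduce counting size-$k$ dominating sets of $G$ to counting answers of star queries in the \emph{complement} graph $\bar G$, and then transfer the known WL-dimension of star queries (Corollary~\ref{cor:cor_star}) through complementation. The conceptual starting point is a simple reformulation of non-domination: for any graph $G$ with vertex set $V$ and any $D\subseteq V$, a vertex $u$ is \emph{not} dominated by $D$ iff $u\notin D$ and $u$ has no $G$-neighbour in $D$, i.e.\ iff $u\neq v$ and $\{u,v\}\notin E(G)$ for every $v\in D$, i.e.\ iff $u$ is adjacent in $\bar G$ to every vertex of $D$. Hence $D$ fails to be a dominating set of $G$ iff the vertices of $D$ have a common neighbour in $\bar G$. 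Writing $r_j(H)$ for the number of $j$-element vertex subsets of a graph $H$ whose vertices have a common neighbour in $H$, and using $|V(G)|=|\Hom(K_1,G)|$, this gives
\[ |\Delta_k(G)| \;=\; \binom{|\Hom(K_1,G)|}{k} \;-\; r_k(\bar G). \]

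Next I would pass from subsets to ordered tuples, since star-query answers count tuples. A tuple $(v_1,\dots,v_j)\in V(H)^j$ has a common $H$-neighbour iff its underlying set does; grouping the tuples by the size $i$ of their underlying set (a fixed $i$-set is the image of exactly $i!\,S(j,i)$ surjective $j$-tuples, where $S(j,i)$ is the number of partitions of a $j$-set into $i$ nonempty parts) yields $|\Ans((S_j,X_j),H)|=\sum_{i=1}^{j} i!\,S(j,i)\,r_i(H)$. This is a lower-triangular linear system with nonzero diagonal $i!$, so it inverts over $\mathbb{Q}$: there are rationals $\alpha_1,\dots,\alpha_k$ depending only on $k$, with $\alpha_k\neq 0$, such that $r_k(H)=\sum_{j=1}^k\alpha_j\,|\Ans((S_j,X_j),H)|$. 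Combining with the previous display,
\[ |\Delta_k(G)| \;=\; \binom{|\Hom(K_1,G)|}{k} \;-\; \sum_{j=1}^k \alpha_j\,|\Ans((S_j,X_j),\bar G)|. \]
(Equivalently, $k!\,r_k(H)$ is the number of answers of the conjunctive query with disequalities $\exists y\colon\bigwedge_{i=1}^k E(x_i,y)\wedge\bigwedge_{i<i'}x_i\neq x_{i'}$, whose quantum-query normal form has $(S_k,X_k)$ as a constituent and only constituents among the $(S_j,X_j)$ with $j\le k$; one could then invoke Corollary~\ref{cor:quantum_WL} together with $\sew(S_j,X_j)=j$ instead.)

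For the two bounds I would use the classical fact that complementation preserves $m$-WL-equivalence for every $m$ (e.g.\ via the counting-logic characterisation of $\cong_m$, substituting $E(x,y)$ by $x\neq y\wedge\neg E(x,y)$; see \cite{CaiFI92}). \emph{Upper bound:} if $G\cong_k G'$ then $\bar G\cong_k\bar{G'}$ and $|\Hom(K_1,G)|=|\Hom(K_1,G')|$; since each $(S_j,X_j)$ with $j\le k$ has WL-dimension $j\le k$ (Corollary~\ref{cor:cor_star}), every term of the last display agrees on $G$ and $G'$, so $|\Delta_k(G)|=|\Delta_k(G')|$ and the WL-dimension is at most $k$. \emph{Lower bound:} since $(S_k,X_k)$ has WL-dimension exactly $k$, there are graphs $H\cong_{k-1}H'$ with $|\Ans((S_k,X_k),H)|\neq|\Ans((S_k,X_k),H')|$; set $G=\bar H$ and $G'=\bar{H'}$, so $G\cong_{k-1}G'$ and $\bar G=H$, $\bar{G'}=H'$. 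As $H\cong_{k-1}H'$ also gives $|\Hom(K_1,H)|=|\Hom(K_1,H')|$ and $|\Ans((S_j,X_j),H)|=|\Ans((S_j,X_j),H')|$ for all $j\le k-1$, all terms in $|\Delta_k(G)|-|\Delta_k(G')|$ cancel except the $j=k$ term, so $|\Delta_k(G)|-|\Delta_k(G')|=-\alpha_k\bigl(|\Ans((S_k,X_k),H)|-|\Ans((S_k,X_k),H')|\bigr)\neq 0$. Thus the WL-dimension is at least $k$, and together with the upper bound it equals $k$.

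I expect the only genuine step to be the complement reformulation "non-domination $=$ common neighbour in $\bar G$"; once it is in place the rest is routine bookkeeping (the Stirling-number inversion and the cancellation argument). The one ingredient not developed in the paper is the invariance of $m$-WL-equivalence under complementation, which is classical but should be cited or given a one-line justification.
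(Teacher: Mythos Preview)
Your proof is correct and follows essentially the same route as the paper: reformulate non-domination in~$G$ as ``common neighbour in~$\bar G$'', express the relevant count as a linear combination of star-query answers $|\Ans((S_j,X_j),\bar G)|$ with nonzero top coefficient, and transfer the WL-dimension through complementation.

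There are two minor presentational differences worth noting. First, the paper packages the linear combination as a quantum query (via inclusion--exclusion over the quotients $(S_k,X_k)/J$, which are again star queries) and invokes Corollary~\ref{cor:quantum_WL} once to read off the WL-dimension; you instead use the Stirling-number inversion and then carry out the cancellation for the lower bound by hand using only Corollary~\ref{cor:cor_star}. Your route is slightly more elementary in that it avoids the quantum-query machinery, at the price of doing the term-by-term cancellation explicitly. Second, for the invariance of $\cong_m$ under complementation, the paper cites Seppelt~\cite{Seppelt23}, whereas you appeal to the counting-logic characterisation; both are valid, and your one-line justification via substituting $E(x,y)$ by $x\neq y\wedge\neg E(x,y)$ is exactly the standard argument.
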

\begin{proof}
\newcommand{\inj}{\mathsf{Inj}}
We start with the lower bound. To this end, given a graph $G$, and a conjunctive query $(H,X)$, we set
\[\inj((H,X),G)=\{a \in \Ans((H,X),G) \mid a \text{ is injective} \}.\]
Let $I =\{(i,j) \in[k]^2\mid i<j \}$ and consider a subset $J\subseteq I$. The query $(S_k,X_k)/J$ is obtained by identifying $x_i$ and $x_j$ if and only if $(i,j)\in J$. Observe that $(S_k,X_k)/J \cong (S_\ell,X_\ell)$ for some $\ell \leq k$. By the principle of inclusion and exclusion, for each graph $G$,
\[|\inj((S_k,X_k),G)| = \sum_{J \subseteq I} (-1)^{|J|}\cdot  |\Ans((S_k,X_k)/J,G)| = \sum_{i=1}^k c_i \cdot |\Ans((S_i,X_i),G)|\,,\]
where $c_i = \{J \subseteq I \mid (S_k,X_k)/J \cong (S_i,X_i) \}$. Thus, $|\inj((S_k,X_k),G)|$ computes the number of answers to the quantum query with constituents $(S_i,X_i)$ and coefficients $c_i$. Moreover, $c_k=1$ since $(S_k,X_k)/J \cong (S_k,X_k)$ if and only if $J = \emptyset$.\footnote{To obtain a quantum query, we need to remove all terms $(S_i,X_i)$ with $c_i=0$. In fact, following the analysis in~\cite{CurticapeanDM17}, it can be shown that none of the $c_i$ is $0$. However, since we only need $c_k\neq 0$, we omit going into further details.} By Corollary~\ref{cor:quantum_WL} and the fact that $\sew(S_\ell,X_\ell)=\ell$ for all $\ell\in[k]$, the WL-dimension of  $G \mapsto |\inj((S_k,X_k),G)|$ is equal to $k$.

Next observe that $|\inj((S_k,X_k),G)|/k!$ is the number of $k$-vertex subsets $A$ of $G$ such that there is a vertex $y\in V(G)$ that is adjacent to all $a\in A$. Thus $\binom{|V(G)|}{k} - |\inj((S_k,X_k),G)|/k!$ is equal to the size of the set
\[  D_k(G) := \{ A \subseteq V(G) \mid |A|=k ~\wedge~ \forall y \in V(G) : \exists a\in A: \{a,y\} \notin E(G) \} . \]
Let $\overline{G}$ be the self-loop-free complement of $G$, that is, two \emph{distinct} vertices $u$ and $v$ in $V(\overline{G})=V(G)$ are adjacent in $\overline{G}$ if and only if they are not adjacent in $G$. Observe that $\{a,y\} \notin E(G)$ if and only if $a = y$ or $\{a,y\} \in E(\overline{G})$. Therefore $|D_k(G)|=|\Delta_k(\overline{G})|$.

We are now ready to prove that the WL-dimension of the function $G \mapsto |\Delta_k(G)|$ is at least $k$. Suppose for contradiction that its WL-dimension is $k'$ for some $1\leq k' <k$. Then,  for all $G$ and $G'$ with $G \cong_{k'} G'$,   $|\Delta_k(G)|=|\Delta_k(G')|$. However, we know that the WL-dimension of $G \mapsto |\inj((S_k,X_k),G)|$ is equal to $k$. Thus there are 
graphs $F$ and $F'$ with
$F \cong_{k'} F'$ and $|\inj((S_k,X_k),F)| \neq |\inj((S_k,X_k),F')|$.   
It is well known (see e.g.\ Seppelt~\cite{Seppelt23}) that $F \cong_{k'} F'$ implies $\overline{F} \cong_{k'} \overline{F'}$. Therefore $|\Delta_k(\overline{F})|=|\Delta_k(\overline{F'})|$. 
Let $K_1$ be the (treewidth~$0$) graph containing one isolated vertex. 
The number of homomomorphisms from~$K_1$ to~$F$
determines the number of vertices of $F$
so  $F \cong_{k'} F'$ implies $|V(F)|=|V(F')|$.
Let $n=|V(F)|=|V(F')|$. In summary, 
\[|\inj((S_k,X_k),F)| = k!\left(\binom{n}{k} - |\Delta_k(\overline{F})|\right) = k!\left(\binom{n}{k} - |\Delta_k(\overline{F'})|\right) = |\inj((S_k,X_k),F')|, \]
which contradicts the choice of $F$ and $F'$ and concludes the proof of the lower bound.

For the upper bound, we have to show that $F \cong_k F'$ implies $\Delta_k(F)=\Delta_k(F')$, which is an immediate consequence of our previous analysis. Since the WL-dimension of $G \mapsto |\inj((S_k,X_k),G)|$ is equal to $k$, we have 
\[|\Delta_k(F)| = \binom{n}{k} - |\inj((S_k,X_k),\overline{F})|/k! = \binom{n}{k} - |\inj((S_k,X_k),\overline{F'})|/k! = |\Delta_k(F')|.\]
This concludes the proof.
\end{proof}

\bibliographystyle{plainurl}
\bibliography{references.bib}

\end{document}